\let\oldbibliography\thebibliography
\renewcommand{\thebibliography}[1]{%
  \oldbibliography{#1}%
  \setlength{\itemsep}{0pt}%
}
\newtheorem{sideremark}{Remark}
\newtheorem{definition}         {Definition}[section]
\newtheorem{lemma}{Lemma}
\newtheorem{theorem}{Theorem}
\def\reals{{\rm I\kern-.17em R}}
\def\nats{{\rm I\kern-.17em N}}
\newcommand{\beq}{\begin{equation}}
\newcommand{\eeq}{\end{equation}}
\newcommand{\beqa}{\begin{eqnarray}}
\newcommand{\eeqa}{\end{eqnarray}}
\newcommand{\mathbi}[1]{\ensuremath \textbf{\em #1}}
\newcommand{\paren}[1]{\left(#1\right)}
\newcommand{\sqparen}[1]{\left[#1\right]}
\newcommand{\brparen}[1]{\left\{#1\right\}}
\newcommand{\field}[1]{\ensuremath{\mathbb{#1}}}
\newcommand{\abs}[1]{\left|#1\right|} % absolute value
\newcommand{\N}{\ensuremath{\field{N}}} % natural numbers
\newcommand{\R}{\ensuremath{\field{R}}} % real numbers
\newcommand{\Rp}{\ensuremath{\R_+}} % positive real numbers
\newcommand{\I}[1]{\ensuremath{\mathsf{1}_{\left\{#1\right\}}}} % indicator function
\newcommand{\Inb}[1]{\ensuremath{\mathsf{1}_{#1}}} % indicator function, no brackets
\newcommand{\ra}{\ensuremath{\rightarrow}} % abbreviation for right arrow
\newcommand{\PR}[1]{\ensuremath{\mathsf{Pr}\left\{#1\right\}}} % probability with braces
\newcommand{\PRP}[1]{\ensuremath{\mathsf{Pr}\left(#1\right)}} %\probability with parentheses
\newcommand{\EW}{\ensuremath{\mathsf{E}}} % Plain expectation without any parentheses and braces
\newcommand{\ES}[1]{\ensuremath{\mathsf{E}\left[#1 \right]}} %Expectation with square parentheses
\newcommand{\e}[1]{\ensuremath{{\rm e}^{#1}}} %Exponents of e
\newcommand{\LO}[1]{\ensuremath{o\paren{#1}}}
\renewcommand{\vec}[1]{\ensuremath{\boldsymbol{#1}}}
\newcommand{\logp}[1]{\ensuremath{\log\paren{#1}}}
\newcommand{\CRTPIL}{\ensuremath{\frac{h_i}{\lambda_N+\mu_Ng_i}}}%channel ratio in TPIL networks
\newcommand{\CRDTPIL}{\ensuremath{F_{\lambda_N,\mu_N}\paren{x}}}%channel ratio distribution in TPIL networks
\newcommand{\ICRDTPIL}[1]{\ensuremath{F^{-1}_{\lambda_N,\mu_N}\paren{1-#1}}}%inverse of channel ratio distribution function in TPIL networks
\newcommand{\PTPIL}{\ensuremath{P_{\rm DTPIL}\paren{h_i,g_i}}}%transmiison power of SU-i in TPIL network
\newcommand{\Xs}[2]{\ensuremath{X_N^\star\paren{#1,#2}}}%extreme order statisitic
\newcommand{\Xd}[2]{\ensuremath{X_N^\diamond\paren{#1,#2}}}%second largest order statisitic
\newcommand{\RTPIL}[1]{\ensuremath{R^\star_{\rm DTPIL}\paren{#1,N}}}% sum-rate in TPIL network
\newcommand{\Xt}{\ensuremath{\tilde{X}_N}}
\newcommand{\CRIL}{\ensuremath{\frac{h_i}{g_i}}}%channel ratio in TPIL networks
\newcommand{\Ys}{\ensuremath{Y_N^\star}}%extreme order statisitic
\newcommand{\Yt}{\ensuremath{\tilde{Y}_N}}
\newcommand{\Yd}{\ensuremath{Y_N^\diamond}}%second largest order statisitic
\newcommand{\PIL}{\ensuremath{P_{\rm DIL}\paren{h_i,g_i}}}%transmiison power of SU-i in IL network
\newcommand{\ICRDIL}[1]{\ensuremath{F^{-1}_{\frac{h}{g}}\paren{1-#1}}}%inverse of channel ratio distribution function in IL networks
\newcommand{\RIL}[1]{\ensuremath{R^\star_{\rm DIL}\paren{#1,N}}}% sum-rate in TPIL network
\newcommand{\pfgsetting}{\pgfplotsset{
width= 8cm, 
every axis/.append style={line width=1.2pt},
label style={font=\bf\scriptsize}, 
ylabel style={yshift=-0.8em},
xlabel={Number of Secondary Users},
ylabel={Throughput (bits/sec/Hz)},
title style={font=\bf\scriptsize}, 
tick label style={font=\scriptsize,/pgf/number format/1000 sep={} },
tick style={ line width=1.5pt},
legend style={font=\bf\tiny,cells={anchor=west},solid},
every mark/.append style={solid}
}}
\newcommand{\axissetting}{ xmin=0,xmax=1000,xtick={0 ,100,200,300,400,500,600,700,800,900,1000},grid=major, tick style={color=black, major tick length={0.10 cm}}, grid=none %grid style={line width= 0.75pt, densely dotted, color= black}
}
\begin{document}

\title{Power Control and Multiuser Diversity  for the Distributed Cognitive Uplink}

\author{Ehsan Nekouei, \IEEEmembership{Student Member, IEEE}, Hazer Inaltekin, \IEEEmembership{Member, IEEE} and Subhrakanti Dey, \IEEEmembership{Senior Member, IEEE} 
\thanks{E. Nekouei and S. Dey are with the Department of Electrical and Electronic Engineering, The University of Melbourne, VIC 3010, Australia. (e-mails: e.nekouei@pgrad.unimelb.edu.au and sdey@unimelb.edu.au) 

H. Inaltekin is with the Department of Electrical and Electronics Engineering, Antalya International University, Antalya, Turkey. (e-mail: hazeri@antalya.edu.tr).}}
\maketitle
\thispagestyle{empty}
\vspace{-1cm}
\begin{abstract}
This paper studies optimum power control and sum-rate scaling laws for the distributed cognitive uplink. %when transmission powers are optimally allocated to secondary users.  
%A collision channel model is considered to account for concurrent transmissions. 
%Two different network types are considered: distributed total power interference limited (DTPIL) networks and distributed interference limited (DIL) networks. 
It is first shown that the optimum distributed power control policy is in the form of a threshold based water-filling power control.  Each secondary user executes the derived power control policy in a distributed fashion by using local knowledge of its direct and interference channel gains such that the resulting aggregate (average) interference does not disrupt primary's communication.  %In DTPIL networks, transmission powers of SUs are limited by a constraint on the average total transmission power of the secondary network and a constraint on the average interference power of SUs at the primary base-station (PBS). In DIL networks, transmission powers of SUs are limited by the average interference power constraint. In both DTPIL and DIL networks, a transmission probability constraint is considered for each SU to limit the collision level at the SBS. 
Then, the tight sum-rate scaling laws are derived as a function of the number of secondary users $N$ under the optimum distributed power control policy.  The fading models considered to derive sum-rate scaling laws are general enough to include Rayleigh, Rician and Nakagami fading models as special cases.  When transmissions of secondary users are limited by both transmission and interference power constraints, it is shown that the secondary network sum-rate scales according to $\frac{1}{\e{}n_h}\log\logp{N}$, where $n_h$ is a parameter obtained from the distribution of direct channel power gains. For the case of transmissions limited only by interference constraints, on the other hand, the secondary network sum-rate scales according to $\frac{1}{\e{}\gamma_g}\logp{N}$, where $\gamma_g$ is a parameter obtained from the distribution of interference channel power gains.  These results indicate that the distributed cognitive uplink is able to achieve throughput scaling behavior similar to that of the centralized cognitive uplink up to a pre-log multiplier $\frac{1}{\e{}}$, whilst primary's quality-of-service requirements are met. The factor $\frac{1}{\e{}}$ can be interpreted as the cost of distributed implementation of the cognitive uplink.
\end{abstract}

\section{Introduction}
\subsection{Background and Motivation}
Cognitive radio technology has recently emerged as an aspirant solution for the problem of spectrum scarcity \cite{Goldsmith09}-\cite{Akyildiz06}. Unlike the traditional static command-and-control approach, it provides a more dynamic means for spectrum management and utilization. More specifically, cognitive radio protocols such as those in IEEE 802.22 allow the cognitive users, alternatively called secondary users (SUs), to dynamically share the underutilized frequency bands with primary users (PUs) both in time and space under various forms of primary quality-of-service (QoS) protections \cite{Mitola99b, Haykin05}. In practice, channel state information (CSI) is one of the main requisites for successful implementation of such dynamic cognitive radio protocols. However, its availability is often sidelined in most previous works \cite{Ghasemi07}-\cite{cogmud_alitajer10} by either assuming a centralized band manager or perfect instantaneous CSI feedback between primary and secondary networks. 

For example, both interference management and resource allocation tasks in cognitive radio networks heavily depend on the availibility of CSI at the secondary network. This requirement is especially more pronounced for multiuser cognitive radio networks. The assumption of the existence of a {\em centralized} entity having {\em global} knowledge of CSI may not be realistic in some certain multiuser cognitive communication scenarios, depending on the physical characteristics of wireless channels, infrastructure limitations, number of SUs and etc. In these cases, distributed utilization of CSI is the key for successful implementation of cognitive radio protocols.  To this end, the current paper explores the design of optimum distributed power control mechanisms for the cognitive uplink, allowing each SU to adjust its transmission power level based \emph{only}  on local knowledge of its CSI.  It also investigates multiuser diversity gains for the distributed cognitive uplink by deriving tight sum-rate capacity scaling laws under the optimum distributed power control mechanisms. 

In  the centralized uplink, the secondary base-station (SBS) is mainly responsible for the power control task {\em e.g.,} see \cite{RZhang09}-\cite{IH12}. That is, it first acquires global knowledge of direct (from SUs to the SBS) and interference  (from SUs to PUs) channel gains via a feedback mechanism, and then exploits this knowledge to obtain the optimum transmission power level for each SU by respecting primary QoS requirements.  Finally, the allocated transmission power levels are broadcasted to SUs by the SBS at each fading block. %This discussion makes it clear that the centralized power control for the cognitive uplink necessitates global knowledge of CSI at the SBS as well as availability of feedback mechanisms.
Although required for the centralized power control at the cognitive uplink per above discussion, the assumption of availability of direct and interference channel gains at the SBS within channel coherence time is often too restricting for practical cognitive multiple access networks consisting of large numbers of SUs.\footnote{The terms \emph{cognitive uplink} and \emph{cognitive multiple access network} are used interchangeably throughout the paper.}  On the other hand, unlike the centralized operation, SUs only need to have local access to their direct and interference channel gains in the distributed operating mode.  Further, it is easy for each SU to obtain local knowledge of its direct and interference channel gains using pilot signals transmitted periodically by the SBS and  primary base-station (PBS). These observations motivate the current paper, and lead to the following research questions of interest here: $(i)$ what is the structure of optimum distributed power control mechanisms for the cognitive uplink?, %in which each SU only exploits the local knowledge of its direct and interference channel gains to perform the power control task, 
and $(ii)$ what are the fundamental throughput scaling laws of such decentralized cognitive multiple access networks under optimum power control subject to various forms of power and interference constraints?  

%In practice, availability of fading channel gains at the SBS depends on the inherent physical characteristics of wireless channels such as channel coherence time and available bandwidth.  Hence, the assumption of availability of direct and interference channel gains at the SBS may not be realistic for practical cognitive multiple access networks consisting of large numbers of SUs. 

%Distributed scheduling and power allocation policies and their corresponding capacity limits have been extensively studied in the context of primary multiple access channels. However, these results can not be applied to the cognitive multiple access channels since they do not cater for interference of the secondary network at primary receiver. Design of distributed scheduling and power allocation policies for cognitive multiple access channels is a challenging task, compared to primary multiple access channels, since the cognitive radio network, as a whole, is expected to respect the primary's QoS requirement while the power allocation and scheduling tasks are performed locally and independently by each SU using the local knowledge of its STSB and STPB channel gains.

This paper provides important insights into these questions by studying the optimum distributed power control mechanisms for the cognitive uplink that enable SUs to accomplish the power control task in a distributed fashion while the interference at the primary network is successfully regulated.  More importantly, we evaluate the performance of our distributed power control mechanisms, in terms of the secondary network sum-rate, as the number of SUs becomes large. Our results signify the fact that distributed cognitive multiple access networks are capable of achieving throughput scaling behavior similar to that of centralized cognitive multiple access networks.  

\subsection{Contributions}
%In this paper, we study the throughput scaling behavior of the distributed uplink in which the power control task is performed by each SU \emph{only} using the knowledge of its direct and interference channel gains without any cooperation from the SBS or other SUs. A collision channel model is considered to model concurrent transmission of SUs, \emph{i.e.,} if more than one SU transmit concurrently, a collision happens, the throughput at the SBS is set to zero. 

This paper has two main contributions to the cognitive radio literature. First, we derive the structure of the optimum distributed power control policy, maximizing the secondary network sum-rate for two network types: \emph{ (i)} distributed total power and interference limited (DTPIL) networks and \emph{(ii)} distributed interference limited (DIL) networks. In DTPIL networks, transmission powers of SUs are limited by a constraint on the average total transmission power of SUs and a constraint on the average interference power at a PBS. To confine the collision level, a transmission probability constraint is also considered for each SU. In DIL networks, transmission powers of SUs are limited by a constraint on the average interference power at the PBS as well as transmission probability constraints. For each network type, we show that the optimum distributed power control policy is in the form of a {\em threshold} based water-filling power control with changing water levels. %a channel aware variation of aloha protocol is considered to perform the scheduling task. More precisely, the scheduling task for each SU is performed using a binary decision rule that is based on the quality of joint power and interference channel states of the corresponding SU. The scheduling policy also guarantees that transmission probability is equal to the predetermined value of $p_N$ for all SUs. Once a SU decides to transmit, it employs a water filling-type of power allocation.  

Secondly, we study the sum-rate scaling behavior of DTPIL and DIL networks, under the optimum distributed power control policy, when distributions of direct and interference channel gains belong to a fairly large class of distribution functions called class-$\mathcal{C}$ distributions. In DTPIL networks, it is shown that the secondary network throughput scales according to $\frac{1}{\e{}n_h}\log\logp{N}$ when the transmission probability is set to $\frac{1}{N}$ for all SUs. Here, $N$ is the number of SUs, and $n_h$ is a parameter obtained from the distribution of direct channel power gains. The choice of transmission probability adds an extra dimension to the optimization problems studied in this paper. To this end, we show that although $\frac{1}{N}$ may not be the optimum transmission probability selection for  the secondary network sum-rate maximization for finite values of $N$, it is {\em asymptotically} optimum in the sense that the same throughput scaling behavior holds even under the optimum transmission probability selection. 

Analogous results are also obtained for DIL networks. In particular, it is shown that the secondary network sum-rate scales according to $\frac{1}{\e{}\gamma_g}\logp{N}$ when the transmission probability is set to $\frac{1}{N}$ for all SUs, and the optimum distributed power control policy is employed. $\gamma_g$ is a parameter obtained from the distribution of interference channel power gains. It is also shown that $\frac{1}{N}$ is the {\em asymptotically} optimum transmission probability selection for DIL networks, too.  From an engineering point of view, these results indicate that the optimum distributed power control at the cognitive uplink is capable of achieving aggregate data rates similar to those achieved through a centralized scheduler up to a pre-log multiplier $\frac{1}{\e{}}$ \cite{NIDSubmitted}. Here, $\frac{1}{\e{}}$ has the economic interpretation of the cost of avoiding feedback signals between primary and secondary networks. Our main results are summarized in Table \ref{Table: Main Results}.
\subsection{A Note on the Notation and Organization of the Paper}
In what follows, a wireless channel is said to be a Rayleigh fading channel if the channel magnitude gain is Rayleigh distributed, or equivalently the channel power gain is exponentially distributed. It is said to be Rician-$K$ fading channel if the channel magnitude gain is Rician distributed with a Rician factor $K$. By a Nakagami-$m$ distributed wireless fading channel, we mean the channel magnitude gain is Nakagami-$m$ distributed, or equivalently the channel power gain is Gamma distributed. Finally, a wireless fading channel is said to be Weibull-$c$ distributed if the channel magnitude gain is Weibull distributed with a Weibull parameter $c$.\footnote{The definition of the $c$ parameter for Weibull fading channels is adapted from \cite{Simon-Alouini05}.} Interested readers are referred to \cite{Simon-Alouini05}, \cite{Stuber96} and \cite{Bertoni88} for more details regarding fading distributions.
   
The rest of the paper is organized as follows. In the next section, we discuss relevant literature. Section \ref{Sec: SM} describes our system model and modeling assumptions. Section \ref{Sec: R&D} derives the optimum distributed power control policies and their corresponding sum-rate scaling laws for DTPIL and DIL networks. Section \ref{Sec: NR} presents our numerical studies. Section \ref{sec: conclusion} concludes the paper.
  \begin{table}[!t]
\begin{minipage}{\textwidth}
\renewcommand{\arraystretch}{1.3}
\caption{Throughput Scaling behavior of distributed cognitive radio networks}
\centering
\begin{tabular}{lll}
\toprule
\multicolumn{1}{c}{\multirow{2}{*}{ Network Model}} & \multicolumn{2}{c}{ Transmission probability}\\
 \cmidrule(r){2-3}

& \multicolumn{1}{c}{$p_N=\frac{1}{N}$} & \multicolumn{1}{c}{$p^\star_N$\footnote{$p^\star_N$ is the optimum transmission probability.}}  \\
\midrule
 Distributed Total Power And Interference Limited & $\lim\limits_{N\ra\infty}\frac{R_N\footnote{$R_N$ is the secondary network sum-rate under the optimum distributed power control policy.}}{\log\logp{N}}=\frac{1}{\e{}n_h}\footnote{ $n_h$ is parameter determined from the asymptotic tail behavior of the distribution of direct channel power gains.}$ & $\lim\limits_{N\ra\infty}\frac{R_N}{\log\logp{N}}=\frac{1}{\e{}n_h}$   \\
\midrule
 Distributed Interference Limited & $\lim\limits_{N\ra\infty}\frac{R_N}{\logp{N}}=\frac{1}{\e{}\gamma_g}\footnote{$\gamma_g$ is a parameter determined from the behavior of the distribution of interference channel power gains around the origin.}$ & $\lim\limits_{N\ra\infty}\frac{R_N}{\logp{N}}=\frac{1}{\e{}\gamma_g}$ \\
\bottomrule
\label{Table: Main Results}
\end{tabular}
\end{minipage}
\end{table}
\section{Related Work}\label{Sec: RW}
This section briefly reviews the papers that are most relevant to ours.  In this paper, we are mainly motivated by exploiting distributed techniques for optimum resource/power allocation in the cognitive uplink and the corresponding sum-rate capacity scaling via multiuser diversity. 

Optimum allocation of transmission powers in a cognitive radio setup has recently been investigated in \cite{Ghasemi07}-\cite{IH12}.  In \cite{Ghasemi07}, Ghasemi and Sousa showed that the optimum power control maximizing the ergodic capacity of a point-to-point cognitive radio link under average interference power constraint is in the form of a water-filling power control policy with changing water levels.  In \cite{RZhang09}, this result was extended to the cognitive uplink.  Particularly, they showed that, under average transmission power and average interference power constraints, the optimum power allocation policy for a cognitive uplink is in the form of an opportunistic water-filling power allocation policy. That is, the SBS schedules the SU with the best joint direct and interference channel state, and the scheduled SU employs a water-filling power allocation policy for its transmission.  

Similar results have also been obtained by considering  total power and partial CSI constraints in \cite{NID12} and \cite{NIDSubmitted}. %, the authors studied optimum power allocation problem in the cognitive uplink under partial knowledge of interference channel gains at the SBS, \emph{i.e.,} the SBS only has access to interference channel gains for a subset of SUs. In this case, they showed that it is optimal to schedule the SU having the best joint direct and interference channel state among the set of SUs which their interference channel gains are known at the SBS.
 In \cite{IH12}, Inaltekin and Hanly established the binary structure of the optimum power control for the cognitive uplink operating under interference limitations without successive interference cancellation, \emph{i.e.}, see Section VI of \cite{IH12}.  They showed that the set of transmitting  SUs always corresponds to the ones having better joint channel states.  Although the single-user decoding assumption in \cite{IH12} simplifies the decoder, it complicates the power optimization problem.  The resulting optimization problem, in contrast to the one in \cite{Ghasemi07}-\cite{NIDSubmitted}, is no longer convex. 
 
This paper differs from above previous work in two important aspects.  Firstly, we focus on the distributed cognitive uplink in this paper, whereas \cite{Ghasemi07}-\cite{IH12} analyzed the centralized power control with perfect or partial CSI at the SBS.  The distributed operation requires contention control by constraining channel access probabilities, which in turn makes the studied power optimization problem here non-convex.  Secondly, similar to these previous work, our analysis in this paper starts with the consideration of optimum power control policies. However, different from them, we also investigate multiuser diversity gains in the distributed cognitive uplink as a function of the number of SUs. 

Multiuser diversity gains for the centralized cognitive radio networks with global knowledge of CSI at the SBS have also been studied in the literature extensively, {\em e.g.,} see \cite{NID12}, \cite{cogmud_twban09}-\cite{Yang12-IT}, under various types of constraints on the transmission powers of SUs. In \cite{cogmud_twban09}, the authors established logarithmic and double-logarithmic throughput scaling behavior of the cognitive uplink for Rayleigh fading channels under joint peak transmission and interference power constraints. These results were extended to cognitive multiple access, cognitive broadcast and cognitive parallel access channels in \cite{cogmid_zhang10}.  The authors in \cite{NID12}, different from \cite{cogmud_twban09} and \cite{cogmid_zhang10}, considered average power limitations (both transmission and interference), and obtained parallel ergodic sum-rate scaling results for cognitive multiple access networks under optimum power control. 

%This paper focuses on the ergodic sum-rate as a measure of system performance.  Indeed, the ergodic sum-rate under average transmission and interference power limitations is a more appropriate metric to use when the delay time scale of the data traffic is larger than the time scale of channel variations.  From a technical point of view, this setup necessitates more advanced techniques for deriving optimum resource allocation algorithms, and also for estimating the tails of joint channel states.
 The main point of difference between this paper and above previous work is the utilization of more practical distributed approaches for the cognitive uplink here.  Specifically, different from them, SUs in our setup independently decide to transmit (with power control) based on local knowledge of their CSI. This provides a more practical framework to study the multiuser diversity gain in the cognitive uplink, but at the expense of a more complicated optimum power control analysis ({\em e.g.,} see Appendix \ref{App: OPA-TPIL}) and the corresponding estimates on the tails of joint channel states ({\em e.g.,} see Appendices \ref{app: TPIL} and \ref{app: IL}).
%logarithmic and double-logarithmic throughput scaling behavior for interference limited and total power interference limited cognitive multiple access networks, respectively, when transmission powers of SUs are optimally allocated and full CSI is available at the SBS. %The results of \cite{cogmud_twban09,cogmid_zhang10,NID12} were established for specific fading distributions. 
%In \cite{NIDSubmitted}, the authors introduced a more general class of distributions called class-$\mathcal{C}$ distribution functions that include the most common fading models such as Rayleigh, Rician and Nakagami. They studied throughput scaling behavior of cognitive multiple access networks, under both full and partial CSI at the SBS, when distributions of direct and interference channel gains are arbitrarily selected from class-$\mathcal{C}$ distribution functions. In \cite{cogmud_twban09,cogmid_zhang10,NID12} and \cite{NIDSubmitted}, the scheduling and power control tasks are performed by the SBS. 

In \cite{Yang12-TW},  the scheduling gain in a cognitive uplink was considered for a hybrid scheduling policy under peak transmission and interference power constraints. All SUs transmit with the same fixed power level. Under this setup, it was shown that the secondary network throughput scales logarithmically (as a function of the number of SUs) with a pre-log factor depending on the number of PUs. Similar results were extended to cognitive radio networks with multiple antennas at the SBS and PBS in \cite{Yang12-IT}.  They showed that the secondary network throughput scales logarithmically with a pre-log factor depending on the operating modes ({\em i.e,} multiple access versus broadcast) and the number of antennas at the SBS and PBS.

Other related work also includes secondary network capacity scaling in a multi-band setup such as \cite{cogmudmsd_10} and \cite{cogmud_alitajer10}. In \cite{cogmudmsd_10}, Wang {\em et al.} studied the multiuser and multi-spectrum diversity gains for a cognitive broadcast network sharing multiple orthogonal frequency bands with a primary network.  Assuming Rayleigh fading channels, they analytically derived capacity expressions for the secondary network when the transmission power at each band is limited by a constraint on the peak interference power at the primary network.  For a similar setup in \cite{cogmudmsd_10}, the authors in \cite{cogmud_alitajer10} considered $N$ secondary transmitter-receiver pairs sharing $M$ frequency bands with a primary network. Under the optimum matching of SUs with primary frequency bands, they derived a double-logarithmic scaling law for the secondary network capacity for Rayleigh fading channels. They also considered a contention-free distributed scheduling algorithm in which SUs decide to transmit (without any power control) if their received signal-to-interference-plus-noise-ratio in a frequency band is greater than a threshold level. %Once a SU transmits in a frequency band, other SUs do not access that band, and the scheduled SU transmits  using a fixed power level without providing any protection for primary's QoS.  

Unlike \cite{Yang12-TW}-\cite{cogmud_alitajer10}, this paper considers general fading models including Rayleigh fading as a special case ({\em i.e.,} see Table \ref{F1}).  Further, all sum-rate scaling laws are derived for the contention-limited distributed cognitive uplink under optimum allocation of transmission powers to SUs, rather than assuming fixed transmission power levels as in \cite{Yang12-TW}-\cite{cogmud_alitajer10}.  The distributed power control mechanisms are designed as such they provide stringent QoS guarantees for the primary network under a collision channel model.  Hence, some parts of our analysis in this paper are expected to find greater applicability to extend multiuser diversity results derived for multi-band and multi-antenna networks in \cite{Yang12-TW}-\cite{cogmud_alitajer10} to fading models beyond Rayleigh fading and to more practical distributed communication scenarios with optimum resource allocation.   

Finally, it is important to note that multiuser diversity gains in primary multiple access networks were also studied in the literature, {\em e.g.,} see \cite{Berry06}. However, these results are not applicable to the cognitive uplink as they do not account for the impact of SUs' transmissions on the primary's QoS. What is needed in a cognitive setup is a more advanced distributed power management mechanism that can harvest multiuser diversity gains in both direct and interference channels simultaneously, whilst respecting primary's QoS requirements. This often results in solving non-convex optimization problems as in Appendix \ref{App: OPA-TPIL}, and using more complicated techniques to obtain tail estimates of joint channel states as in Appendices \ref{app: TPIL} and \ref{app: IL}.  

%Different from \cite{Berry06}, operation of the distributed cognitive uplink is limited by both transmission power constraints on the secondary network and interference power protections on the primary network. The distributed cognitive uplink requires a distributed interference management mechanism which is simultaneously able to harvest multiuser diversity gains in a distributed way. This idea is realized through solving non-convex optimization problems using complicated techniques (e.g., see Appendix A) which results in optimum power control and interference management mechanisms in this paper.  %We note that our results are different from \cite{Berry06} in that, we derive the structure of optimal distributed power control policy  these results are not applicable to the cognitive uplink as they do not take the impact of secondary network on the primary's QoS into account.
 
 \section{System Model}\label{Sec: SM}
We consider a cognitive uplink in which $N$ SUs communicate with an SBS and simultaneously cause interference to a PBS as depicted in Fig. \ref{F1}. Let $h_i$ and $g_i$ represent the $i$th direct and interference channel power gains, respectively. We consider the classical ergodic block fading model \cite{Tse} to model the statistical variations of all direct and interference channel gains. $\left\{h_i\right\}_{i=1}^N$ and $\left\{g_i\right\}_{i=1}^N$ are assumed to be collections of i.i.d. random variables distributed according to distribution functions $F_h\paren{x}$ and $F_g\paren{x}$, respectively. The random vectors ${\mathbi h}=\sqparen{h_1,h_2,\ldots,h_N}^\top$ and ${\mathbi g}= \sqparen{g_1,g_2,\ldots,g_N}^\top$ are assumed to be independent from each other.  We assume that each SU has access to its direct and interference channel gains by means of pilot training signals periodically transmitted by the SBS and PBS, {\em e.g.,} see \cite{Berry06, Boche07}.   
\begin{definition}\label{Def1}
We say that the cumulative distribution function (CDF) of a random variable $X$, denoted by $F_X\paren{x}$, belongs to the class $\cal C$-distributions if it satisfies the following properties:
\begin{itemize}
\item $F_X\paren{x}$ is continuous.
\item $F_X(x)$ has a positive support, {\em i.e.,} $F_X(x)=0$ for $x \leq 0$.
\item $F_X(x)$ is strictly increasing, {\em i.e.,} $F_X(x_1)<F_X(x_2)$ for $0<x_1<x_2$.
\item The tail of $F_X(x)$ decays to zero \emph{double exponentially}, {\em i.e.,} there exist constants $\alpha>0$, $\beta>0$, $n>0$, $l \in \R$ and a slowly varying function $H(x)$ satisfying $H(x) = \LO{x^{n}}$ as $x\ra\infty$ such that\footnote{By $p(x)=\LO{q(x)}$, we mean that $p(x)$ and $q(x)$ are two positive functions such that $\lim_{x \ra \infty}\frac{p(x)}{q(x)} = 0$.}   
  \begin{eqnarray}\label{Eq: tail-condition-1}
 \lim_{x\ra\infty}\frac{1-F_X(x)}{\alpha x^{l}\e{\paren{-\beta x^{n}+H(x)}}}=1.\nonumber
 \end{eqnarray}
\item $F_X(x)$ varies \emph{regularly} around the origin, {\em i.e.,} there exist constants $\eta>0$ and $\gamma>0$ such that
  \begin{eqnarray}\label{Eq: tail-condition-2}
 \lim_{x\downarrow0}\frac{F_X(x)}{\eta x^{\gamma}}=1.\nonumber
 \end{eqnarray}       
\end{itemize}
  \end{definition}

We assume that the CDFs of all fading power gains in this paper belong to the class $\cal C$-distributions. Table \ref{Table: FP} illustrates the parameters characterizing the behavior of the distribution of fading power gains around zero and infinity for the commonly used fading models in the literature. To avoid any confusion, these parameters are represented by subscript $h$ for direct channel gains and by subscript $g$ for interference channel gains in the rest of paper.

\begin{figure}[!t]
\centering{\includegraphics[scale=0.5]{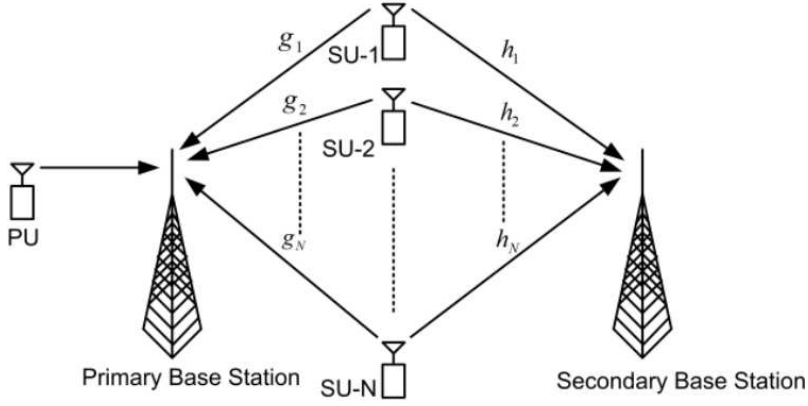}}
\caption{$N$ SUs forming a cognitive uplink to the SBS and interfering with signal reception at the PBS.} \label{F1}
\end{figure}

\begin{table}[!t]
\begin{minipage}{\textwidth}
\renewcommand{\arraystretch}{1.3}
\caption{Common fading channel models and their parameters}
\centering
\begin{tabular}{cccccccc}
\toprule
\multicolumn{1}{c}{\multirow{2}{*}{ Channel Model}}  & \multicolumn{7}{c}{ Parameters }\\
 \cmidrule(r){2-8}
&  $\alpha$ & $l$& $\beta$ & $n$ & $H(x)$ & $\eta$ & $\gamma$  \\
\midrule
Rayleigh & 1&  $0$ & 1 & 1 & 0 & 1&1 \\
\midrule
Rician-$K$& $\frac{1}{2\sqrt{\pi}\e{K}\sqrt[4]{K\paren{K+1}}}$ & $-\frac{1}{4}$ & $K+1$ &1 & $2\sqrt{K\paren{K+1}x}$& $\frac{K+1}{\e{K}}$ & 1\\
\midrule
Nakagami-$m$ & $\frac{m^{m-1}}{\Gamma(m)}$ & $m-1$ & $m$ & 1 & 0& $\frac{m^{m-1}}{\Gamma(m)}$& $m$ \\
\midrule
Weibull-$c$ & 1& 0 & $\Gamma^\frac{c}{2}\paren{1+\frac{2}{c}}$ & $\frac{c}{2}$ & $0$ & $\Gamma^\frac{c}{2}\paren{1+\frac{2}{c}}$ & $\frac{c}{2}$ \\\bottomrule
\label{Table: FP}
\end{tabular}
\end{minipage}
\end{table}

Each SU exploits knowledge of its direct and interference channel gains to \emph{locally} perform the task of power allocation, \emph{independent} of other SUs without any feedback from the SBS. A collision channel model is assumed for the resolution of concurrent transmissions from SUs at the SBS. That is, if more than one SUs transmit concurrently,  data transmissions from all of them collide, and the resulting throughput at the SBS is set to zero. In the next section, we derive the structure of the optimum distributed power control policy maximizing the secondary network sum-rate under the aforementioned assumptions for two different network types: \emph{(i)} distributed total power and interference limited (DTPIL) networks, and \emph{(ii)} distributed interference limited (DIL) networks.  After obtaining the optimum distributed power control policy, we also derive throughput scaling laws for these network types when each SU controls its transmission power optimally.  
%%%%%%%%%%%%%%%%%%%%%%%%%%%%%%%%%%%%%%%%%%%%%%%%%%%%%%%%%%%%%%%%%%%%%%%%%%%%%%%%%%%%%%%%%%%%%%%%%%%%%%%%
\section{The Structure of the Optimum Distributed Power Control Policy and Throughput Scaling Laws}\label{Sec: R&D}
In this section, we will first present and solve the sum-rate maximization problems in DTPIL and DIL networks. Then, each problem will be followed by the corresponding throughput scaling results along with detailed insights into the observed throughput scaling behavior. All proofs are relegated to appendices for the sake of paper fluency. We start our discussions by formulating the sum-rate maximization problem for DTPIL networks.
\subsection{Optimum Power Control and Throughput Scaling in DTPIL Networks}
In DTPIL networks, transmission powers of SUs are limited by an average total transmission power constraint and a constraint on the average total interference power of SUs at the PBS. Transmission probabilities of SUs are also constrained to avoid excessive collisions. We define the power allocation policy in DTPIL networks, $P_{\rm DTPIL}\paren{\cdot,\cdot}$, as a mapping from $\R_+^2$ to $\R_+$, where $\PTPIL$ represents the transmission power of the $i$th SU at the joint channel state $\paren{h_i, g_i}$. The power allocation policy $P_{\rm DTPIL}\paren{\cdot,\cdot}$ is  designed such that the transmission probability is equal to $p_N$, $p_N \in (0, 1)$, for all SUs, \emph{i.e.,} $\PR{\PTPIL>0} = p_N$ for $i\in\left\{1,\cdots,N\right\}$.  Here, $p_N$ can be considered as a design degree-of-freedom helping us to keep the collision rate below some certain level. Under these modeling assumptions, the secondary network sum-rate for a given power control policy $P_{\rm DTPIL}$, $R_{\rm DTPIL}\paren{p_N,N,P_{\rm DTPIL}}$, can be expressed as 
\begin{eqnarray}\label{Eq: sum-rate}
 R_{\rm DTPIL}\paren{p_N,N,P_{\rm DTPIL}}&=&\ES{\sum_{i=1}^N\logp{1+h_i\PTPIL}\prod_{j\neq i}{\I{P_{\rm DTPIL}\paren{h_j,g_j}=0}}}\nonumber\\
&=&N\paren{1-p_N}^{N-1}\ES{\logp{1+hP_{\rm DTPIL}\paren{h,g}}},\nonumber
\end{eqnarray}
where $h$ and $g$ are two independent generic random variables distributed according to $F_h\paren{x}$ and $F_g\paren{x}$, respectively. Similarly,  the average total transmission power and the average interference power at the PBS can be written as 
\begin{eqnarray}\label{Eq: Power}
\ES{\sum_{i=1}^N\PTPIL}=N\ES{P_{\rm DTPIL}\paren{h,g}}\nonumber
\end{eqnarray}
and
\begin{eqnarray}\label{Eq: Interference}
\ES{\sum_{i=1}^Ng_i\PTPIL}=N\ES{gP_{\rm DTPIL}\paren{h,g}}\nonumber
\end{eqnarray}
respectively. In DTPIL networks, transmission powers of SUs are allocated according to the solution of the following functional optimization problem: 
\begin{eqnarray}\label{OP}
\begin{array}{ll}
\underset{P_{\rm DTPIL}\paren{h,g} \geq 0}{\mbox{maximize}} & R_{\rm DTPIL}\paren{p_N,N,P_{\rm DTPIL}} \\
\mbox{subject to} & N\EW_{h,g}\sqparen{P_{\rm DTPIL}\paren{h,g} } \leq P_{\rm ave} \\
			    & N\EW_{h,g} \sqparen{gP_{\rm DTPIL}\paren{h,g}} \leq Q_{\rm ave}\\
			    &\PR{P_{\rm DTPIL}\paren{h,g}>0}=p_N
\end{array}. 
\end{eqnarray}

The power optimization problem in \eqref{OP} is not necessarily a convex programming due to the transmission probability constraint. However, in the next theorem, we show that the optimum power control policy solving \eqref{OP} is in the form of a threshold based water-filling power control when the number of SUs is large enough. 
\begin{theorem}\label{Theo: OPA-TPIL}
Let $P^\star_{\rm DTPIL}\paren{h,g}$ be the solution of \eqref{OP}. Then, for $p_N=\frac{1}{N}$ and $N$ large enough, we have 
\begin{eqnarray}\label{Eq: OPA-TPIL}
P_{\rm DTPIL}^\star\paren{h,g}=\paren{\frac{1}{\lambda_N+\mu_N g}-\frac{1}{h}}^+\I{\frac{h}{\lambda_N+\mu_N g}>\ICRDTPIL{\frac{1}{N}}},
\end{eqnarray}
where $\lambda_N$ and $\mu_N$ are power control parameters adjusted such that the average total transmission power and the average interference power constraints in \eqref{OP} are met, and $F^{-1}_{\lambda_N,\mu_N}\paren{x}$ is the functional inverse of the CDF of $\CRTPIL$, \emph{i.e.,} $\CRDTPIL$.
\end{theorem}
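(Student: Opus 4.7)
The plan is to decompose the optimization over $P_{\rm DTPIL}\paren{h,g}$ into two nested subproblems that isolate the source of non-convexity. Since the factor $N\paren{1-p_N}^{N-1}$ in the objective does not depend on $P_{\rm DTPIL}$, it is enough to maximize $\EW\sqparen{\logp{1+hP_{\rm DTPIL}}}$ over the three constraints in \eqref{OP}. Any feasible policy decomposes as $P_{\rm DTPIL}=P_0\cdot\Inb{S}$, where $S=\brparen{(h,g):P_{\rm DTPIL}(h,g)>0}$ is the transmission set with $\PR{S}=p_N$ and $P_0>0$ on $S$. I would treat the choice of $S$ as an outer combinatorial problem and the choice of $P_0$ on $S$ as an inner convex problem; this peels the non-convex support-measure constraint off of the otherwise concave power allocation.

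For fixed $S$, the inner problem of maximizing $\EW\sqparen{\Inb{S}\logp{1+hP_0}}$ subject to $N\EW\sqparen{\Inb{S}P_0}\leq P_{\rm ave}$ and $N\EW\sqparen{\Inb{S}gP_0}\leq Q_{\rm ave}$ is a concave maximization with linear constraints. Introducing Lagrange multipliers $\lambda_N,\mu_N\geq 0$ for the two power constraints and maximizing the integrand pointwise over $P_0\geq 0$ yields the water-filling form $P_0^\star(h,g)=\paren{\tfrac{1}{\lambda_N+\mu_N g}-\tfrac{1}{h}}^+$. Any point of $S$ with $h\leq \lambda_N+\mu_N g$ contributes zero power and zero rate but still consumes support mass, so without loss of optimality $S\subseteq \brparen{h>\lambda_N+\mu_N g}$.

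Substituting $P_0^\star$ back into the Lagrangian yields a per-point value
\[
\phi(h,g)=\logp{\tfrac{h}{\lambda_N+\mu_N g}}-1+\tfrac{\lambda_N+\mu_N g}{h},
\]
which depends on $(h,g)$ only through the ratio $t\defeq\tfrac{h}{\lambda_N+\mu_N g}$. A one-line derivative check, $\tfrac{d}{dt}\paren{\log t-1+\tfrac{1}{t}}=\tfrac{t-1}{t^2}$, shows that $\phi$ is strictly increasing in $t$ on $(1,\infty)$ with $\phi(1)=0$. The outer problem then becomes: choose $S$ with $\PR{S}=p_N$ to maximize $\EW\sqparen{\phi\cdot\Inb{S}}$. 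By a Neyman--Pearson-type argument, the solution is the top-$p_N$ quantile set $S^\star=\brparen{t>\tau_N}$, and continuity of $F_h$ and $F_g$ (hence of the CDF $F_{\lambda_N,\mu_N}$ of $t$) lets us pick $\tau_N$ exactly so that $\tau_N=F^{-1}_{\lambda_N,\mu_N}(1-1/N)$. The multipliers $\lambda_N,\mu_N$ are finally tuned to saturate the two power constraints by a standard monotonicity/continuity argument on the induced average power and average interference as functions of $(\lambda_N,\mu_N)$.

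The main obstacle I anticipate is justifying the ``$N$ large enough'' hypothesis. For small $N$, the target quantile $F^{-1}_{\lambda_N,\mu_N}(1-1/N)$ may fall at or below $1$, in which case the top-$p_N$ set would contain points where water-filling returns zero, and the strict equality $\PR{P_{\rm DTPIL}>0}=1/N$ becomes incompatible with a strictly positive profile on $S^\star$. As $N\ra\infty$, however, $1-1/N\ra 1$ while $F_{\lambda_N,\mu_N}(1)$ stays bounded away from $1$ under any fixed choice of multipliers, so $\tau_N>1$ for all sufficiently large $N$ and the decomposition becomes consistent. Verifying this threshold condition, together with the existence of multipliers that saturate both power constraints simultaneously, is the main technical point requiring care.
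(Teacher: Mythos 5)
Your two-stage decomposition (outer choice of the transmission set $S$, inner water-filling on $S$) is a genuinely different route from the paper's, which instead relaxes the indicator to a variable $W\paren{h,g}\in\sqparen{0,1}$, applies the change of variables $\Pi=\tilde{P}W$ so that the objective $W\log\paren{1+h\Pi/W}$ becomes jointly concave (a perspective function), and reads the threshold water-filling structure off the KKT conditions of the resulting convex program, noting that the optimal $W$ is $\brparen{0,1}$-valued almost surely because the stationarity equality for $W$ holds with probability zero under continuous fading. Your route reaches the same answer, and your closing discussion of why ``$N$ large enough'' is needed --- namely that $F^{-1}_{\lambda_N,\mu_N}\paren{1-1/N}$ must exceed $1$ so that the top-quantile set sits inside the region where water-filling is strictly positive --- matches the paper's final feasibility step exactly.

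However, there is a genuine gap in your outer stage. The value of the inner problem for a fixed $S$ is, by strong duality, $V\paren{S}=\min_{\lambda,\mu\geq 0}\brparen{\EW\sqparen{\Inb{S}\,\phi_{\lambda,\mu}}+\lambda P_{\rm ave}/N+\mu Q_{\rm ave}/N}$, where the minimizing multipliers, and hence the ranking function $\phi$, depend on $S$. Your reduction of the outer problem to ``maximize $\EW\sqparen{\phi\cdot\Inb{S}}$ over $\PR{S}=p_N$ for a \emph{fixed} $\phi$'' presupposes that one and the same pair $\paren{\lambda_N,\mu_N}$ is optimal for every candidate $S$, which is circular: the Neyman--Pearson step selects $S$ using $\phi$, but $\phi$ is only determined once $S$ is selected. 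To close this you need either (i) a minimax interchange of $\max_S$ and $\min_{\lambda,\mu}$, which is not automatic because the collection of indicator functions is not convex --- convexifying it is precisely what the paper's $W\in\sqparen{0,1}$ relaxation accomplishes --- or (ii) a direct Lagrangian-sufficiency argument for the joint non-convex problem: exhibit multipliers $\paren{\lambda_N,\mu_N,\eta_N}$ and verify that the candidate policy maximizes the full Lagrangian $\EW\sqparen{\logp{1+hP}-N\lambda_N P-N\mu_N gP-\eta_N\I{P>0}}$ pointwise over all $P\geq 0$, then invoke weak duality together with feasibility. Either repair is routine, but as written the outer step is unjustified. The existence of multipliers that simultaneously satisfy both power constraints with the required slackness, which you flag but defer, likewise needs an explicit continuity and monotonicity argument on the map from $\paren{\lambda,\mu}$ to the induced average transmit and interference powers.
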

\begin{proof}
See Appendix \ref{App: OPA-TPIL}.
\end{proof}

Theorem \ref{Theo: OPA-TPIL} pinpoints that the jointly optimal scheduling and power control strategy is in the form of a threshold-based water-filling power control policy. That is, the $i$th SU first decides against or in favor of transmission by comparing the value of its observed joint direct and interference channel state $\frac{h_i}{\lambda_N + \mu_N g_i}$ with the threshold value of $F^{-1}_{\lambda_N, \mu_N}\paren{1 - \frac{1}{N}}$. Upon a positive decision in favor of transmission, it transmits by using a water-filling power allocation policy, which is embodied by the $\paren{\frac{1}{\lambda_N + \mu_N g} - \frac{1}{h}}^+$ term in \eqref{Eq: OPA-TPIL}. Note that $\lambda_N$ and $\mu_N$ are computed off-line at the SBS by solving the dual problem associated with the optimization problem \eqref{OP-Auxiliary-2} in Appendix \ref{App: OPA-TPIL}. Then, the SBS broadcasts the values of $\lambda_N$ and $\mu_N$ to all SUs. 

In the centralized case, direct and interference channel gains of all SUs are available at the SBS, and in order to maximize the secondary network sum-rate, the SBS schedules the SU having the maximum of $\left\{\CRTPIL\right\}_{i=1}^N$ \cite{NID12}. The scheduled SU employs a water-filling power allocation policy with changing power levels. Hence, the multiuser diversity gain with a centralized scheduler depends on the maximum of $\left\{\CRTPIL\right\}_{i=1}^N$, which concentrates around $F^{-1}_{\lambda_N,\mu_N}\paren{1-\frac{1}{N}}$ as the number of SUs becomes large (\emph{i.e.,} see Lemma 2 in \cite{NIDSubmitted} for more details). Based on this observation and Theorem \ref{Theo: OPA-TPIL}, we conclude that in DTPIL networks, the $i$th SU transmits if the likelihood of its being the SU with the maximum of $\left\{\CRTPIL\right\}_{i=1}^N$ is high. Hence, throughput scaling laws similar to those obtained in \cite{NID12} and \cite{NIDSubmitted} are expected to hold for DTPIL networks when $p_N=\frac{1}{N}$. Later, we show that this choice of transmission probability is asymptotically optimal. That is, the secondary network sum-rate under $p_N=\frac{1}{N}$  serves as an upper bound on aggregate communication rates that we would otherwise achieve through other choices of $p_N$ when $N$ is large enough.
 
Under the collision channel assumption for resolving collisions, the SBS can decode the received signal successfully {\em if and only if} just one SU transmits. Otherwise, a collision happens and no data is delivered to the SBS. In our setup with $p_N = \frac{1}{N}$ and $N$ large enough, this observation implies that the received signal will be decoded successfully if and only if just the SU with the maximum of $\left\{\CRTPIL\right\}_{i=1}^N$ transmits. Let $\Xs{\lambda_N}{\mu_N}$ and $\Xd{\lambda_N}{\mu_N}$ be the largest and the second largest elements among the collection of i.i.d random variables $\brparen{X_i\paren{\lambda_N,\mu_N}}_{i=1}^N$, respectively, where $X_i\paren{\lambda_N,\mu_N}=\CRTPIL$. Let $\RTPIL{p_N}$ be the sum-rate in DTPIL networks under the optimum distributed power control policy with the transmission probability equal to $p_N$. Then, Theorem \ref{Theo: OPA-TPIL} implies that, for $p_N=\frac{1}{N}$ and $N$ large enough, we have %the sum-rate in DTPIL networks under the optimum power allocation policy can be expressed as 
\begin{eqnarray}\label{eq: sum-rate DTPIL}
\RTPIL{\frac{1}{N}}=\ES{\logp{\Xs{\lambda_N}{\mu_N}}\Inb{A_N}},
\end{eqnarray}
 where $A_N=\brparen{\Xs{\lambda_N}{\mu_N}>\ICRDTPIL{\frac{1}{N}},\Xd{\lambda_N}{\mu_N}\leq \ICRDTPIL{\frac{1}{N}}}$.  
 In the next theorem, we derive the scaling behavior of $\RTPIL{\frac{1}{N}}$.
\begin{theorem}\label{Theo: DTPIL}
%Let $\RTPIL{p_N}$ be the secondary network sum-rate in DTPIL networks. Then, 
The secondary network sum-rate $\RTPIL{\frac{1}{N}}$ for $p_N=\frac{1}{N}$ under the optimum distributed power control policy scales according to  
\begin{eqnarray}
\lim_{N\ra\infty}\frac{\RTPIL{\frac{1}{N}}}{\log\logp{N}}=\frac{1}{\e{} n_h}.\nonumber
\end{eqnarray}
\end{theorem}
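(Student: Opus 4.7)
My approach is to exploit the representation in \eqref{eq: sum-rate DTPIL} and factor the expectation as $\RTPIL{\frac{1}{N}} = \PRP{A_N} \cdot \ES{\log \Xs{\lambda_N}{\mu_N} \mid A_N}$, then analyze the two factors separately. The threshold in Theorem \ref{Theo: OPA-TPIL} is chosen precisely so that $\PR{X_i\paren{\lambda_N,\mu_N} > \ICRDTPIL{\frac{1}{N}}} = \frac{1}{N}$; combined with the i.i.d.\ assumption, the event $A_N$ coincides with \emph{exactly one} of the $N$ variables exceeding the threshold. A binomial calculation then yields
\begin{eqnarray}
\PRP{A_N} = N \cdot \frac{1}{N}\paren{1 - \frac{1}{N}}^{N-1} \longrightarrow \frac{1}{\e{}} \text{ as } N \to \infty,\nonumber
\end{eqnarray}
which accounts for the $\frac{1}{\e{}}$ prefactor appearing in the theorem.

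Next, I would pin down the asymptotic scale of the threshold $t_N := \ICRDTPIL{\frac{1}{N}}$. Assuming (and separately verifying from the KKT conditions of the dual of \eqref{OP}) that $\lambda_N$ and $\mu_N$ remain in a compact subset of $(0,\infty)$ as $N\to\infty$, a Laplace-type estimate on $\PR{h/\paren{\lambda_N+\mu_N g} > x}$, obtained by conditioning on $g$ and combining the double-exponential decay of $1-F_h$ with the regularly varying behavior of $F_g$ near the origin, should give a tail of the form $1 - \CRDTPIL \sim a_N\, x^{b}\, \e{-\beta_h\lambda_N^{n_h} x^{n_h}}$ for some exponent $b$ (a combination of $l_h$ and $\gamma_g$) and slowly varying prefactor $a_N$. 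Inverting this relation produces $t_N \sim \paren{\log N /(\beta_h \lambda_N^{n_h})}^{1/n_h}$, and consequently $\log t_N = \frac{1}{n_h}\log\logp{N}\paren{1+o(1)}$.

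Third, conditioned on $A_N$, $\Xs{\lambda_N}{\mu_N}$ is distributed as $X_1\paren{\lambda_N,\mu_N}$ conditioned on $\{X_1\paren{\lambda_N,\mu_N} > t_N\}$. The fast tail decay forces strong concentration: for any $\epsilon>0$, one can show $\PR{X_1 > (1+\epsilon)t_N \mid X_1 > t_N}$ vanishes at a polynomial rate in $\log N$. A truncation argument, together with a crude moment bound controlling the contribution of $\log X_1$ on the far tail, should then yield
\begin{eqnarray}
\ES{\log \Xs{\lambda_N}{\mu_N} \mid A_N} = \log t_N\paren{1+o(1)} = \frac{1}{n_h}\log\logp{N}\paren{1+o(1)}.\nonumber
\end{eqnarray}
Multiplying the two factors delivers $\RTPIL{\frac{1}{N}} / \log\logp{N} \to 1/(\e{}n_h)$, as claimed.

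\textbf{Main obstacle.} The chief technical hurdle is the tail asymptotics of $X_1\paren{\lambda_N,\mu_N}$ with explicit control over the $N$-dependence of the Lagrange multipliers: one must show that $\lambda_N,\mu_N$ remain uniformly bounded and bounded away from zero so they only affect prefactors and do not disturb the leading $\log\logp{N}$ rate, and then execute the convolution-type estimate mixing the class-$\mathcal{C}$ upper tail of $h$ with the regularly varying behavior of $F_g$ near the origin --- the content I expect to find in Appendix \ref{app: TPIL}. A secondary subtlety is upgrading the pointwise tail concentration into convergence of the \emph{conditional} expectation of $\log X$, which reduces to a mild uniform integrability of $\log \Xs{\lambda_N}{\mu_N}/\log t_N$ under the conditional law.
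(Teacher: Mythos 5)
Your outline is correct and follows essentially the same route as the paper's Appendix~\ref{app: TPIL}: the factor $\PRP{A_N}\to\frac{1}{\e{}}$ from the binomial computation, inversion of the class-$\mathcal{C}$ tail of $\frac{h}{\lambda_N+\mu_N g}$ to get $\log t_N=\frac{1}{n_h}\log\logp{N}\paren{1+o(1)}$ (Lemma~\ref{Lem: inv-scaling}, which indeed rests on a product-of-independent-variables tail estimate mixing the upper tail of $h$ with the behavior of $F_g$ near the origin), concentration of $\Xs{\lambda_N}{\mu_N}$ at the threshold scale (Lemma~\ref{Lem: EOS-Conc}), and a uniform-integrability step to upgrade convergence in probability to convergence of the expectation. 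The one place your plan is too optimistic is the claim that boundedness of $\lambda_N$ away from zero can be ``verified from the KKT conditions of the dual.'' It cannot: complementary slackness only gives $\lambda_N>0$ when the power constraint binds for a fixed $N$, and says nothing about whether $\lambda_N\to 0$. The paper's Lemma~\ref{Lem: lambda-conv} proves $\liminf_{N}\lambda_N>0$ by contradiction using the extreme-value machinery itself: if $\lambda_{N_j}\le\epsilon$ along a subsequence, the scheduled user's water level is of order $\frac{1}{\epsilon}$ while $h^\star_{N_j}$ and $\Xs{\lambda_{N_j}}{\mu_{N_j}}$ concentrate at the same $\paren{\frac{1}{\beta_h}\logp{N_j}}^{1/n_h}$ scale, so Fatou's lemma forces an average consumed power of at least $\frac{1}{\epsilon}\paren{1-\frac{1}{\e{}}}$, violating $P_{\rm ave}$; the same lemma then pins down $\lambda_N\to\frac{1}{P_{\rm ave}}$, which is what makes the first-order term $\logp{\frac{1}{\lambda_N}}\PRP{A_N}$ a bounded constant rather than a contribution to the $\log\logp{N}$ rate. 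This is a substantive argument, not a duality fact, and it is precisely the missing ingredient your ``main obstacle'' paragraph gestures at. A minor further point: $\mu_N$ need not be bounded away from zero --- only $\mu_N\le\frac{1}{Q_{\rm ave}}$ is available, and Lemma~\ref{Lem: inv-scaling} treats the case $\mu=0$ separately --- but since only $\lambda_N$ enters the leading asymptotics of the threshold, this does not affect your conclusion.
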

\begin{IEEEproof}
See Appendix \ref{app: TPIL}.
\end{IEEEproof}

Theorem \ref{Theo: DTPIL} formally establishes the double logarithmic scaling behavior of the secondary network sum-rate for DTPIL networks. Further, it shows that the pre-log multiplier in this scaling behavior is equal to $\frac{1}{\e{} n_h}$. $n_h$ is equal to $1$ for Rayleigh, Rician-$K$ and Nakagami-$m$ distributed direct channel gains, and is equal to $\frac{c}{2}$ for Weibull-$c$ distributed direct channel gains.
 
The result of Theorem \ref{Theo: DTPIL} has the following intuitive explanation. The event $A_N$ in \eqref{eq: sum-rate DTPIL} represents the successful transmission event. For $N$ large enough, $\PRP{A_N}$ represents the fraction of time that only the SU with the maximum of $\left\{\CRTPIL\right\}_{i=1}^N$ transmits. In Appendix \ref{app: TPIL}, we show that, for $p_N=\frac{1}{N}$, $\PRP{A_N}$ converges to $\frac{1}{\e{}}$ as $N$ becomes large. Hence, as the number of SUs becomes large, the fraction of time that just the best SU transmits is approximately equal to $\frac{1}{\e{}}$. Also, in Appendix \ref{app: TPIL}, we show that $\logp{\Xs{\lambda_N}{\mu_N}}$ term in \eqref{eq: sum-rate DTPIL} scales according to $\frac{1}{n_h}\log\logp{N}$. These observations suggest that the secondary network sum-rate (under the optimum distributed power control) scales according to $\frac{1}{\e{}n_h}\log\logp{N}$ as $N$ becomes large. It would be noted that this is just an intuitive explanation, and we provide the rigorous proof in Appendix \ref{app: TPIL}. 
 
We also identify the second order determinants of $\RTPIL{\frac{1}{N}}$ in Appendix \ref{app: TPIL}.  Formally, we show that it can be expressed as 
 \begin{eqnarray}\label{Eq: Rate-TPIL}
\RTPIL{\frac{1}{N}}=\logp{\frac{1}{\lambda_N}}\PRP{A_N}+\ES{\logp{\Xs{1}{\frac{\mu_N}{\lambda_N}}}\Inb{A_N}}.
\end{eqnarray}
%where $\lambda_N$ is the power control parameter introduced by Theorem \ref{Eq: OPA-TPIL}. 
We show that the first term in \eqref{Eq: Rate-TPIL} converges to $\frac{1}{\e{}}\logp{P_{\rm ave}}$ as $N$ becomes large. This finding displays the logarithmic effect of the power constraint on the secondary sum-rate in DTPIL networks.  The second term in \eqref{Eq: Rate-TPIL} gives rise to the scaling of secondary sum-rate according to $\frac{1}{\e{} n_h}\log\log(N)$. 
 
 So far, we have assumed that $p_N$ is equal to $\frac{1}{N}$. One may speculate that DTPIL networks may obtain a better throughput scaling behavior if the transmission probability is optimally adjusted, rather than to be set to $\frac{1}{N}$. To investigate this idea, we study the throughput scaling behavior of DTPIL networks under the optimum transmission probability selection in the next theorem.
 
\begin{theorem}\label{Theo: Opt-Scal-DTPIL}
For each $N\in \N$, let $p^\star_N$ be an optimum transmission probability selection maximizing $\RTPIL{p_N}$, \emph{i.e.,} $p^\star_N \in \arg\max_{0\leq p_N\leq 1}\RTPIL{p_N}$. Then, 
\begin{eqnarray}
\lim_{N\ra\infty}\frac{\RTPIL{p^\star_N}}{\log\logp{N}}= \frac{1}{\e{}n_h}.\nonumber
\end{eqnarray}
\end{theorem}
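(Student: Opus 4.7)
The lower bound is immediate: $p_N=\frac{1}{N}$ is always feasible in the outer optimization $\max_{p_N}\RTPIL{p_N}$, so $\RTPIL{p^\star_N}\geq\RTPIL{1/N}$, and Theorem \ref{Theo: DTPIL} gives
\[
\liminf_{N\ra\infty} \frac{\RTPIL{p^\star_N}}{\log\logp{N}} \geq \frac{1}{\e{}\,n_h}.
\]
The substantive work is therefore the matching upper bound $\limsup_{N\ra\infty} \RTPIL{p^\star_N}/\log\logp{N} \leq \frac{1}{\e{}\,n_h}$.

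For a generic $p_N\in(0,1)$, my plan is to re-run the Lagrangian argument of Theorem \ref{Theo: OPA-TPIL} with $p_N$ in place of $\frac{1}{N}$; this shows the optimum power control is again threshold-based water-filling, with threshold equal to the $(1-p_N)$-quantile of $h/(\lambda_N+\mu_N g)$ and with multipliers $\lambda_N,\mu_N$ determined by the two average-power constraints in \eqref{OP}. The sum-rate then factors as
\[
\RTPIL{p_N} = N p_N(1-p_N)^{N-1}\bar R(p_N),
\]
where $\bar R(p_N)$ is the conditional expected rate given that a user transmits. On the transmission event, $1+hP^\star_{\rm DTPIL} = h/(\lambda_N+\mu_N g) =: X_N$, so $\bar R(p_N)=\ES{\log X_N \mid X_N>\ICRDTPIL{p_N}}$. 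Because $h$ has a class-$\mathcal{C}$ tail with parameter $n_h$ and $X_N\leq h/\lambda_N$, the tail of $X_N$ is also double-exponential with the same exponent $n_h$ (this is essentially the tail computation already carried out in Appendix \ref{app: TPIL}), which yields $\bar R(p_N) \leq (1+o(1))\frac{1}{n_h}\log\logp{1/p_N}$ provided $\lambda_N,\mu_N$ remain bounded away from $0$ and $\infty$.

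It remains to maximize $N p(1-p)^{N-1}\log\logp{1/p}$ over $p$. I would split into three regimes according to the behaviour of $Np^\star_N$. If $Np^\star_N\to 0$, the factor $Np^\star_N(1-p^\star_N)^{N-1}\to 0$ kills the sum-rate despite the $\log\logp{1/p^\star_N}=\BO{\log\logp{N}}$ term; if $Np^\star_N\to\infty$, the factor $(1-p^\star_N)^{N-1}\leq \e{-(N-1)p^\star_N}$ decays super-polynomially fast; and in the remaining regime $p^\star_N = c_N/N$ with $c_N$ bounded, the upper bound becomes $(1+o(1))\frac{c_N \e{-c_N}}{n_h}\log\logp{N}$, whence $\max_{c>0} c\e{-c}=\e{-1}$ delivers the desired $\frac{1}{\e{}\,n_h}$.

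The hard part will be establishing the tail asymptotic for $\bar R(p_N)$ \emph{uniformly} in $p_N$ across the relevant regime, since the Lagrange multipliers $\lambda_N,\mu_N$---and hence the distribution of $X_N$ and its quantile $\ICRDTPIL{p_N}$---themselves depend on $p_N$. The technical crux is ruling out pathological behaviour of $\lambda_N,\mu_N$ along any optimizing sequence $p^\star_N$ in the regime $p^\star_N\asymp 1/N$; I expect this to follow from feasibility of the constraints in \eqref{OP} combined with the class-$\mathcal{C}$ regularity of both $F_h$ and $F_g$, in analogy with the corresponding estimates in Appendix \ref{app: TPIL}.
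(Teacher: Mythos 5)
Your proposal is correct in outline and reaches the right answer, but your upper bound takes a genuinely different route from the paper's. You re-derive the optimal policy for general $p_N$, factor the sum-rate as $Np_N(1-p_N)^{N-1}\bar R(p_N)$, estimate the conditional per-transmission rate by $\frac{1}{n_h}\log\logp{1/p_N}$ \emph{uniformly} in $p_N$, and then explicitly optimize over three regimes of $Np^\star_N$. The paper avoids all of this: it keeps the expectation form $\RTPIL{p^\star_N}=\ES{\logp{\Xs{\lambda_N}{\mu_N}}\Inb{A_N}}$, splits on whether $\Xt=\logp{\Xs{\lambda_N}{\mu_N}}/\log\logp{N}$ is within $\epsilon$ of $\frac{1}{n_h}$, kills the far piece by the same uniform integrability estimate already used for Theorem \ref{Theo: DTPIL}, and bounds the near piece by $\paren{\frac{1}{n_h}+\epsilon}\PRP{A_N}$ combined with the elementary inequality $\PRP{A_N}=Np^\star_N\paren{1-p^\star_N}^{N-1}\leq\paren{1-\frac{1}{N}}^{N-1}\ra\frac{1}{\e{}}$. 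That one inequality replaces your entire regime analysis and, more importantly, removes the need for any tail asymptotic of $\bar R(p)$ that is uniform in $p$: the random variable $\Xt\I{\Xs{\lambda_N}{\mu_N}\geq 1}$ whose uniform integrability must be verified does not involve the threshold $\ICRDTPIL{p^\star_N}$ at all, only $\lambda_N$ and $\mu_N$. Your route buys a more transparent account of where $\frac{1}{\e{}}$ comes from (namely $\max_{c>0}c\e{-c}=\e{-1}$), at the price of substantially heavier uniform estimates. Two smaller points. First, in your $Np^\star_N\ra 0$ regime the parenthetical claim $\log\logp{1/p^\star_N}=\BO{\log\logp{N}}$ is false for very small probabilities (e.g., $p^\star_N=\e{-N}$ gives $\log\logp{1/p^\star_N}=\logp{N}$); the product $Np^\star_N\paren{1-p^\star_N}^{N-1}\log\logp{1/p^\star_N}$ still vanishes, so the conclusion survives, but the stated reason does not. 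Second, the technical crux you identify---controlling $\lambda_N,\mu_N$ along the optimizing sequence---is genuinely present in \emph{both} arguments, since $\Xt\xrightarrow{i.p.}\frac{1}{n_h}$ also requires $\lambda_N$ bounded away from zero; the paper dispatches it by appealing to the proof of Theorem \ref{Theo: DTPIL} and Lemma \ref{Lem: lambda-conv}, whose argument is written for $p_N=\frac{1}{N}$ and would need essentially the extension you anticipate, so you have correctly located the one step that neither treatment makes fully explicit.
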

\begin{IEEEproof}
See Appendix \ref{App: Opt-Scal-DTPIL}.
\end{IEEEproof}

Theorem \ref{Theo: Opt-Scal-DTPIL} indicates that the secondary network sum-rate under the optimum transmission probability also scales according to $\frac{1}{\e{}n_h}\log\logp{N}$. Thus, the choice of transmission probability as $p_N=\frac{1}{N}$ is asymptotically optimal, and the secondary network achieves the same throughput scaling under $p^\star_N$ and $p_N=\frac{1}{N}$. However, it should be noted that the optimum transmission probability might be different from $\frac{1}{N}$ for any finite $N$.  Identical throughput scaling behavior of DTPIL networks under $p^\star_N$ and $p_N=\frac{1}{N}$ gives rise to the following question: Does the optimum transmission probability asymptotically behaves as $\frac{1}{N}$? The next lemma gives an affirmative answer to this question.
\begin{lemma}\label{Theo: Opt-Pro-DTPIL}
For each $N \in \N$, let $p^\star_N$ be an optimum transmission probability selection in DTPIL networks. Then, $\lim_{N\ra\infty}Np^\star_N=1$.
\end{lemma}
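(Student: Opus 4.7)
The plan is to show that every limit point of the sequence $\brparen{Np^\star_N}_{N\in\N}$ equals $1$. Setting $c_N \defeq Np^\star_N$, I would argue in three steps: (i) $c_N$ is bounded above, (ii) $c_N$ is bounded away from $0$, and (iii) the unique limit point of every convergent subsequence is $1$. The crucial analytic ingredient is a generalization of Theorem \ref{Theo: DTPIL}: for any sequence $\brparen{p_N}$ satisfying $Np_N \to c^\star \in (0,\infty)$,
\beq
\lim_{N\ra\infty}\frac{\RTPIL{p_N}}{\log\logp{N}} = \frac{c^\star\,\e{-c^\star}}{n_h}. \label{eq: Gen-Scal}
\eeq

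First I would establish \eqref{eq: Gen-Scal}. The proof runs in parallel with Appendix \ref{app: TPIL}: the collision pre-factor satisfies $N(1-p_N)^{N-1}\ra c^\star\e{-c^\star}$; the probability $\PRP{A_N}$ that exactly the best user transmits converges to $c^\star\e{-c^\star}$ by an analogous local-limit calculation; and $\logp{\Xs{\lambda_N}{\mu_N}}$ still scales as $\frac{1}{n_h}\log\logp{N}$, because replacing the threshold $\ICRDTPIL{\frac{1}{N}}$ by $\ICRDTPIL{p_N}$ only shifts the quantile by an $O(1)$ additive constant when $\log\logp{\frac{1}{p_N}} = \log\logp{N}+O(1)$. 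The stability of the dual variables $\lambda_N,\mu_N$ under the reparametrization $p_N=\frac{c_N}{N}$ is the main technical hurdle; it follows by inspecting the KKT conditions for \eqref{OP} and noting that the average-power and average-interference constraints remain unchanged in form, so the multipliers differ from their $p_N=\frac{1}{N}$ counterparts only by bounded factors, and the tail estimates of Appendices \ref{App: OPA-TPIL}--\ref{app: TPIL} transfer verbatim.

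Next I would rule out the degenerate limits. Along any subsequence with $c_{N_k}\ra 0$, the sum-rate admits the bound
$\RTPIL{p^\star_{N_k}} \leq N_k(1-p^\star_{N_k})^{N_k-1}\cdot p^\star_{N_k}\cdot \BO{\log\logp{N_k}} = \BO{c_{N_k}\log\logp{N_k}} = \LO{\log\logp{N_k}}$,
contradicting Theorem \ref{Theo: Opt-Scal-DTPIL}. Along any subsequence with $c_{N_k}\ra\infty$, the pre-factor satisfies $N_k(1-p^\star_{N_k})^{N_k-1}\leq N_k\,\e{-c_{N_k}/2}$ for $N_k$ large, which decays faster than any polynomial of $N_k$, while the per-user expected rate $\ES{\logp{1+h P^\star_{\rm DTPIL}\paren{h,g}}}$ stays $\BO{\log\logp{N_k}}$; again $\RTPIL{p^\star_{N_k}}/\log\logp{N_k}\ra 0$, a contradiction. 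Hence $\brparen{c_N}$ is bounded and bounded away from $0$.

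Finally, for any convergent subsequence $c_{N_k}\ra c^\star\in(0,\infty)$, combining \eqref{eq: Gen-Scal} with Theorem \ref{Theo: Opt-Scal-DTPIL} forces $c^\star\e{-c^\star} = \frac{1}{\e{}}$. Since the map $c\mapsto c\,\e{-c}$ attains its unique maximum $\frac{1}{\e{}}$ on $(0,\infty)$ at $c=1$, it follows that $c^\star=1$. As every convergent subsequence of the bounded sequence $\brparen{c_N}$ has the same limit, $c_N\ra 1$, which is precisely the claim. I expect Step~1 — porting the extreme-value machinery of Appendix \ref{app: TPIL} from the canonical choice $p_N=\frac{1}{N}$ to the family $p_N=\frac{c_N}{N}$ with $c_N$ bounded and bounded away from $0$ — to be the hardest component; Steps~2 and~3 then follow as elementary consequences together with Theorem \ref{Theo: Opt-Scal-DTPIL}.
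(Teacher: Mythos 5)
Your proposal follows essentially the same route as the paper's proof: consider limit points $a$ of $Np^\star_N$, dispose of $a=0$ and $a=\infty$ by showing the throughput becomes $\LO{\log\logp{N}}$, and for $a\in(0,\infty)$ show the normalized sum-rate along the subsequence converges to $\frac{a\,\e{-a}}{n_h}$, which is uniquely maximized at $a=1$. The proposal is correct, and is in fact somewhat more explicit than the paper (e.g., in the degenerate-limit bounds and in invoking Theorem \ref{Theo: Opt-Scal-DTPIL} to pin down $c^\star$), but the underlying argument is identical.
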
\begin{IEEEproof}
See Appendix \ref{App: Opt-Pro-DIL}.
\end{IEEEproof}

Lemma \ref{Theo: Opt-Pro-DTPIL} shows that the optimum transmission probability in DTPIL networks should scale according to $\frac{1}{N}$.  This scaling behavior of $p^\star_N$ can be intuitively considered as the origin of identical throughput scaling behavior of DTPIL networks under $p^\star_N$ and $p_N=\frac{1}{N}$ (\emph{i.e.,} see Appendix \ref{App: Opt-Pro-DTPIL} for more details). %However, we note that the result of Lemma \ref{Theo: Opt-Pro-DTPIL} only provides an explanation for the throughput scaling behavior of DTPIL networks under the optimal transmission probability, and in fact, Theorem \ref{Theo: Opt-Scal-DTPIL} is proved without any assumption on the sequence of optimal transmission probabilities.

Finally, it is perceptive to compare the throughput scaling laws obtained by using completely decentralized transmission strategies with those obtained through a centralized scheduler. 
In \cite{NIDSubmitted}, it has been shown that the secondary network throughput  with a centralized scheduler (usually, the SBS) scales according to $\frac{1}{n_h}\log\logp{N}$ when the optimum power allocation policy is employed. Hence, compared to the centralized case, the factor $\frac{1}{\e{}}$ here can be interpreted as the price of avoiding feedback signals between primary and secondary networks, which are  the key parameters required by the centralized scheduler to perform optimum power control and scheduling. 
\subsection{Optimum Power Control and Throughput Scaling in DIL Networks}
In this case, transmission powers of SUs are limited by a constraint on the total average interference power that SUs cause to the PBS and a transmission probability constraint. We define the power allocation policy in DIL networks, $P_{\rm DIL}\paren{\cdot,\cdot}$, as a mapping from $\R_+^2$ to $\R_+$, where $\PIL$ denotes the transmission power of the $i$th SU at the joint channel state $\paren{h_i,g_i}$. Similar to DTPIL networks, the power allocation policy in DIL networks is designed such that the transmission probability for all SUs is equal to $p_N$, \emph{i.e.,} $\PR{\PIL>0}=p_N$ for all $i\in\left\{1,\cdots,N\right\}$.  We define $R_{\rm DIL}\paren{p_N,N,P_{\rm DIL}}$ as the secondary network sum-rate for a given power control policy $P_{\rm DIL}$ in DIL networks, which can be expressed as $R_{\rm DIL}\paren{p_N,N,P_{\rm DIL}} = N\paren{1-p_N}^{N-1}\ES{\logp{1+hP_{\rm DIL}\paren{h,g}}}$.  In this case, transmission powers of SUs are allocated according to the solution of the following functional optimization problem:
\begin{eqnarray}\label{OP-IL}
\begin{array}{ll}
\underset{P_{\rm DIL}\paren{h,g} \geq 0}{\mbox{maximize}} & R_{\rm DIL}\paren{p_N,N,P_{\rm DIL}} \\
\mbox{subject to} & N\EW_{h,g} \sqparen{gP_{\rm DIL}\paren{h,g}} \leq Q_{\rm ave}\\
			    &\PR{P_{\rm DIL}\paren{h,g}>0}=p_N
\end{array}. 
\end{eqnarray}
The next theorem establishes the structure of the optimum power allocation policy in DIL networks. The proof of Theorem \ref{Theo: OPA-IL} is similar to that of Theorem \ref{Theo: OPA-TPIL}, and therefore, it is skipped to avoid repetition.
\begin{theorem}\label{Theo: OPA-IL}
Let $P^\star_{\rm DIL}\paren{h,g}$ be the solution of \eqref{OP-IL}. Then, for  $p_N=\frac{1}{N}$ and $N$ large enough, we have 
\begin{eqnarray}
P_{\rm DIL}^\star\paren{h,g}=\paren{\frac{1}{\mu_N g}-\frac{1}{h}}^+\I{\frac{h}{g}>\ICRDIL{\frac{1}{N}}},\nonumber
\end{eqnarray}
where $\mu_N$ is the power control parameter adjusted such that the average interference power constraint in \eqref{OP-IL} is met with equality, and $F^{-1}_{\frac{h}{g}}\paren{x}$ is the functional inverse of the CDF of $\CRIL$.
\end{theorem}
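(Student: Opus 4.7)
The plan is to mirror the structure of the proof of Theorem \ref{Theo: OPA-TPIL}, specializing to the case where the Lagrange multiplier associated with the (absent) average total power constraint is set to zero. The key observation is that, with $p_N = \frac{1}{N}$ fixed, the prefactor $N(1-p_N)^{N-1}$ in $R_{\rm DIL}$ is a constant, so the program reduces to maximizing $\ES{\logp{1 + h P_{\rm DIL}(h,g)}}$ subject to the interference constraint and the support constraint $\PR{P_{\rm DIL}(h,g) > 0} = \frac{1}{N}$.

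First, I would decouple over the support of $P_{\rm DIL}$. For any measurable set $S \subseteq \R_+^2$ with $\PR{(h,g)\in S} = \frac{1}{N}$, the inner problem of optimizing $P$ with support contained in $S$ subject to $N\ES{g P \Inb{S}} \leq Q_{\rm ave}$ is convex. Pointwise maximization of the Lagrangian $\logp{1+hP} - \mu N g P$ on $S$ yields the water-filling form $P(h,g) = \left(\frac{1}{\mu g} - \frac{1}{h}\right)^+$, with $\mu = \mu_N$ tuned so that the interference constraint is active.

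Next, I would optimize over the choice of $S$. Substituting the inner optimum, the objective collapses to a functional of $S$ alone. Via an exchange argument — swapping an infinitesimal probability mass at $(h,g) \in S$ with equal mass at $(h',g') \notin S$ — I would show that including $(h,g)$ in $S$ is strictly preferable whenever $\frac{h}{g} > \frac{h'}{g'}$, since $\frac{h}{g}$ is exactly the marginal rate per unit interference budget in this regime. Matching the probability mass $\frac{1}{N}$ then singles out the top quantile $S^\star = \{(h,g) : \frac{h}{g} > \ICRDIL{\frac{1}{N}}\}$. The final consistency check is that, for $N$ large enough, $\ICRDIL{\frac{1}{N}}$ is large, which forces the tuned $\mu_N$ to be small enough that $\frac{1}{\mu_N g} > \frac{1}{h}$ throughout $S^\star$; thus the $(\cdot)^+$ operation and the indicator impose compatible rather than conflicting supports, and their product reproduces the claimed expression.

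The main obstacle is the exchange argument over $S$: the non-convex transmission probability constraint forbids a direct single-Lagrangian derivation, and one must control how perturbing $S$ simultaneously perturbs the water level $\mu_N$ via the interference constraint. This is precisely the technical hurdle overcome in Appendix \ref{App: OPA-TPIL}, where the asymptotically optimal support is shown to be a level set of $\frac{h}{\lambda_N + \mu_N g}$. Specializing that analysis with $\lambda_N = 0$ collapses the level set to $\{\frac{h}{g} > \ICRDIL{\frac{1}{N}}\}$, yielding exactly the threshold-based water-filling policy in the statement.
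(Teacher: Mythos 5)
Your proposal is correct, and its operative final step -- rerun the Appendix~\ref{App: OPA-TPIL} argument with the total-power constraint deleted, i.e.\ with $\lambda_N=0$ -- is exactly what the paper does (the paper states that the proof is ``similar to that of Theorem~\ref{Theo: OPA-TPIL}'' and omits it). Where you differ is in the middle: you decompose the problem into an inner water-filling over a fixed support set $S$ of probability $\frac{1}{N}$ followed by an outer exchange argument over $S$, whereas the paper relaxes the binary scheduling decision to a variable $W\paren{h,g}\in[0,1]$ with $\EW\sqparen{W}=\frac{1}{N}$, convexifies via $\Pi=\tilde{P}W$, and reads the support off the KKT stationarity condition in $W$, which forces $W^\star\in\brparen{0,1}$ almost surely and selects the superlevel set of $G\paren{\frac{h}{\mu_N g}}$ with $G\paren{x}=\logp{x}+\frac{1}{x}-1$. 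The two routes encode the same mathematical fact -- the net Lagrangian surplus of admitting a state is a monotone function of $\frac{h}{g}$ once the threshold exceeds one -- but the paper's relaxation buys you a clean convex program in which the coupling between the support choice and the water level $\mu_N$ is handled automatically by a single multiplier, which is precisely the step your infinitesimal-swap argument leaves informal (your ``marginal rate per unit interference budget'' should really be the surplus $\logp{\frac{h}{\mu_N g}}+\frac{\mu_N g}{h}-1$, whose monotonicity in $\frac{h}{g}$ holds only on $[1,\infty)$ and is why the ``$N$ large enough'' hypothesis is needed). Since you correctly identify this obstacle and resolve it by falling back on the Appendix~\ref{App: OPA-TPIL} machinery, the proposal is sound.
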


Theorem \ref{Theo: OPA-IL} implies that, for $p_N=\frac{1}{N}$ and $N$ large enough, the optimum power allocation policy for the $i$th SU is to transmit by using a water-filling power allocation policy if its joint power and interference channel state, \emph{i.e.,} $\CRIL$, is greater than the threshold value of $\ICRDIL{\frac{1}{N}}$. In the centralized case, in order to maximize the secondary network sum-rate, the SBS schedules the SU having the maximum of $\left\{\CRIL\right\}_{i=1}^N$, and the scheduled SU employs a water-filling power allocation policy, \emph{i.e.,} see \cite{NID12} and \cite{NIDSubmitted}. Moreover, the multiuser diversity gain with a centralized scheduler heavily depends on the maximum of $\left\{\frac{h_i}{g_i}\right\}_{i=1}^N$, and as the number of SUs becomes large,  the maximum of $\left\{\CRIL\right\}_{i=1}^N$ takes values around $\ICRDIL{\frac{1}{N}}$ with high probability.  In this regard, Theorem \ref{Theo: OPA-IL} further shows that, in DIL networks, a SU transmits with positive power if it has a high chance of being the SU with the maximum of $\left\{\CRIL\right\}_{i=1}^N$. Thus, we expect to observe throughput scaling behavior similar to that observed with a centralized scheduler, which is indeed the case as shown next. %Later, we show that the choice of transmission probability as $p_N=\frac{1}{N}$ is asymptotically optimal, \emph{i.e.,} the secondary network throughput obtained with the choice of $p_N=\frac{1}{N}$ is asymptotically an upper bound on the secondary network sum-rates that can be obtained with other choices of $p_N$. 
 
Let $\Ys$ and $\Yd$ be the largest and the second largest elements of the collection of random variables $\brparen{Y_i}_{i=1}^N$, where $Y_i=\CRIL$. Also, let $\RIL{p_N}$ be the sum-rate in DIL networks under the optimum distributed power control policy with transmission probability equal to $p_N$. Then, for $p_N=\frac{1}{N}$ and $N$ large enough, we have %he sum-rate in DIL networks can be expressed as 
\begin{eqnarray}
\RIL{\frac{1}{N}}=\ES{\logp{\frac{\Ys}{\mu_N}}\Inb{B_N}},\nonumber
\end{eqnarray}
where $B_N=\brparen{\Ys>\ICRDIL{\frac{1}{N}},\Yd\leq \ICRDIL{\frac{1}{N}}}$.
The next theorem establishes the sum-rate scaling behavior of DIL networks.
\begin{theorem}\label{Theo: DIL}
%Let $\RIL{p_N}$ be the secondary network sum-rate in DIL networks. Then, for $p_N=\frac{1}{N}$, we have
The secondary network sum-rate $\RIL{\frac{1}{N}}$ for $p_N=\frac{1}{N}$ under the optimum distributed power control policy scales according to
\begin{eqnarray}
\lim_{N\ra\infty}\frac{\RIL{\frac{1}{N}}}{\logp{N}}=\frac{1}{\e{}\gamma_g}.\nonumber
\end{eqnarray}
\end{theorem}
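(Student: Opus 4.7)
The plan is to mirror the strategy used to establish Theorem \ref{Theo: DTPIL} in Appendix \ref{app: TPIL}, but adapted to the interference-only case where the underlying extreme-value scaling is logarithmic rather than double-logarithmic. Writing
$$\RIL{\tfrac{1}{N}}=\ES{\logp{\Ys}\Inb{B_N}}-\ES{\logp{\mu_N}\Inb{B_N}},$$
I would decompose the analysis into three pieces: (i) the probability of the successful-transmission event $B_N$; (ii) the asymptotic scale of $\logp{\Ys}$ on $B_N$; and (iii) the asymptotic behaviour of the dual variable $\mu_N$. For (i), since the $Y_i$ are i.i.d.\ with continuous CDF, one has $\PRP{Y_i>\ICRDIL{\frac{1}{N}}}=\frac{1}{N}$ exactly, so by independence
$$\PRP{B_N}=N\cdot\tfrac{1}{N}\paren{1-\tfrac{1}{N}}^{N-1}=\paren{1-\tfrac{1}{N}}^{N-1}\longrightarrow\tfrac{1}{\e{}}.$$
This is precisely the mechanism that produces the pre-log factor $\frac{1}{\e{}}$, exactly as in DTPIL networks.

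For (ii), the new ingredient is a tail estimate for $Y=\frac{h}{g}$. Using the class-$\mathcal{C}$ behaviour of $F_g$ near the origin together with the fast decay of $F_h$ at infinity, I would establish
$$\PRP{Y>x}=\ES{F_g\paren{\tfrac{h}{x}}}\sim \eta_g\,\ES{h^{\gamma_g}}\,x^{-\gamma_g}\quad\text{as } x\to\infty,$$
justifying the interchange of limit and expectation via the double-exponential tail of $h$, which provides an integrable envelope for $h^{\gamma_g}$. Inverting yields $\ICRDIL{\frac{1}{N}}\sim\paren{\eta_g\ES{h^{\gamma_g}}N}^{1/\gamma_g}$, hence $\logp{\ICRDIL{\frac{1}{N}}}\sim\frac{1}{\gamma_g}\logp{N}$. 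An extreme-value concentration argument analogous to that used for $\Xs{\lambda_N}{\mu_N}$ in Appendix \ref{app: TPIL} then gives $\logp{\Ys}/\logp{\ICRDIL{\frac{1}{N}}}\to 1$ on $B_N$ in an $\Lone$ sense, from which
$$\frac{\ES{\logp{\Ys}\Inb{B_N}}}{\logp{N}}\longrightarrow \frac{1}{\e{}\gamma_g}.$$

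For (iii), I would argue that $\mu_N$ stays bounded away from $0$ and $\infty$. Because the transmission threshold $\ICRDIL{\frac{1}{N}}$ diverges, on the event of transmission the ratio $\frac{g}{h}$ is negligibly small compared to $\frac{1}{\mu_N}$, so the interference delivered per transmission is asymptotically $\frac{1}{\mu_N}$; the interference constraint $N\ES{gP^\star_{\rm DIL}\paren{h,g}}\leq Q_{\rm ave}$, met with equality, then forces $\mu_N\to\frac{1}{Q_{\rm ave}}$. Consequently $\ES{\logp{\mu_N}\Inb{B_N}}=\BO{1}$, negligible against $\logp{N}$. Combining (i)--(iii) yields the claimed limit. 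The main obstacle will be step (ii): rigorously promoting the pointwise concentration of $\Ys$ around $\ICRDIL{\frac{1}{N}}$ to an $\Lone$ statement on $\logp{\Ys}$, and absorbing a possible slowly-varying correction to the pure power-law tail of $\frac{h}{g}$. A secondary difficulty is ruling out pathological upper-tail contributions to $\ES{\logp{\Ys}\Inb{B_N}}$, which I would handle by truncated-moment bounds on $\logp{\Ys}$ that are uniform in $N$, in the spirit of the estimates built in Appendix \ref{app: TPIL}.
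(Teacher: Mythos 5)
Your proposal follows essentially the same route as the paper's Appendix \ref{app: IL}: the identical decomposition $\RIL{\frac{1}{N}}=\logp{\frac{1}{\mu_N}}\PRP{B_N}+\ES{\logp{\Ys}\Inb{B_N}}$, the exact computation $\PRP{B_N}=\paren{1-\frac{1}{N}}^{N-1}\ra\frac{1}{\e{}}$, the convergence $\mu_N\ra\frac{1}{Q_{\rm ave}}$ via the interference constraint (the paper's Lemma \ref{Lem: mu-conv}), and promotion of the in-probability concentration $\frac{\logp{\Ys}}{\logp{N}}\ra\frac{1}{\gamma_g}$ to convergence of expectations via uniform integrability. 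The only cosmetic difference is that you propose to derive the power-law tail of $\frac{h}{g}$ and the resulting concentration of $\Ys$ from scratch, whereas the paper imports both (and the uniform-integrability bound) from Lemma 8 of \cite{NIDSubmitted}.
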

\begin{IEEEproof}
See Appendix \ref{app: IL}.
\end{IEEEproof}

Theorem \ref{Theo: DIL} reveals the logarithmic scaling behavior of the secondary network sum-rate as a function of  the number of SUs in DIL networks. It also shows that the pre-log multiplier in this scaling behavior is equal to $\frac{1}{\e{}\gamma_g}$.  The $\frac{1}{\e{}}$ stems from the probability of successful transmission, whereas $\frac{1}{\gamma_g}$ term stems from the throughput scaling behavior on the event of successful transmission.  $\gamma_g$ is equal to 1 for Rayleigh and Rician-$K$ distributed interference channel gains, and is equal to $m$ and $\frac{c}{2}$ when interference channel gains are Nakagami-$m$ and Weibull-$c$ distributed, respectively.

Our analysis in Appendix \ref{app: IL} also reveals some second order effects on the secondary network throughput. In particular, we show that $\RIL{\frac{1}{N}}$ can be written as 
\begin{eqnarray}\label{Eq: Rate-Expan-DIL}
\RIL{\frac{1}{N}}=\logp{\frac{1}{\mu_N}}\PRP{B_N}+\ES{\logp{\Ys}\Inb{B_N}}.
\end{eqnarray}
%where $\mu_N$ is the power control parameter associated with the average interference power constraint in \eqref{OP-IL}. 
It is shown in Appendix \ref{app: IL} that $\mu_N$ converges to $\frac{1}{Q_{\rm ave}}$ as $N$ grows large. Hence, the first term in \eqref{Eq: Rate-Expan-DIL} converges to $\frac{1}{\e{}}\logp{Q_{\rm ave}}$ as $N$ becomes large, implying the logarithmic effect of $Q_{\rm ave}$ on the secondary network sum-rate in DIL networks.  Further, it is also shown that the second term in \eqref{Eq: Rate-Expan-DIL} scales according to $\frac{1}{\e{}\gamma_g}\logp{N}$, signifying the logarithmic effect of the number of SUs on the secondary network sum-rate in DIL networks. 

It is also instructive to compare the result of Theorem \ref{Theo: DIL} with the throughput scaling behavior that can be obtained by means of a centralized scheduler. The secondary network throughput scales according to $\frac{1}{\gamma_g}\logp{N}$ when the optimum transmission power control is  performed by a centralized scheduler \cite{NIDSubmitted}. This observation suggests that  the throughput scaling law obtained through distributed implementation differs from that obtained in the centralized case only in the observed pre-log factors. Similar to the previous case, $\frac{1}{\e{}}$ can be interpreted as the cost of decentralized implementation of the cognitive uplink.

It might be hypothesized that the capacity scaling behavior obtained in Theorem \ref{Theo: DIL} can be improved if the optimum transmission probability is employed instead of $\frac{1}{N}$. The next theorem disproves this hypothesis.
\begin{theorem}\label{Theo: Opt-Scal-DIL}
For each $N\in \N$, let $p^\star_N$ be an optimum transmission probability selection maximizing $\RIL{p_N}$, \emph{i.e.,} $p^\star_N \in \arg\max_{0\leq p_N\leq 1}\RIL{p_N}$. Then, 
\begin{eqnarray}
\lim_{N\ra\infty}\frac{\RIL{p^\star_N}}{\logp{N}}= \frac{1}{\e{}\gamma_g}.
\end{eqnarray}
\end{theorem}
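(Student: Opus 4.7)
Since $p^\star_N$ is a maximizer, $\RIL{p^\star_N} \geq \RIL{\frac{1}{N}}$ for every $N$, and Theorem \ref{Theo: DIL} immediately gives $\liminf_{N\to\infty}\RIL{p^\star_N}/\logp{N} \geq 1/(\e{}\gamma_g)$. The work lies in the matching $\limsup$ bound, uniformly over the choice of $p_N$.

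\textbf{Upper bound.} I would first extend Theorem \ref{Theo: OPA-IL} to arbitrary $p_N\in(0,1)$ by repeating the Lagrangian argument of Appendix \ref{App: OPA-TPIL}: for $N$ large enough, the optimizer of \eqref{OP-IL} with transmission probability pinned at $p_N$ still has the threshold water-filling form
\begin{eqnarray}
P^\star_{\rm DIL}(h,g) = \paren{\frac{1}{\mu_N g} - \frac{1}{h}}^+\I{\frac{h}{g} > \ICRDIL{p_N}},\nonumber
\end{eqnarray}
with $\mu_N$ chosen so that the interference constraint binds. Plugging in yields
\begin{eqnarray}
\RIL{p_N} = N\paren{1-p_N}^{N-1}\,\ES{\logp{Y/\mu_N}\Inb{Y > T_N}},\nonumber
\end{eqnarray}
where $Y = h/g$ and $T_N = \ICRDIL{p_N}$. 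I would then extract three asymptotic estimates, uniform in $p_N$: (i) the class-$\cal C$ hypothesis on $g$ yields (via the same Mellin-type computation underlying Appendix \ref{app: IL}) the power-law tail $\PR{Y > y} \sim C_Y y^{-\gamma_g}$, so $T_N \sim \paren{C_Y/p_N}^{1/\gamma_g}$; (ii) on $\brparen{Y > T_N}$ the quantity $\mu_N g/h$ is vanishingly small, so the interference constraint collapses to $Np_N/\mu_N = Q_{\rm ave}\paren{1+o(1)}$, giving $\mu_N \sim Np_N/Q_{\rm ave}$; (iii) integration by parts against the power-law tail yields $\ES{\log Y\,\Inb{Y > T_N}} = p_N\logp{1/p_N}/\gamma_g + O\paren{p_N}$.

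\textbf{Combining and main obstacle.} Writing $q = Np_N$ and collecting the estimates,
\begin{eqnarray}
\RIL{p_N} = q\paren{1-q/N}^{N-1}\paren{\frac{\logp{N}}{\gamma_g} + O(1)},\nonumber
\end{eqnarray}
so because $\max_{q\geq 0} q\e{-q} = 1/\e{}$ (attained at $q=1$), $\limsup_{N\to\infty}\RIL{p_N}/\logp{N} \leq 1/(\e{}\gamma_g)$ uniformly in $p_N$; the edge regimes $q\to\infty$ and $p_N$ bounded away from $0$ are killed by the exponential decay of $\paren{1-p_N}^{N-1}$, and $q\to 0$ by the prefactor. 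Together with the lower bound, this yields the theorem. The main obstacle is establishing the estimates in (i)--(iii) \emph{uniformly} across scalings of $p_N$, rather than only for $p_N = 1/N$ as in Appendix \ref{app: IL}; the regimes where $Np_N$ diverges or vanishes (so that $T_N$ or $\mu_N$ may leave the tractable regime) need the most careful handling.
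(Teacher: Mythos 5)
Your lower bound is exactly the paper's. For the upper bound you take a genuinely different, more computational route: you try to evaluate $\RIL{p_N}$ asymptotically as an explicit function of $q=Np_N$ and then maximize over $q$, whereas the paper never computes the rate as a function of $p_N$ at all. The paper splits $\ES{\Yt\Inb{B_N}}$ (with $\Yt=\log\paren{\Ys/\mu_N}/\log N$) according to whether $\Yt$ is within $\epsilon$ of $1/\gamma_g$, and then invokes two facts that are automatically uniform in $p^\star_N$: (a) $\PRP{B_N}=Np^\star_N\paren{1-p^\star_N}^{N-1}\leq\paren{1-\frac{1}{N}}^{N-1}\to\frac{1}{\e{}}$, since $p\mapsto Np(1-p)^{N-1}$ is maximized at $p=\frac{1}{N}$; and (b) $\Ys=\max_i Y_i$ does not depend on the threshold at all, so $\log\paren{\Ys}/\log N\xrightarrow{i.p.}1/\gamma_g$ irrespective of the transmission probability; uniform integrability then upgrades the off-event term to convergence in mean. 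This is precisely the device that disposes of the uniformity problem you yourself flag as the main obstacle. Your approach, if completed, buys more (an explicit rate profile $\approx q\e{-q}\log(N)/\gamma_g$ as a function of $q$), but at the cost of estimates the paper never needs.

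That obstacle is a genuine gap as written, and one intermediate step actually fails in an edge regime rather than merely needing care. Your estimate (ii) (that $\mu_N g/h$ is negligible on $\brparen{Y>T_N}$, hence $\mu_N\sim Np_N/Q_{\rm ave}$), and indeed the threshold water-filling representation of the rate itself, require $T_N=\ICRDIL{p_N}\gg\mu_N$. Since $T_N\sim\paren{C_Y/p_N}^{1/\gamma_g}$ while $\mu_N\sim Np_N/Q_{\rm ave}$, this holds only when $Np_N^{1+1/\gamma_g}\to 0$; for $p_N$ decaying slower than $N^{-\gamma_g/(\gamma_g+1)}$ (in particular $p_N$ bounded away from zero) the water-filling positivity condition $h/g>\mu_N$, not the probability threshold, becomes binding, the stated policy no longer satisfies $\PR{P>0}=p_N$, and your combined display is invalid there (its $O(1)$ also silently absorbs a $-\paren{1+1/\gamma_g}\log q$ term, harmless as $q\downarrow 0$ but a symptom of the breakdown for large $q$). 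You would need a separate, cruder bound for that regime, whereas the paper's bound (a) kills all such regimes at once because $Np(1-p)^{N-1}$ is already small there. So your skeleton is sound and can be completed, but the part you defer is exactly where the real work, or the paper's shortcut, is required.
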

\begin{IEEEproof}
See Appendix \ref{App: Opt-Scal-DIL}. 
\end{IEEEproof}  

Theorem \ref{Theo: Opt-Scal-DIL} establishes the logarithmic throughput scaling behavior of DIL networks under the optimum transmission probability. Hence, the choice of $p_N=\frac{1}{N}$ is asymptotically optimal, and one cannot obtain better throughput scaling by other choices of $p_N$. Finally, in the next lemma, we study the asymptotic behavior of the sequence of optimum transmission probabilities in DIL networks as the number of SUs becomes large.
\begin{lemma}\label{Theo: Opt-Pro-DIL}
For each $N\in \N$, let $p^\star_N$ be an optimum transmission probability selection in DIL networks. Then, $\lim_{N\ra\infty}Np^\star_N=1$.
\end{lemma}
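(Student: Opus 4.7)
The plan is to show $Np_N^\star \to 1$ by a subsequence–contradiction argument driven by Theorem \ref{Theo: Opt-Scal-DIL}. The key is to pin down the leading-order dependence of the secondary sum-rate on $p_N$ and then exploit the fact that the map $c \mapsto c\e{-c}$ attains its unique maximum at $c = 1$.

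First I would extend Theorem \ref{Theo: OPA-IL} to an arbitrary small $p_N$: the same Lagrangian argument yields the threshold water-filling form $P^\star_{\rm DIL}(h,g) = \paren{\frac{1}{\mu_N g} - \frac{1}{h}}^+\I{h/g > t_N}$ with $t_N = \ICRDIL{p_N}$, where $\mu_N = \mu_N(p_N)$ is the interference multiplier. The class-$\mathcal{C}$ regular variation of $F_g$ at the origin gives $\log t_N \sim (1/\gamma_g)\logp{1/p_N}$, and a tail-integration argument in the spirit of Appendix \ref{app: IL} then establishes $\ES{\log(h/g)\I{h/g > t_N}} = (p_N/\gamma_g)\logp{1/p_N} + O(p_N)$. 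The binding interference constraint forces $\mu_N \sim Np_N/Q_{\rm ave}$, so that the $\log\mu_N$ contribution and the water-filling correction $1/h$ are both lower-order corrections to $\ES{\log(1+hP^\star_{\rm DIL})}$.

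Setting $c_N := Np_N$, I would use $(1-p_N)^{N-1} = \e{-c_N(1+o(1))}$ and $Np_N\logp{1/p_N} = c_N\logp{N} - c_N\log c_N$ to obtain
\[
\frac{\RIL{p_N}}{\logp{N}} = \frac{c_N\,\e{-c_N}}{\gamma_g} + o(1)
\]
uniformly whenever $c_N$ stays in a compact subset of $(0,\infty)$. If $c_N \to \infty$, $N(1-p_N)^{N-1}$ vanishes super-polynomially; if $c_N \to 0$, a crude bound $\ES{\log(1+hP^\star_{\rm DIL})} = O(p_N\logp{1/p_N})$ yields $\RIL{p_N}/\logp{N} \to 0$ in either edge case. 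To conclude, suppose for contradiction that $Np_N^\star \not\to 1$; by Bolzano–Weierstrass some subsequence $\{N_k\}$ satisfies $N_k p_{N_k}^\star \to c^\star \in [0,\infty] \setminus \{1\}$, and applying the above asymptotic along this subsequence gives $\RIL{p_{N_k}^\star}/\logp{N_k} \to c^\star \e{-c^\star}/\gamma_g < 1/(\e{}\gamma_g)$, contradicting Theorem \ref{Theo: Opt-Scal-DIL}. Hence $Np_N^\star \to 1$.

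The main obstacle is the uniformity of the tail estimate: $\ES{\log(h/g)\I{h/g > t_N}}$ must satisfy the stated asymptotic along every admissible sequence $p_N \to 0$, which requires careful handling of the slowly varying correction $H(x)$ in the class-$\mathcal{C}$ tail of $h$ together with the regular variation of $F_g$ near the origin, so that neither perturbs the leading $(1/\gamma_g)\logp{1/p_N}$ behavior. In parallel, $\mu_N(p_N)$ must be tracked as a function of $p_N$ (not just $N$) and shown to remain bounded away from $0$ and $\infty$ uniformly over sequences with $c_N$ in a compact subset of $(0,\infty)$; this likely reuses the $\mu_N$ analysis of Appendix \ref{app: IL} with an added layer of uniformity in $p_N$.
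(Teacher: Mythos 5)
Your proposal is correct and follows essentially the same route as the paper's own proof: a subsequence/limit-point argument on $c_N = Np^\star_N$, the asymptotic $\RIL{p_N}/\logp{N} \ra c\,\e{-c}/\gamma_g$ along subsequences with $c_N \ra c$, the degenerate edge cases $c \in \{0,\infty\}$, and the observation that $c\,\e{-c}$ is uniquely maximized at $c=1$ combined with the optimality of $p^\star_N$ against the $p_N = \frac{1}{N}$ benchmark. The paper's version is far terser (it defers the edge cases and the throughput asymptotic to the arguments of Theorem \ref{Theo: DIL} and Lemma \ref{Theo: Opt-Pro-DTPIL}), so the uniform tail estimates and the $p_N$-dependence of $\mu_N$ that you flag as obstacles are precisely the details the paper leaves implicit.
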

\begin{IEEEproof}
See Appendix \ref{App: Opt-Pro-DIL}.
\end{IEEEproof} 

Lemma \ref{Theo: Opt-Pro-DIL} indicates that the sequence of optimum transmission probabilities in DIL networks decays to zero at the rate of $\frac{1}{N}$. In other words, $\frac{1}{N}$ serves as a good approximation for the optimum transmission probability when $N$ is large enough.

%%%%%%%%%%%%%%%%%%%%%%%%%%%%%%%%%%%%%%%%%%%%%%%%%%%%%%%%%%%%%%%%%%%%%%%%%%%%%%%%%%%%%%%%%%%%%%%%%%%%%%%%%%%%%%%%%%%%%%%%%%%
\section{Numerical Results}\label{Sec: NR}
In this section, we numerically evaluate the sum-rate performance of DTPIL and DIL networks as a function of the number of SUs. We also compare their sum-rate scaling behavior with that of orthogonal channel access networks, \emph{i.e.,} time division multiple access (TDMA) and frequency division multiple access (FDMA) networks. In the considered orthogonal channel access schemes, the global CSI is not available at the SBS either, and communication resources ({\em i.e.,} either time or frequency) are periodically allotted to SUs regardless of their channel conditions. 

In TDMA networks, time is divided into equal length time slots, and a time slot is allocated to each SU. In FDMA networks, the total frequency band is divided into narrow-band frequency chunks, and each frequency chuck is allocated to a SU.  In orthogonal channel access networks, we assume that SUs have access to their direct and interference channel gains, and upon being scheduled for transmission, each SU adjusts its transmission power level according to a single-user water-filling power allocation policy based on the local knowledge of its channel gains. The same average total transmission power and average interference power constraints are considered for DTPIL, DIL and orthogonal channel access networks.
\begin{figure}[t]
\centering
\subfigure[]
{
\begin{tikzpicture}
\begin{axis}[title={Throughput in DTPIL and orthogonal access networks},legend style={yshift=-3cm, xshift=0.05cm},\axissetting]
                                      
\pfgsetting
\addplot+[color=blue,mark=none] table [x=N,y=43loglog]{WeiRTPIL.dat};\addlegendentry{$\frac{2}{1.5\e{}}\log\logp{N}+\frac{1}{\e{}}\logp{P_{\rm ave}}$};
\addplot+[color=magenta,mark=none,dashed] table [x=N,y=c15]{WeiRTPIL.dat};\addlegendentry{DTPIL (Weibull, $c=1.5$)};
\addplot+[color=black,mark=none,dotted] table [x=N,y=c1.5TDMA]{WeiRTPIL.dat};\addlegendentry{Orthogonal Channel Access};
\end{axis}
\end{tikzpicture}
\label{WeiRTPIL1}}
\subfigure[]
{
\begin{tikzpicture}
\begin{axis}[title={Throughput in DTPIL and orthogonal access networks},legend style={yshift=-3cm, xshift=0.05cm},\axissetting]
                                      
\pfgsetting
\addplot+[color=blue,mark=none] table [x=N,y=2025loglog]{WeiRTPIL.dat};\addlegendentry{$\frac{2}{2.5\e{}}\log\logp{N}+\frac{1}{\e{}}\logp{P_{\rm ave}}$};
\addplot+[color=magenta,mark=none,dashed] table [x=N,y=c25]{WeiRTPIL.dat};\addlegendentry{DTPIL (Weibull, $c=2.5$)};
\addplot+[color=black,mark=none,dotted] table [x=N,y=c2.5TDMA]{WeiRTPIL.dat};\addlegendentry{Orthogonal Channel Access};
\end{axis}
\end{tikzpicture}
\label{WeiRTPIL2}}
\subfigure[]
{
\begin{tikzpicture}
\begin{axis}[title={Throughput in DTPIL and orthogonal access networks},legend style={yshift=-3cm, xshift=0.05cm},\axissetting]
                                      
\pfgsetting
\addplot+[color=blue,mark=none] table [x=N,y=loglog]{RWeiTPIL.dat};\addlegendentry{$\frac{1}{\e{}}\log\logp{N}+\frac{1}{\e{}}\logp{P_{\rm ave}}$};
\addplot+[color=magenta,mark=none,dashed] table [x=N,y=c1.5]{RWeiTPIL.dat};\addlegendentry{DTPIL (Rayleigh)};
\addplot+[color=black,mark=none,dotted] table [x=N,y=c1.5TDMA]{RWeiTPIL.dat};\addlegendentry{Orthogonal Channel Access};

\end{axis}
\end{tikzpicture}
\label{RWeiTPIL}}
\subfigure[]
{
\begin{tikzpicture}
\begin{axis}[title={Throughput in DTPIL networks},legend style={yshift=-3cm, xshift=0.05cm},\axissetting]
                                      
\pfgsetting
\addplot+[color=blue,mark=none] table [x=N,y=1/N]{RRPTPIL.dat};\addlegendentry{$p_N=\frac{1}{N}$};
\addplot+[color=magenta,mark=none,dashed] table [x=N,y=1/4N]{RRPTPIL.dat};\addlegendentry{$p_N=\frac{1}{4N}$};
\addplot+[color=black,mark=none,dotted] table [x=N,y=1/10N]{RRPTPIL.dat};\addlegendentry{$p_N=\frac{1}{10N}$};
\end{axis}
\end{tikzpicture}
\label{RRPTPIL}}
\caption{ Secondary network throughput in DTPIL and orthogonal channel access networks as a function of the number of SUs for different communication environments (a)-(c). Throughput in DTPIL networks as a function of the number of SUs for different choices of $p_N$ (d). $P_{\rm ave}$ and $Q_{\rm ave}$ are set to 15dB and 0dB, respectively.}
\label{FTPIL}
\end{figure}
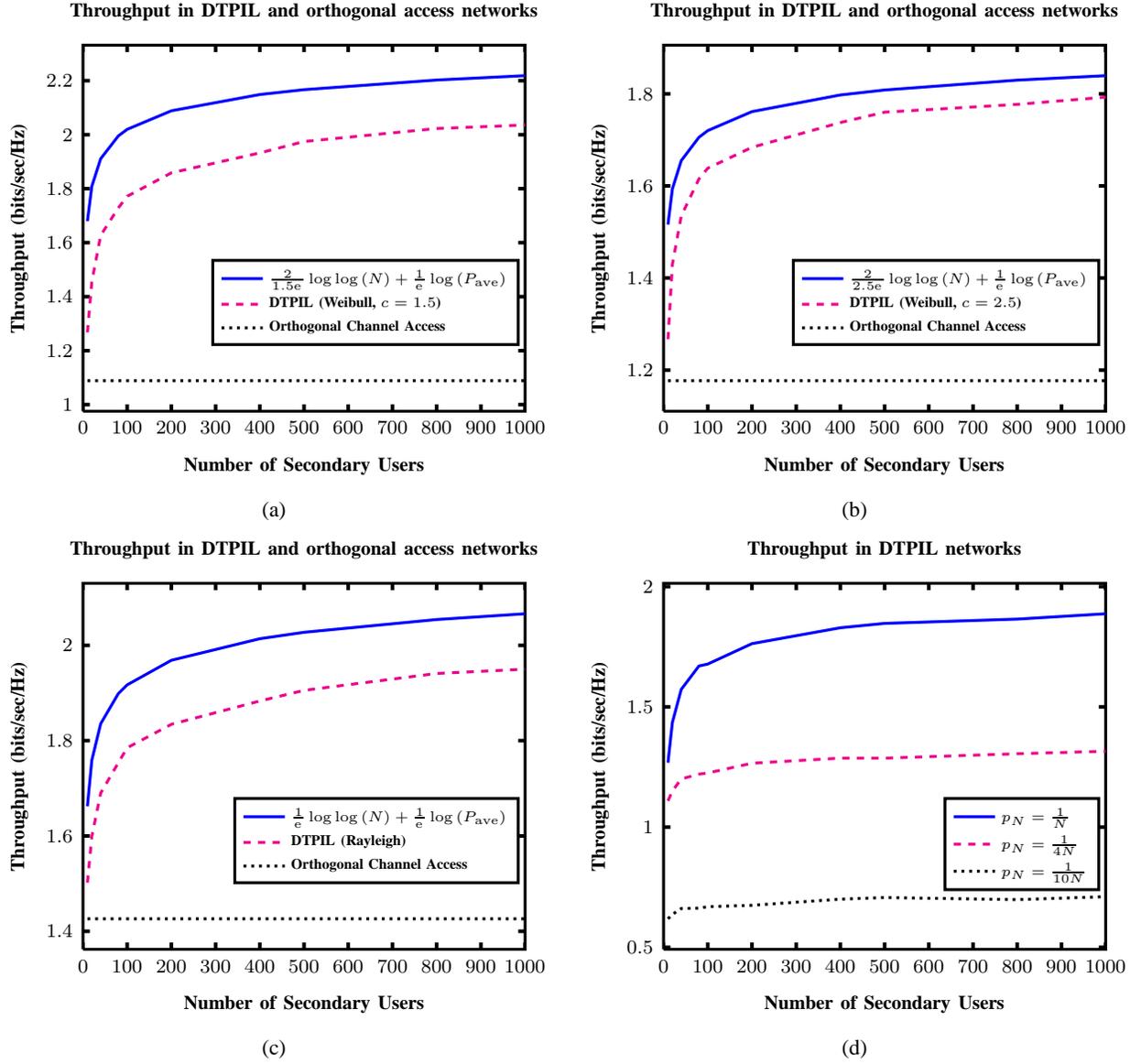

Figures \ref{FTPIL}(a)-(c) demonstrate the sum-rate scaling behavior of DTPIL and orthogonal channel access networks as a function of the number of SUs. In these figures, $P_{\rm ave}$ and $Q_{\rm ave}$ are set to 15dB and 0dB, respectively. Similar qualitative behavior continues to hold for other values of $P_{\rm ave}$ and $Q_{\rm ave}$. In Figs. \ref{FTPIL}(a)-(c), $p_N$ is set to $\frac{1}{N}$ for DTPIL networks. Also, identical fading models are considered for both DTPIL and orthogonal channel access networks.

 More specifically, in Fig. \ref{WeiRTPIL1}, direct channel gains are distributed according to the Weibull-$c$ fading model with $c=1.5$ and interference channel gains are distributed according to the Rayleigh fading model. In Fig. \ref{WeiRTPIL2}, direct channel gains are Weibull-$c$ distributed with $c=2.5$ and interference channel gains are Rayleigh distributed. As Fig. \ref{WeiRTPIL1} and Fig. \ref{WeiRTPIL2} show, the secondary network sum-rate in DTPIL networks scales according to $\frac{2}{\e{}c}\log\logp{N}$ with the number of SUs, \emph{i.e.,} $\frac{2}{1.5\e{}}\log\logp{N}$ for $c=1.5$ and $\frac{2}{2.5\e{}}\log\logp{N}$ for $c=2.5$, when direct channel gains are Weibull-$c$ distributed, a behavior predicted by Theorem \ref{Theo: DTPIL}. Also,  closeness of the simulated data rates of DTPIL networks to the curves of $\frac{2}{1.5\e{}}\log\logp{N}+\frac{1}{\e{}}\logp{P_{\rm ave}}$ and $\frac{2}{2.5\e{}}\log\logp{N}+\frac{1}{\e{}}\logp{P_{\rm ave}}$ in Fig. \ref{WeiRTPIL1} and Fig. \ref{WeiRTPIL2}, respectively, indicates the logarithmic effect of $P_{\rm ave}$ on the secondary network throughput in DTPIL networks.

In Fig. \ref{RWeiTPIL}, direct channel gains are Rayleigh distributed and interference channel gains are Weibull-$c$ distributed with $c=1.5$. As Fig. \ref{RWeiTPIL} shows, the secondary network sum-rate in DTPIL networks scales according to $\frac{1}{\e{}}\log\logp{N}$ when direct channel gains are Rayleigh distributed, which is also in accordance with Theorem \ref{Theo: DTPIL}. Also, proximity of simulated data rates of DTPIL network to the $\frac{1}{\e{}}\log\logp{N}+\frac{1}{\e{}}\logp{P_{\rm ave}}$ curve in this figure again shows the logarithmic effect of $P_{\rm ave}$ on the secondary network sum-rate in DTPIL networks. 
%In Fig \ref{RWeiTPIL}, STSB channel gains are Rayleigh distributed and STPB channel gains are Weibull distributed with $c=1.5$. In Fig. \ref{RNaTPIL}, STSB channel gains are distributed according to the Rayleigh fading model and STPB channel gains are distributed according to the Nakagami-$m$ fading model with $m=0.5$. As Fig. \ref{RWeiTPIL} and Fig. \ref{RNaTPIL} show, the secondary network sum-rate in DTPIL networks scales according to $\frac{1}{\e{}}\log\logp{N}$ when STSB channel gains are Rayleigh distributed, which is also in accordance with Theorem \ref{Theo: DTPIL}. Also, proximity of simulated data rates of DTPIL network with the graphs $\frac{1}{\e{}}\log\logp{N}+\frac{1}{\e{}}\logp{P_{\rm ave}}$, in these figures, shows the logarithmic effect of $P_{\rm ave}$ on the secondary network sum-rate in DTPIL networks. 

Moreover, as Figs. \ref{FTPIL}(a)-(c) show, the secondary network throughput in orthogonal channel access networks does not scale with the number of SUs since SUs are scheduled for transmission regardless of their channel gains, rather than being scheduled opportunistically, in these networks.  Furthermore, DTPIL networks achieve higher throughputs compared to those achieved by orthogonal channel access networks, even with possibly suboptimal choice of transmission probability (\emph{i.e.,} $p_N=\frac{1}{N}$) for small numbers of SUs.  This is due to the fact that DTPIL networks can harvest multiuser diversity gains, in a distributed fashion, without any global knowledge of CSI at the SBS. 

In Fig. \ref{RRPTPIL}, we demonstrate the secondary network throughput scaling in DTPIL networks as a function of the number of SUs when $p_N$ is set to $\frac{1}{N},\frac{1}{4N}$ and $\frac{1}{10N}$. In this figure, direct and interference channel gains are distributed according to the Rayleigh fading model. As Fig. \ref{RRPTPIL} shows, the secondary network asymptotically achieves much higher throughputs with $p_N=\frac{1}{N}$ when compared to other choices of $p_N$ that do not scale according to $\frac{1}{N}$. This finding signifies the importance of setting $p_N$ correctly to maximize secondary network sum-rates in DTPIL networks. 

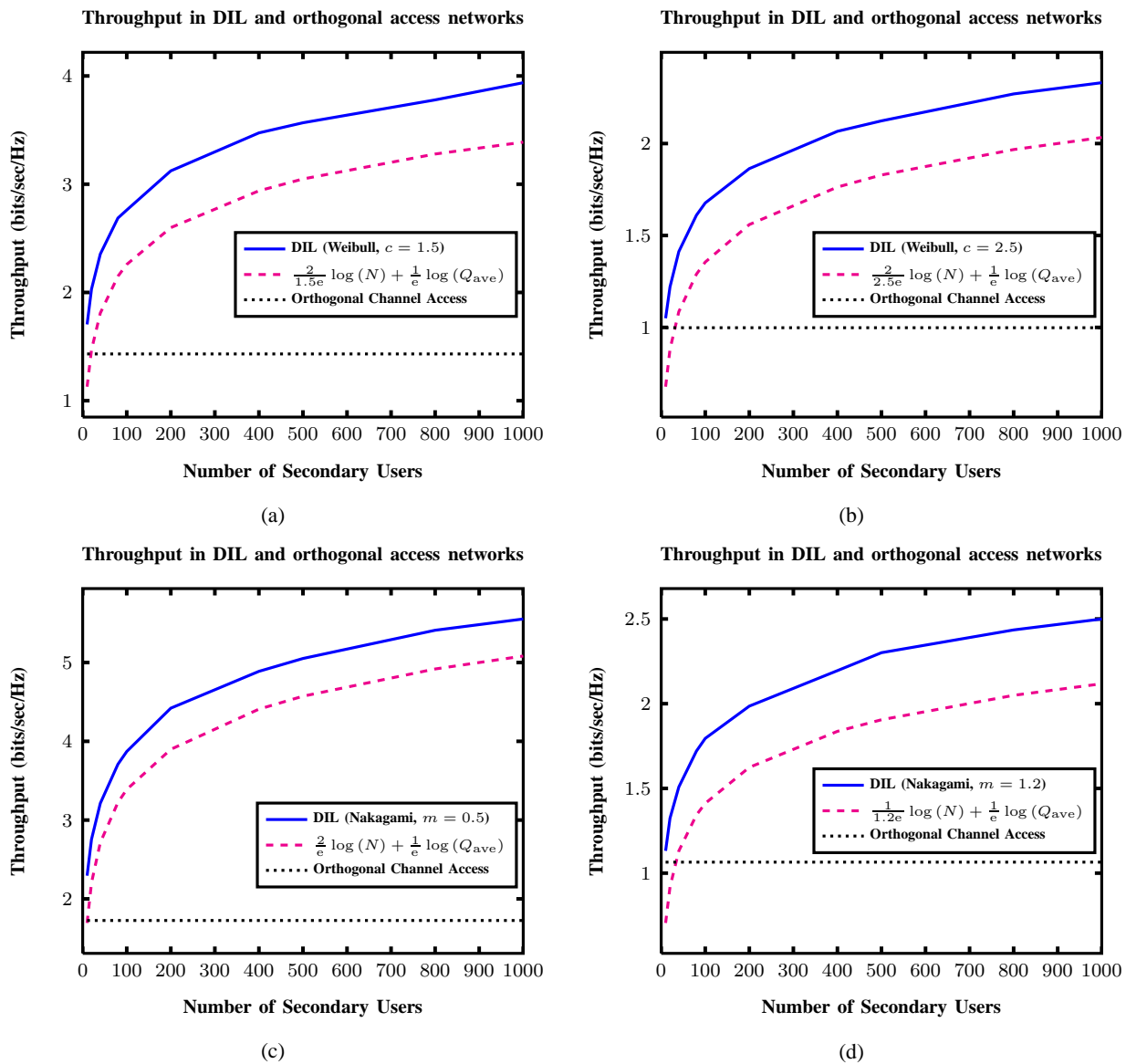
\begin{figure}[t]
\centering
\subfigure[]
{
\begin{tikzpicture}
\begin{axis}[title={Throughput in DIL and orthogonal access networks},legend style={yshift=-2.5cm, xshift=0.05cm},\axissetting]
                                      
\pfgsetting
\addplot+[color=blue,mark=none] table [x=N,y=c1.5]{RWeiIL.dat};\addlegendentry{DIL (Weibull, $c=1.5$)};
\addplot+[color=magenta,mark=none,dashed] table [x=N,y=4/3log]{RWeiIL.dat};\addlegendentry{$\frac{2}{1.5\e{}}\logp{N}+\frac{1}{\e{}}\logp{Q_{\rm ave}}$};
\addplot+[color=black,mark=none,dotted] table [x=N,y=c1.5TDMA]{RWeiIL.dat};\addlegendentry{Orthogonal Channel Access};
\end{axis}
\end{tikzpicture}
\label{RWeiIL1}}
\subfigure[]
{
\begin{tikzpicture}
\begin{axis}[title={Throughput in DIL and orthogonal access networks},legend style={yshift=-2.5cm, xshift=0.05cm},\axissetting]
                                      
\pfgsetting
\addplot+[color=blue,mark=none] table [x=N,y=c2.5]{RWeiIL.dat};\addlegendentry{DIL (Weibull, $c=2.5$)};
\addplot+[color=magenta,mark=none,dashed] table [x=N,y=4/5log]{RWeiIL.dat};\addlegendentry{$\frac{2}{2.5\e{}}\logp{N}+\frac{1}{\e{}}\logp{Q_{\rm ave}}$};
\addplot+[color=black,mark=none, dotted] table [x=N,y=c2.5TDMA]{RWeiIL.dat};\addlegendentry{Orthogonal Channel Access};
\end{axis}
\end{tikzpicture}
\label{RWeiIL2}}
\subfigure[]
{
\begin{tikzpicture}
\begin{axis}[title={Throughput in DIL and orthogonal access networks},legend style={yshift=-3cm, xshift=0.05cm},\axissetting]
                                  
\pfgsetting
\addplot+[color=blue,mark=none] table [x=N,y=m0.5]{RNakIL.dat};\addlegendentry{DIL (Nakagami, $m=0.5$)};
\addplot+[color=magenta,mark=none,dashed] table [x=N,y=2log]{RNakIL.dat};\addlegendentry{$\frac{2}{\e{}}\logp{N}+\frac{1}{\e{}}\logp{Q_{\rm ave}}$};
\addplot+[color=black,mark=none,dotted] table [x=N,y=m0.5TDMA]{RNakIL.dat};\addlegendentry{Orthogonal Channel Access};
\end{axis}
\end{tikzpicture}
\label{RNaIL1}}
\subfigure[]
{
\begin{tikzpicture}
\begin{axis}[title={Throughput in DIL and orthogonal access networks},legend style={yshift=-2.5cm, xshift=0.05cm},\axissetting]
                                      
\pfgsetting
\addplot+[color=blue,mark=none] table [x=N,y=m1.2]{RNakIL.dat};\addlegendentry{DIL (Nakagami, $m=1.2$)};
\addplot+[color=magenta,mark=none,dashed] table [x=N,y=0.8log]{RNakIL.dat};\addlegendentry{$\frac{1}{1.2\e{}}\logp{N}+\frac{1}{\e{}}\logp{Q_{\rm ave}}$};
\addplot+[color=black,mark=none, dotted] table [x=N,y=m1.2TDMA]{RNakIL.dat};\addlegendentry{Orthogonal Channel Access};

\end{axis}
\end{tikzpicture}
\label{RNaIL2}}
\caption{Secondary network throughput in DIL and orthogonal channel access networks as a function of the number of SUs for different communication environments (a)-(d). $Q_{\rm ave}$ is set to 0dB.}
\label{FIL}
\end{figure}

Figure \ref{FIL} shows the change of the secondary network sum-rate in DIL and orthogonal channel access networks as a function of the number of SUs for different communication environments. In this figure, $Q_{\rm ave}$ is set to 0dB. Similar qualitative behavior continues to hold for other values of $Q_{\rm ave}$. The transmission probability is set to $\frac{1}{N}$ for DIL networks. In Fig. \ref{RWeiIL1}, direct channel gains are distributed according to the Rayleigh fading model and interference channel gains are distributed according to the Weibull-$c$ fading model with $c=1.5$.  In Fig. \ref{RWeiIL2}, direct channel gains are Rayleigh distributed and interference channel gains are Weibull-$c$ distributed with $c=2.5$. From these figures, we can clearly observe that the secondary network throughput scales according to $\frac{2}{\e{} c} \log(N)$ as a function of the number of SUs when interference channel gains are Weibull distributed with different values of $c$. 

In Figs. \ref{RNaIL1} and \ref{RNaIL2}, direct channel gains are Rayleigh distributed and interference channel gains are Nakagami-$m$ distributed with $m$ set to $0.5$ and $1.2$, respectively. As these figures indicate, the secondary network throughput scales according to $\frac{1}{\e{}m}\logp{N}$ with the number of SUs in DIL networks  for Nakagami-$m$ distributed interference channel gains. All simulated capacity curves in Fig. 3 concur with the capacity scaling laws established in Theorem \ref{Theo: DIL}. We know that $p_N = \frac{1}{N}$ may not be the optimum choice of transmission probability for $N$ small enough, but from Fig. \ref{FIL}, we still observe that DIL networks with  $p_N = \frac{1}{N}$ outperform orthogonal channel access networks largely, in terms of the sum-rate performance, even for small numbers of SUs.

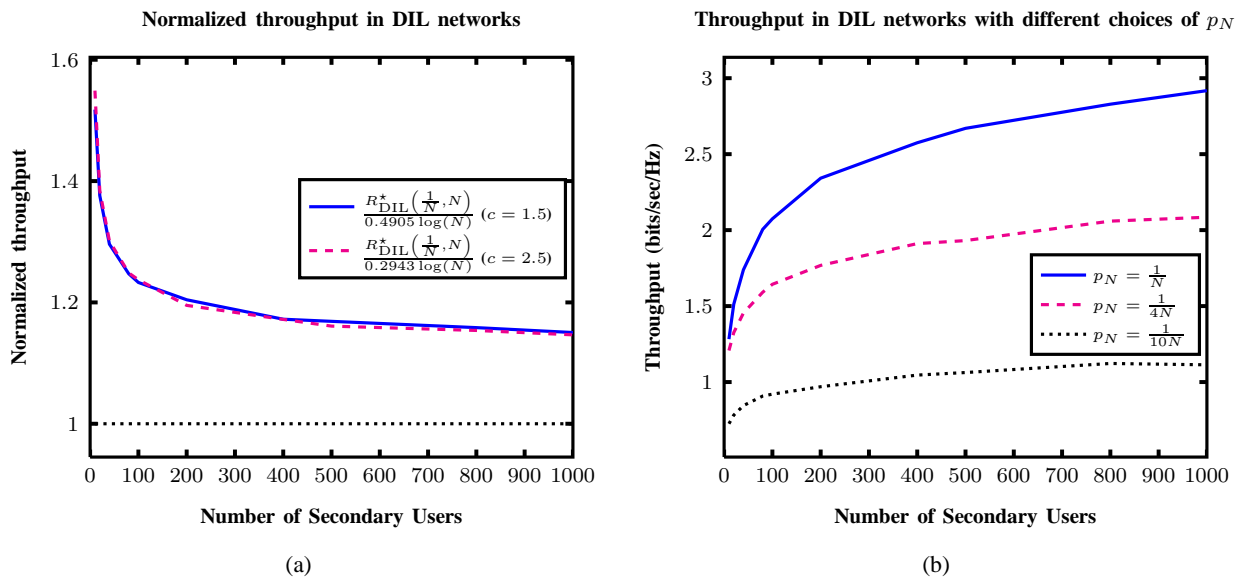
\begin{figure}
\centering
\subfigure[]
{
\begin{tikzpicture}
\begin{axis}[title={Normalized throughput in DIL networks},legend style={yshift=-1.5cm, xshift=0.05cm},\axissetting]
                                      
\pgfplotsset{
width= 8cm, 
every axis/.append style={line width=1.2pt},
label style={font=\bf\scriptsize}, 
ylabel style={yshift=-0.8em},
xlabel={Number of Secondary Users},
ylabel={Normalized throughput},
title style={font=\bf\scriptsize}, 
tick label style={font=\scriptsize,/pgf/number format/1000 sep={} },
tick style={ line width=1.5pt},
legend style={font=\bf\tiny,cells={anchor=west},solid},
every mark/.append style={solid},
}
\addplot+[color=blue,mark=none] table [x=N,y=c1.5]{RWeiILR.dat};\addlegendentry{$\frac{R^\star_{\rm DIL}\paren{\frac{1}{N},N}}{0.4905\logp{N}}$ ($c=1.5$)};
\addplot+[color=magenta,mark=none,dashed] table [x=N,y=c2.5]{RWeiILR.dat};\addlegendentry{$\frac{R^\star_{\rm DIL}\paren{\frac{1}{N},N}}{ 0.2943\logp{N}}$ ($c=2.5$)};
\addplot+[color=black,mark=none,dotted] table [x=N,y=1]{RWeiILR.dat};
\end{axis}
\end{tikzpicture}
\label{RWeiILR}}
\subfigure[]
{
\begin{tikzpicture}
\begin{axis}[title={Throughput in DIL networks with different choices of $p_N$},legend style={yshift=-2.5cm, xshift=0.05cm},\axissetting]
                                      
\pfgsetting
\addplot+[color=blue,mark=none] table [x=N,y=1/N]{RRPIL.dat};\addlegendentry{$p_N=\frac{1}{N}$};
\addplot+[color=magenta,mark=none,dashed] table [x=N,y=1/4N]{RRPIL.dat};\addlegendentry{$p_N=\frac{1}{4N}$};
\addplot+[color=black,mark=none,dotted] table [x=N,y=1/10N]{RRPIL.dat};\addlegendentry{$p_N=\frac{1}{10N}$};
\end{axis}
\end{tikzpicture}
\label{DILPR}}
\centering
\caption{Normalized throughput in DIL networks as a function of number of SUs (a). Secondary network throughput as a function of the number of SUs for different choices of $p_N$ (b). $Q_{\rm ave}$ is set to 0dB.}
\end{figure}

In Fig. \ref{RWeiILR}, we plot the normalized throughputs in DIL networks as a function of the number of SUs to further illustrate the accuracy of our scaling results. In this figure, direct channel gains are Rayleigh distributed and interference channel gains are Weibull-$c$ distributed with $c=1.5, 2.5$. As Fig. \ref{RWeiILR} shows, the sum-rate in DIL networks scales according to $\frac{2}{c\e{}}\logp{N}$, which is in harmony with Theorem \ref{Theo: DIL}.  

Figure \ref{DILPR} depicts the throughput scaling behavior of DIL networks for different selections of the transmission probability. In this figure, direct and interference channel gains are Rayleigh distributed, and $p_N$ is set to $\frac{1}{N}, \frac{1}{4N}$ and $\frac{1}{10N}$. We observe that the sum-rate performance of DIL networks under $p_N=\frac{1}{N}$ is asymptotically much higher compared to that of DIL networks under $p_N=\frac{1}{4N}$ and $p_N=\frac{1}{10N}$.  Similar to DTPIL networks, this observation indicates the importance of correct calibration of $p_N$ to maximize secondary network sum-rates in DIL networks.  

\section{Conclusions}\label{sec: conclusion}
In this paper, we have studied the optimum distributed power control problem and the throughput scaling laws for the distributed cognitive uplink. First, we have shown that the optimum distributed power control policy for the cognitive uplink is in the form of a threshold based water-filling power control. The derived optimum distributed power control policy maximizes the secondary network sum-rate subject to transmission and interference power limitations, whilst guaranteeing primary QoS requirements without any feedback signals. Second, we have derived tight throughput scaling laws for the distributed cognitive uplink by considering fading models general enough to include Rayleigh, Rician and Nakagami fading as special cases.  In particular, it has been shown that the secondary network sum-rate, under the optimum distributed power control policy, scales according to $\frac{1}{\e{}n_h}\log\logp{N}$ when transmission powers of SUs are limited by a total average transmission power constraint and a constraint on the average interference power of SUs at the PBS. Here, $n_h$ is a parameter obtained from the distribution of direct channel power gains, and $N$ is the number of SUs. It has also been shown that the secondary network sum-rate, under the optimum distributed power control policy, scales 	according to $\frac{1}{\e{}\gamma_g}\logp{N}$ when transmission powers of SUs are only limited by an average interference power constraint. Here, $\gamma_g$ is a parameter obtained from the distribution of interference channel power gains. Our throughput scaling results demonstrate that the cognitive uplink operating according to the derived optimum distributed power control policy is able to harvest multiuser diversity gains, even in a distributed fashion without any feedback between SUs and the SBS.  The pre-log multiplier $\frac{1}{\e{}}$ is the cost of distributed implementation of the cognitive uplink. 
\appendices
%%%%%%%%%%%%%%%%%%%%%%%%%%%%%%%%%%%%%%%%%%%%%%%%%%%%%%%%%%%%%%%%%%%%%%%%%%%%%%%%%%%%%%%%%%%%%%%%%%%%%%%%
\section{Proof of Theorem \ref{Theo: OPA-TPIL}}\label{App: OPA-TPIL} 
To prove Theorem \ref{Theo: OPA-TPIL}, we form a new functional optimization problem as follows 
\begin{eqnarray}\label{OP-Auxiliary-2}
\begin{array}{ll}
\underset{\tilde{P}\paren{h,g},W\paren{h,g}}{\mbox{maximize}} & \EW_{h, g}\sqparen{W\paren{h,g} \log\paren{ 1+  h\tilde{P}\paren{h,g}}}\\
\mbox{subject to} & \EW_{h,g}\sqparen{W\paren{h,g}\tilde{P}\paren{h,g}} \leq \frac{P_{\rm ave}}{N} \\
			    & \EW_{h,g} \sqparen{W\paren{h,g}g\tilde{P}\paren{h,g}} \leq \frac{Q_{\rm ave}}{N}\\
			    &\EW_{h,g}\sqparen{W\paren{h,g}}= \frac{1}{N}\\
			    &0\leq W\paren{h,g}\leq 1
\end{array},
\end{eqnarray}
where $\tilde{P}\paren{h,g}$ is a mapping from $\R_+^2$ to $\R_+$. For $\tilde{P}\paren{h,g}=P\paren{h,g}$ and $W\paren{h,g}=\I{P\paren{h,g}>0}$, the optimization problem in \eqref{OP-Auxiliary-2} reduces to the one in \eqref{OP}. Thus, the optimal value of \eqref{OP-Auxiliary-2} serves as an upper bound for the optimal value of \eqref{OP}. Later, we show that this upper bound is achievable for $N$ large enough. Using the change of variable $\Pi\paren{h,g} = \tilde{P}\paren{h,g} W\paren{h,g}$, \eqref{OP-Auxiliary-2} can be transformed into the following convex optimization problem:
\begin{eqnarray}\label{OP-Auxiliary-3}
\begin{array}{ll}
\underset{\Pi\paren{h,g},W\paren{h,g}}{\mbox{maximize}} & \EW_{h, g}\sqparen{W\paren{h,g} \log\paren{ 1+  \frac{h\Pi\paren{h,g}}{W\paren{h,g}}}}\\
\mbox{subject to} & \EW_{h,g}\sqparen{\Pi\paren{h,g}} \leq \frac{P_{\rm ave}}{N} \\
			    & \EW_{h,g} \sqparen{g\Pi\paren{h,g}} \leq \frac{Q_{\rm ave}}{N}\\
			    &\EW_{h,g}\sqparen{W\paren{h,g}}= \frac{1}{N}\\
			    &0\leq W\paren{h,g}\leq 1
\end{array}, 
\end{eqnarray}
It can be shown that the objective function in \eqref{OP-Auxiliary-3} as a function of $\Pi$ and $W$ is concave on $\Rp^2$. The Lagrangian for \eqref{OP-Auxiliary-3} can be written as
\begin{eqnarray}
L\paren{\Pi, W, \lambda_N,\mu_N,\eta_N}= W\paren{h,g} \log\paren{ 1+  \frac{h\Pi\paren{h,g}}{W\paren{h,g}}}-\lambda_N \Pi\paren{h,g}- \mu_N g\Pi\paren{h,g} -\eta_N W\paren{h,g}\nonumber
\end{eqnarray}
where $\lambda_N\geq0$, $\mu_N\geq0$ and $\eta_N$ are Lagrange multipliers associated with the average transmit power, average interference power and  transmission probability constraints, respectively.  Let $\Pi^\star\paren{h,g}$ and $W^\star\paren{h,g}$ be the solutions of \eqref{OP-Auxiliary-3}. Using generalized Karush-Kuhn-Tucker (KKT) conditions \cite{Luenberger68, Boyd}, we have 
\begin{eqnarray}
\frac{\partial L\paren{\Pi, W^\star, \lambda_N,\mu_N,\eta_N}}{\partial \Pi\paren{h,g}} \Big|_{\Pi = \Pi^\star}
=\frac{h}{1+\frac{h\Pi^\star\paren{h,g}}{W^\star\paren{h,g}}}-\lambda_N -\mu_N \left\{
\begin{array}{cc}
 =0 &  \Pi^\star\paren{h,g}>0  \nonumber\\
 \leq 0 &  \Pi^\star\paren{h,g}=0 
\end{array},
\right. 
\end{eqnarray}
which implies $\tilde{P}^\star\paren{h,g}=\paren{\frac{1}{\lambda_N+\mu_N g}-\frac{1}{h}}^+$. From KKT conditions, we also need to have
\begin{eqnarray}
\lefteqn{\frac{\partial L\paren{\Pi^\star, W, \lambda_N,\mu_N,\eta_N}}{\partial W\paren{h,g}} \Big|_{W = W^\star}}\hspace{16cm}\nonumber\\ \lefteqn{=\logp{1+h\tilde{P}^\star\paren{h,g}}-\lambda_N \tilde{P}^\star\paren{h,g} - \mu_N g\tilde{P}^\star\paren{h,g}-\eta_N
\left\{
\begin{array}{cc}
 =0 &  0<W^\star\paren{h,g}<1  \nonumber\\
 \leq 0 &  W^\star\paren{h,g}=0 \nonumber\\
  \geq 0 &  W^\star\paren{h,g}=1 
\end{array},
\right.}\hspace{14cm}
\end{eqnarray}
For $\frac{\partial L\paren{\Pi^\star, W, \lambda_N,\mu_N,\eta_N}}{\partial W\paren{h,g}}=0$, we have $\logp{1+h\tilde{P}^\star\paren{h,g}}-\lambda_N \tilde{P}^\star\paren{h,g} -\mu_N g\tilde{P}^\star\paren{h,g}=\eta_N$, which happens with zero probability since fading channel gains have continuous distributions. Thus, $W^\star\paren{h,g}\in\left\{0,1\right\}$ with probability one. For $\frac{\partial L\paren{\Pi^\star, W, \lambda_N,\mu_N,\eta_N}}{\partial W\paren{h,g}}\geq0$, we have
\begin{eqnarray}\label{Eq: A}
\logp{1+h\tilde{P}^\star\paren{h,g}}-\lambda_N \tilde{P}^\star\paren{h,g} -\mu_N g\tilde{P}^\star\paren{h,g}-\eta_N\geq 0.
\end{eqnarray}
Substituting $\tilde{P}^\star\paren{h,g}$ in \eqref{Eq: A}, we have  
\begin{eqnarray}\label{Eq: B}
\paren{\logp{\frac{h}{\lambda_N+\mu_N g}}+\frac{\lambda_N+\mu_N g}{h}-1}\I{\frac{h}{\lambda_N+\mu_N g}\geq 1}\geq \eta_N.
\end{eqnarray}

Since $G\paren{x}=\logp{x}+\frac{1}{x}-1$ is monotonically increasing for $x \geq 1$, \eqref{Eq: B} implies that $W^\star\paren{h,g}$ can be chosen as $W^\star\paren{h,g}=\I{\frac{h}{\lambda_N+\mu_N g}\geq \ICRDTPIL{\frac{1}{N}}}$. It can be shown that $\lambda_N\leq \frac{1}{P_{\rm ave}}$ and $\mu_N\leq \frac{1}{Q_{\rm ave}}$. Hence, we have $\ICRDTPIL{\frac{1}{N}}\geq F^{-1}_{\frac{1}{P_{\rm ave}},\frac{1}{Q_{\rm ave}}}\paren{1-\frac{1}{N}}\geq 1$ for $N$ large enough. This implies that $P_{\rm DTPIL}\paren{h,g}=\tilde{P}^\star\paren{h,g}W^\star\paren{h,g}$ is also a feasible solution for \eqref{OP} when $N$ is large enough. For $P_{\rm DTPIL}\paren{h,g}=\tilde{P}^\star\paren{h,g}W^\star\paren{h,g}$, the value of objective function in \eqref{OP} is equal to the optimal value of \eqref{OP-Auxiliary-2}, which completes the proof.
%%%%%%%%%%%%%%%%%%%%%%%%%%%%%%%%%%%%%%%%%%%%%%%%%%%%%%%%%%%%%%%%%%%%%%%%%%%%%%%%%%%%%%%%%%%%%%%%%%%%%%%%
\section{Throughput Scaling in DTPIL Networks}\label{app: TPIL}
In this appendix, we first establish some preliminary results.  Then, we use these results to prove Theorem \ref{Theo: DTPIL}.  Lemma \ref{Lem: inv-scaling} below establishes the asymptotic behavior of $F^{-1}_{\lambda,\mu}\paren{x}$, which is the {\em functional} inverse of the common CDF of joint channel states $\frac{h_i}{\lambda+\mu g_i}$, $i = 1, 2, \ldots, N$, as $x$ becomes close to one.  %Lemma \ref{Lem: inv-scaling} will be used to study asymptotic behavior of $\lambda_N$ as $N$ becomes large. 
\begin{lemma}\label{Lem: inv-scaling}
Let $F_{\lambda,\mu}\paren{x}$ be the common CDF of joint channel states $\frac{h_i}{\lambda+\mu g_i}$, $i = 1, 2, \ldots, N$, where $\lambda > 0$ and $\mu \geq 0$ are constants. Then, as $x$ becomes close to one, its functional inverse $F^{-1}_{\lambda,\mu}\paren{x}$ scales according to
\begin{eqnarray}
\lim_{x\uparrow1}\frac{F^{-1}_{\lambda,\mu}\paren{x}}{\frac{1}{\lambda}\paren{-\frac{1}{\beta_h}\logp{1-x}}^\frac{1}{n_h}}=1.\nonumber
\end{eqnarray}
\end{lemma}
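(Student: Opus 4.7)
The plan is to first pin down the leading-order tail exponent of $1-F_{\lambda,\mu}(t)$ as $t\to\infty$, and then invert. Starting from
\[
1 - F_{\lambda,\mu}(t) = \PR{\frac{h}{\lambda+\mu g}>t} = \ES{1 - F_h(t(\lambda+\mu g))},
\]
I would sandwich this expectation between two tractable bounds. An upper bound follows from $g\geq 0$ and monotonicity of $F_h$: $1 - F_{\lambda,\mu}(t) \leq 1 - F_h(\lambda t)$. A matching lower bound comes from restricting the expectation to the event $\{g\leq \epsilon\}$ for a fixed $\epsilon>0$, giving $1 - F_{\lambda,\mu}(t) \geq F_g(\epsilon)\paren{1 - F_h(t(\lambda+\mu\epsilon))}$. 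Here $F_g(\epsilon)>0$ because class-$\mathcal{C}$ CDFs are strictly increasing on $\Rp$ with $F_g(x)\sim \eta_g x^{\gamma_g}$ near the origin.

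Next I would extract the logarithmic tail exponent. The class-$\mathcal{C}$ tail condition for $F_h$, together with $H_h(x)=\LO{x^{n_h}}$ and $\log\paren{\alpha_h x^{l_h}}=\LO{x^{n_h}}$, gives $\log\paren{1-F_h(x)} = -\beta_h x^{n_h} + \LO{x^{n_h}}$ as $x\to\infty$. Taking logarithms of the two sandwich bounds and dividing by $t^{n_h}$, the constant $\log F_g(\epsilon)$ and all polynomial-prefactor contributions vanish, leaving
\[
-\beta_h(\lambda+\mu\epsilon)^{n_h} \leq \liminf_{t\to\infty}\frac{\log\paren{1-F_{\lambda,\mu}(t)}}{t^{n_h}} \leq \limsup_{t\to\infty}\frac{\log\paren{1-F_{\lambda,\mu}(t)}}{t^{n_h}} \leq -\beta_h \lambda^{n_h}.
\]
Sending $\epsilon\downarrow 0$ pins the limit down to $-\beta_h\lambda^{n_h}$.

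Inversion is then routine: substituting $t=F^{-1}_{\lambda,\mu}(x)$ makes $1-F_{\lambda,\mu}(t)$ equal to $1-x$, with $t\to\infty$ as $x\uparrow 1$. The displayed limit rearranges to $t^{n_h}\sim -\log(1-x)/(\beta_h\lambda^{n_h})$; taking $n_h$-th roots recovers the claim. The main obstacle is conceptual rather than technical. One might worry that the mass of $g$ near zero ought to push a non-trivial prefactor into the tail of $F_{\lambda,\mu}$ through the regular-variation exponent $\gamma_g$, as in a Laplace-type expansion of $\ES{1-F_h(t(\lambda+\mu g))}$. Because the lemma only asserts the leading ratio at the scale $(-\log(1-x)/\beta_h)^{1/n_h}$, every such polynomial-in-$t$ factor becomes $\LO{1}$ after dividing by $t^{n_h}$, so no fine Laplace analysis is needed, and the parameters $\alpha_h$, $l_h$, $\eta_g$, $\gamma_g$, and $H_h$ only influence sub-leading corrections that would appear at the next order of accuracy.
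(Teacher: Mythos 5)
Your proof is correct, but it takes a genuinely different and more elementary route than the paper's. The paper treats $\frac{h}{\lambda+\mu g}$ as a product of the independent variables $h$ and $\frac{1}{\lambda+\mu g}$ and invokes Sarantsev's product-tail theorem to obtain the \emph{exact} asymptotic form of $1-F_{\lambda,\mu}\paren{t}$, namely $C t^{l_h-n_h\gamma_g}\e{-\beta_h\paren{\lambda t}^{n_h}}$ with an explicit constant $C$ depending on $\eta_g,\gamma_g,\alpha_h$; since that theorem requires $H\equiv 0$, the paper additionally sandwiches $F_h$ between two auxiliary tails with exponents $\beta_h\pm\epsilon$ and no slowly varying correction, inverts each bound, and lets $\epsilon\downarrow 0$. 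You instead observe that the lemma only concerns the quantile ratio, which is governed entirely by the leading term of $\log\paren{1-F_{\lambda,\mu}\paren{t}}$, and you pin that down with the crude sandwich $F_g\paren{\epsilon}\paren{1-F_h\paren{t\paren{\lambda+\mu\epsilon}}}\leq 1-F_{\lambda,\mu}\paren{t}\leq 1-F_h\paren{\lambda t}$ followed by $\epsilon\downarrow 0$; the condition $H_h\paren{x}=\LO{x^{n_h}}$ is absorbed directly into the $\LO{t^{n_h}}$ error, so no perturbation of $\beta_h$ is needed. Your argument is self-contained (no external product-tail theorem, no Laplace-type expansion) and suffices for everything the paper actually uses Lemma~1 for, which is the asymptotics of the quantiles $F^{-1}_{\lambda,\mu}\paren{1-N^{\pm\epsilon-1}}$ in Lemma~2; what it gives up is the polynomial prefactor and constant in the tail of $F_{\lambda,\mu}$, which the paper's method delivers but does not subsequently exploit. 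The only points worth making explicit in a final write-up are that $F_{\lambda,\mu}$ inherits continuity and strict monotonicity from $F_h$ (so that $F_{\lambda,\mu}\paren{F^{-1}_{\lambda,\mu}\paren{x}}=x$ and $F^{-1}_{\lambda,\mu}\paren{x}\tends\infty$ as $x\uparrow 1$), and that the case $\mu=0$ is immediate since then $F_{\lambda,0}\paren{t}=F_h\paren{\lambda t}$.
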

\begin{IEEEproof}
We only focus on the case where both $\lambda$ and $\mu$ are strictly positive.  The proof of the remaining case in which $\lambda >0$ and $\mu = 0$ is easier and follows from the same lines.  To prove the desired result, we first obtain the asymptotic behavior of $F_{\lambda,\mu}\paren{x}$ as $x$ becomes large.  Note that $F_{\lambda,\mu}\paren{x}$ is the CDF of the product of two independent random variables, {\em i.e.,} $h_i$ and $\frac{1}{\lambda + \mu g_i}$, and the asymptotic tail behavior for the product of two independent random variables has been studied in \cite{Andrey} for the case of $H\paren{x} = 0$. Since $H\paren{x}$ is not necessarily equal to zero for the class $\mathcal{C}$-distributions, {\em i.e.,} see the Rician fading model in Table \ref{Table: FP}, we need to upper and lower bound the tail of $F_{\lambda,\mu}\paren{x}$ by using distribution functions with $H\paren{x} = 0$.  To this end, we let $h_{+\epsilon}$ and $h_{-\epsilon}$ be two random variables, independent of $g_i$, with respective CDFs $F_{+\epsilon}(x)$ and  $F_{-\epsilon}(x)$ satisfying 
$\label{Eq: tail-condition-1}
 \lim_{x\ra\infty}\frac{1-F_{+\epsilon}(x)}{\alpha_h x^{l_h}\e{-\paren{\beta_h-\epsilon} x^{n_h}}}=\lim_{x\ra\infty}\frac{1-F_{-\epsilon}(x)}{\alpha_h x^{l_h}\e{-\paren{\beta_h+\epsilon} x^{n_h}}}=1 \nonumber
$
for $\epsilon > 0$ small enough. 

Let $F_{+\epsilon,\lambda,\mu}\paren{x}$ and $F_{-\epsilon,\lambda,\mu}\paren{x}$ be the CDFs of $\frac{h_{+\epsilon}}{\lambda+\mu g_i}$ and $\frac{h_{-\epsilon}}{\lambda+\mu g_i}$, respectively.  Let also $F_{h}(x)$ be the CDF of $h_i$.  Observing that $F_{+\epsilon}\paren{x}\leq F_{h}\paren{x}\leq F_{-\epsilon}\paren{x}$ for $x$ large enough, we can upper and lower bound $F_{\lambda,\mu}(x)$ as 
%Note that $F_{\lambda,\mu}(x)$ can be written as $F_{\lambda,\mu}(x)=\mathsf{E}_{g_i}\sqparen{F_{h_i}\paren{\paren{\lambda+\mu g_i}x}}$. Since $g_i$ is a positive random variable and $\lambda$ and $\mu$ are positive constants, $F_{\lambda,\mu}(x)$ can be upper and lower bounded as     
\begin{eqnarray}
F_{+\epsilon,\lambda,\mu}\paren{x} = \EW\sqparen{F_{+\epsilon}\paren{\paren{\lambda+\mu g_i}x}} \leq F_{\lambda,\mu}\paren{x} = \EW\sqparen{F_{h}\paren{\paren{\lambda+\mu g_i}x}}\leq F_{-\epsilon,\lambda,\mu}\paren{x} = \EW\sqparen{F_{-\epsilon}\paren{\paren{\lambda+\mu g_i}x}} \label{Eqn: CDF Bounds}
\end{eqnarray}
for $x$ large enough, where expectations are taken over interference channel states.  Using Theorem 3 in \cite{Andrey}, the asymptotic tail behavior of $F_{+\epsilon, \lambda,\mu}\paren{x}$ can be shown to satisfy $\lim_{x\ra\infty}\frac{1-F_{+\epsilon,\lambda,\mu}\paren{x}}{Cx^{l_h-n_h\gamma_g}\e{-\paren{\beta_h-\epsilon}\paren{\lambda x}^{n_h}}}=1$, where $C=\eta_g\alpha_h\Gamma\paren{\gamma_g+1}\paren{\frac{\lambda^2}{\mu\paren{\beta_h-\epsilon} {n_h}}}^{\gamma_g}\paren{\frac{1}{\lambda}}^{n_h\gamma_g+\gamma_g-l_h}$ and $\Gamma\paren{\cdot}$ is the Gamma function.  This result implies that the functional inverse $F^{-1}_{+\epsilon,\lambda,\mu}\paren{x}$ of $F_{+\epsilon, \lambda,\mu}\paren{x}$ behaves according to $\lim_{x\uparrow1}\frac{F^{-1}_{+\epsilon,\lambda,\mu}\paren{x}}{\frac{1}{\lambda}\paren{-\frac{1}{\paren{\beta_h-\epsilon}}\logp{1-x}}^\frac{1}{n_h}}=1$ as $x$ becomes close to one.  Following the same steps, we also have $\lim_{x\uparrow1}\frac{F^{-1}_{-\epsilon,\lambda,\mu}\paren{x}}{\frac{1}{\lambda}\paren{-\frac{1}{\paren{\beta_h+\epsilon}}\logp{1-x}}^\frac{1}{n_h}}=1$.  Using \eqref{Eqn: CDF Bounds}, $F^{-1}_{\lambda,\mu}\paren{x}$ can be upper and lower bounded as $F^{-1}_{-\epsilon,\lambda,\mu}\paren{x}\leq F^{-1}_{\lambda,\mu}(x)\leq F^{-1}_{+\epsilon,\lambda,\mu}\paren{x}$ for $x$ close enough to one.  Since $\epsilon$ can be chosen arbitrarily close to zero, we have
\begin{eqnarray}
\lim_{x\uparrow1}\frac{F^{-1}_{\lambda,\mu}\paren{x}}{\frac{1}{\lambda}\paren{-\frac{1}{\beta_h}\logp{1-x}}^\frac{1}{n_h}}=1,\nonumber
\end{eqnarray}
which completes the proof.   
\end{IEEEproof}

Next, by using Lemma \ref{Lem: inv-scaling}, we establish the asymptotic behavior for the extreme order statistic of the collection of random variables $\brparen{\frac{h_i}{\lambda+\mu g_i}}_{i=1}^N$. The derived convergence behavior will be helpful for studying the asymptotic behavior of $\lambda_N$, and in turn, for proving Theorem \ref{Theo: DTPIL}.
\begin{lemma}\label{Lem: EOS-Conc}
Let $X^\star_N\paren{\lambda,\mu}=\max_{1\leq i\leq N}\frac{h_i}{\lambda+\mu g_i}$ for $\lambda > 0$ and $\mu \geq 0$.  Then, $\frac{X^\star_N\paren{\lambda,\mu}}{{\paren{\frac{1}{\beta_h}\logp{N}}^\frac{1}{n_h}}} \xrightarrow{i.p.}\frac{1}{\lambda}$ as $N$ tends to infinity, where $i.p.$ stands for convergence in probability.
\end{lemma}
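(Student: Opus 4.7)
The plan is to prove convergence in probability by separately bounding the upper and lower tails of $X^\star_N/a_N$ around $1/\lambda$, where $a_N=\paren{\frac{1}{\beta_h}\logp{N}}^{1/n_h}$. The only analytical input needed is Lemma \ref{Lem: inv-scaling}: substituting $x=1-N^{-\alpha}$ for a fixed $\alpha>0$ into that lemma's scaling relation, and using $-\logp{N^{-\alpha}}=\alpha\logp{N}$, produces the quantile-scale dictionary
\begin{equation*}
\lim_{N\to\infty}\frac{F^{-1}_{\lambda,\mu}\paren{1-N^{-\alpha}}}{a_N/\lambda}=\alpha^{1/n_h}.
\end{equation*}
The idea is then to insert an $\alpha_1>1$ with $\alpha_1^{1/n_h}<1+\lambda\delta$ for the upper tail, and an $\alpha_2<1$ with $\alpha_2^{1/n_h}>1-\lambda\delta$ for the lower tail, so that the target points $\paren{1/\lambda\pm\delta}a_N$ are sandwiched by convenient quantiles of $F_{\lambda,\mu}$.

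For the upper tail, fix a small $\delta>0$ and pick any $\alpha_1\in\paren{1,\paren{1+\lambda\delta}^{n_h}}$, which is nonempty since $n_h>0$ and $\delta>0$. The dictionary above then gives $F^{-1}_{\lambda,\mu}\paren{1-N^{-\alpha_1}}\leq\paren{1/\lambda+\delta}a_N$ for $N$ large enough, and the union bound over the $N$ i.i.d. summands $\frac{h_i}{\lambda+\mu g_i}$ yields
\begin{equation*}
\PR{X^\star_N>\paren{\tfrac{1}{\lambda}+\delta}a_N}\leq N\paren{1-F_{\lambda,\mu}\paren{F^{-1}_{\lambda,\mu}\paren{1-N^{-\alpha_1}}}}=N^{1-\alpha_1}\to 0,
\end{equation*}
where continuity of $F_{\lambda,\mu}$ (inherited from the continuity of $F_h$) is used to collapse $F_{\lambda,\mu}\circ F^{-1}_{\lambda,\mu}$. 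For the lower tail, pick any $\alpha_2\in\paren{\paren{1-\lambda\delta}^{n_h},1}$, nonempty for $\delta\in(0,1/\lambda)$; the dictionary gives $F^{-1}_{\lambda,\mu}\paren{1-N^{-\alpha_2}}\geq\paren{1/\lambda-\delta}a_N$ for large $N$, so
\begin{equation*}
\PR{X^\star_N\leq\paren{\tfrac{1}{\lambda}-\delta}a_N}\leq\paren{1-N^{-\alpha_2}}^N\leq\e{-N^{1-\alpha_2}}\to 0.
\end{equation*}
Summing the two bounds gives the claim for every admissible $\delta$.

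The main obstacle is minor but must be checked: one must verify that Lemma \ref{Lem: inv-scaling}, stated as $x\uparrow 1$ along the continuum, can be instantiated along the particular sequence $1-N^{-\alpha}$; this is immediate because the limit in that lemma holds along every sequence approaching $1$ from below. The rest of the argument uses only the elementary inequalities $\PR{X^\star_N>t}\leq N(1-F(t))$ and $(1-y)^N\leq\e{-Ny}$, so no new tail estimate beyond what was already developed in the proof of Lemma \ref{Lem: inv-scaling} is required.
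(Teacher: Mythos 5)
Your proposal is correct and follows essentially the same route as the paper: both arguments sandwich $X^\star_N\paren{\lambda,\mu}$ between quantiles $F^{-1}_{\lambda,\mu}\paren{1-N^{-\alpha}}$ with $\alpha$ slightly below and above $1$, and then invoke Lemma \ref{Lem: inv-scaling} to translate those quantiles into the scale $\frac{1}{\lambda}\paren{\frac{1}{\beta_h}\logp{N}}^{1/n_h}$. The only difference is that the paper outsources the two-sided concentration of the extreme order statistic to Lemma 2 of \cite{NIDSubmitted}, whereas you prove it inline via the union bound $\PRW\brparen{X^\star_N>t}\leq N\paren{1-F_{\lambda,\mu}\paren{t}}$ and the estimate $\paren{1-N^{-\alpha_2}}^N\leq\e{-N^{1-\alpha_2}}$, which is exactly the content of that citation.
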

\begin{IEEEproof}
Let $F_{\lambda,\mu}\paren{x}$ be the CDF of $\frac{h_i}{\lambda+\mu g_i}$ as in Lemma \ref{Lem: inv-scaling}. Using Lemma 2 in \cite{NIDSubmitted}, the concentration behavior of $X^\star_N\paren{\lambda, \mu}$ can be given as  
\begin{eqnarray}
\lim_{N \ra \infty} \PR{F^{-1}_{\lambda,\mu}\paren{1 - N^{\epsilon - 1}} \leq X^\star_N\paren{\lambda, \mu} \leq F^{-1}_{\lambda,\mu}\paren{1 - N^{-\epsilon - 1}}} = 1 \label{Eqn: Extreme Order Scaling}
\end{eqnarray}
for all $\epsilon > 0$ small enough. 
Using Lemma \ref{Lem: inv-scaling} above and \eqref{Eqn: Extreme Order Scaling}, we have 
\begin{eqnarray}
\lim_{N \ra \infty} \PR{ \frac{1}{\lambda}\paren{1-\epsilon}^{\frac{1}{n_h}} \leq \frac{X^\star_N\paren{\lambda, \mu}}{{\paren{\frac{1}{\beta_h}\logp{N}}^\frac{1}{n_h}}} \leq \frac{1}{\lambda}\paren{1+ \epsilon}^{\frac{1}{n_h}}} = 1,\nonumber
\end{eqnarray}
which implies the convergence of $\frac{X^\star_N\paren{\lambda,\mu}}{{\paren{\frac{1}{\beta_h}\logp{N}}^\frac{1}{n_h}}}$ to $\frac{1}{\lambda}$ in probability. % since $\epsilon$ can be chosen arbitrarily close to zero.    
\end{IEEEproof}

In the next lemma, we show that $\lambda_N$ converges to $\frac{1}{P_{\rm ave}}$ as $N$ becomes large.  This lemma will be used to quantify the effect of the average total power constraint $P_{\rm ave}$ on the secondary network throughput in DTPIL networks.
\begin{lemma}\label{Lem: lambda-conv}
Let $\lambda_N$ be the power control parameter in DTPIL networks. Then, $\lim_{N\ra\infty}\lambda_N=\frac{1}{P_{\rm ave}}$.
\end{lemma}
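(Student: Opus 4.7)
The plan is to prove $\lambda_N\to 1/P_{\rm ave}$ via a sandwich argument. Since the Lagrange multiplier $\lambda_N>0$ in the KKT analysis of Appendix~\ref{App: OPA-TPIL}, the average transmission power constraint in \eqref{OP} is active at the optimum, giving
\begin{eqnarray}
N\,\EW\sqparen{P^\star_{\rm DTPIL}\paren{h,g}}=P_{\rm ave}.\nonumber
\end{eqnarray}
Writing the optimum power control of Theorem~\ref{Theo: OPA-TPIL} as
\begin{eqnarray}
P^\star_{\rm DTPIL}\paren{h,g}=\frac{1}{\lambda_N+\mu_N g}\paren{1-\frac{1}{X}}^+\I{X>T_N},\nonumber
\end{eqnarray}
with $X=\frac{h}{\lambda_N+\mu_N g}$ and $T_N=\ICRDTPIL{\frac{1}{N}}$, the upper bound $\lambda_N\leq 1/P_{\rm ave}$ is immediate from $(1-1/X)^+\leq 1$ and $\frac{1}{\lambda_N+\mu_N g}\leq\frac{1}{\lambda_N}$ combined with $\PRP{X>T_N}=1/N$. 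In particular, $\{\lambda_N\}$ is bounded, and the analogous argument in Appendix~\ref{App: OPA-TPIL} gives $\mu_N\leq 1/Q_{\rm ave}$, so $\{\mu_N\}$ is bounded as well.

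For the matching lower bound, I would show $\liminf_N N\,\EW\sqparen{P^\star_{\rm DTPIL}\paren{h,g}}\geq 1/\lambda_\infty$ along every subsequence on which $\paren{\lambda_N,\mu_N}\to\paren{\lambda_\infty,\mu_\infty}$ (available by compactness of $[0,1/P_{\rm ave}]\times[0,1/Q_{\rm ave}]$). By Lemma~\ref{Lem: inv-scaling}, $T_N\to\infty$, so $(1-1/X)^+\geq 1-1/T_N=1-o(1)$ on the transmission event $\{X>T_N\}$. For arbitrary $\epsilon>0$, splitting this event according to whether $g\leq\epsilon$ or $g>\epsilon$ yields
\begin{eqnarray}
N\,\EW\sqparen{P^\star_{\rm DTPIL}\paren{h,g}}\geq\paren{1-\frac{1}{T_N}}\frac{1}{\lambda_N+\mu_N\epsilon}\,N\,\PRP{X>T_N,\,g\leq\epsilon}.\nonumber
\end{eqnarray}
The crucial step is the concentration estimate $N\,\PRP{X>T_N,\,g>\epsilon}\to 0$, which combined with $N\,\PRP{X>T_N}=1$ forces $N\,\PRP{X>T_N,\,g\leq\epsilon}\to 1$. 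Passing to the limit along the chosen subsequence gives $P_{\rm ave}\geq 1/\paren{\lambda_\infty+\mu_\infty\epsilon}$, and letting $\epsilon\downarrow 0$ delivers $\lambda_\infty\geq 1/P_{\rm ave}$. Combined with the upper bound, every subsequential limit equals $1/P_{\rm ave}$, so $\lambda_N\to 1/P_{\rm ave}$.

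The main obstacle is the concentration estimate $N\,\PRP{X>T_N,\,g>\epsilon}\to 0$, which encodes the intuition that a very large value of $X=\frac{h}{\lambda_N+\mu_N g}$ is driven by an atypically large $h$, forcing $g$ to be small on the transmission event. By the independence of $h$ and $g$,
\begin{eqnarray}
N\,\PRP{X>T_N,\,g>\epsilon}=N\int_\epsilon^\infty\bar F_h\paren{\paren{\lambda_N+\mu_N g}T_N}\,dF_g\paren{g}.\nonumber
\end{eqnarray}
Using the class-$\mathcal{C}$ tail $\bar F_h\paren{y}\sim\alpha_h y^{l_h}\e{-\beta_h y^{n_h}+H_h\paren{y}}$ of Definition~\ref{Def1} together with the asymptotic $\beta_h\paren{\lambda_N T_N}^{n_h}=\logp{N}\paren{1+o\paren{1}}$ that follows from Lemma~\ref{Lem: inv-scaling}, the integrand decays like $N^{-\paren{1+\mu_N g/\lambda_N}^{n_h}}$ up to polylogarithmic factors. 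When $\mu_\infty>0$, this is $O\paren{N^{-\paren{1+\delta}}}$ uniformly on $\{g>\epsilon\}$ for some $\delta>0$, and the required bound is obtained after integrating against $F_g$. The degenerate case $\mu_\infty=0$ is handled separately: here $\mu_N g\to 0$ in probability so the power control collapses to $P^\star\approx\paren{1/\lambda_N-1/h}^+\I{h>\lambda_N T_N}$ and $N\,\EW\sqparen{P^\star_{\rm DTPIL}\paren{h,g}}\to 1/\lambda_\infty$ follows from dominated convergence using $F_h^{-1}\paren{1-1/N}\to\infty$. The analytically delicate piece is keeping the slowly varying correction $H_h$ uniform in $g$ over $\{g>\epsilon\}$, which is controlled by the same sandwich construction employed in the proof of Lemma~\ref{Lem: inv-scaling}.
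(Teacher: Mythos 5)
Your strategy --- sandwiching $\lambda_N$ via the average-power identity, with the upper bound from $P^\star_{\rm DTPIL}\paren{h,g}\leq\frac{1}{\lambda_N}\I{X>T_N}$ and the lower bound from showing that on the transmission event the water level concentrates at $\frac{1}{\lambda_N}$ --- is a genuinely different execution from the paper's. The paper never forms the estimate $N\,\PR{X>T_N,\,g>\epsilon}\ra 0$; instead it lower-bounds the consumed power by $\frac{1}{\lambda_N}\cdot\frac{\paren{F^{-1}_{1,\frac{1}{\lambda_N Q_{\rm ave}}}\paren{1-\frac{1}{N}}-\lambda_N}^+}{h^\star_N}\,S_N$, where $S_N$ is the number of transmitting SUs, and then combines Lemma \ref{Lem: inv-scaling}, the concentration of $h^\star_N$, Poisson convergence of $S_N$, Slutsky's theorem and Fatou's lemma. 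Both proofs rest on the same intuition (conditioned on transmitting, $g_i$ is negligible, so the water level is essentially $\frac{1}{\lambda_N}$), and your deterministic splitting on $\brparen{g\leq\epsilon}$ is arguably more transparent than the paper's order-statistics machinery. However, two steps need repair before the argument closes.

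First, you justify the concentration estimate through the asymptotic $\beta_h\paren{\lambda_N T_N}^{n_h}=\logp{N}\paren{1+\LO{1}}$ drawn from Lemma \ref{Lem: inv-scaling}; but that lemma is stated for \emph{fixed} $\paren{\lambda,\mu}$ with $\lambda>0$, whereas your compactness setup permits $\lambda_\infty=0$, in which case $\mu_N/\lambda_N\ra\infty$ and the fixed-parameter asymptotic (whose constants, via Theorem 3 of \cite{Andrey}, depend on the second parameter) cannot be invoked. This is exactly why the paper spends the first half of its proof establishing $\liminf_{N\ra\infty}\lambda_N>0$ by a separate contradiction argument before doing anything finer. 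You must either reproduce that preliminary step or replace the asymptotic by a parameter-free bound such as $N\,\PR{X>T_N,\,g>\epsilon}\leq\frac{1}{F_g\paren{\epsilon/2}}\cdot\frac{1-F_h\paren{\paren{\lambda_N+\mu_N\epsilon}T_N}}{1-F_h\paren{\paren{\lambda_N+\mu_N\epsilon/2}T_N}}$, which tends to zero whenever $\mu_\infty>0$ because the two arguments are separated by $\mu_N\epsilon/2$ and $T_N\geq F^{-1}_{\frac{1}{P_{\rm ave}},\frac{1}{Q_{\rm ave}}}\paren{1-\frac{1}{N}}\ra\infty$. Second, the case $\mu_\infty=0$ is dispatched too quickly: the claimed collapse $P^\star\approx\paren{\frac{1}{\lambda_N}-\frac{1}{h}}^+$ requires $\mu_N g/\lambda_N\ra 0$ on the transmission event, which does not follow from $\mu_N\ra 0$ alone when $\lambda_N\ra 0$ simultaneously, and dominated convergence cannot certify the divergence $N\,\ES{P^\star_{\rm DTPIL}\paren{h,g}}\ra\infty$ needed to exclude $\lambda_\infty=0$ in that branch (Fatou's lemma, or a positive-association bound of the form $\ES{\tfrac{1}{\max\paren{g,1}}\I{X>T_N}}\geq\ES{\tfrac{1}{\max\paren{g,1}}}\PR{X>T_N}$, would serve). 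With these two repairs the proposal gives a valid alternative proof.
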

\begin{IEEEproof}
First, we show $\liminf_{N\ra\infty}\lambda_N>0$.  To obtain a contradiction, we assume that $\lambda_N$ can be arbitrarily close to zero as $N$ becomes large.  This implies that for all $\epsilon>0$, we can find a subsequence of $N$, $N_j$, such that $\lambda_{N_j} \leq \epsilon$ for all $N_j$ large enough.  Let $X^\star_N\paren{\lambda, \mu} = \max_{1 \leq i \leq N} \frac{h_i}{\lambda + \mu g_i}$ (as in Lemma \ref{Lem: EOS-Conc}), $h^\star_N=\max_{1\leq i\leq N}h_i$ and $I_N=\arg\max_{1\leq i \leq N}\frac{h_i}{\lambda_{N}+\mu_{N}g_i}$.  Let also $P_{N}^{\rm DTPIL}\paren{\vec{h},\vec{g}}=\sum_{i=1}^{N}\paren{\frac{1}{\lambda_N+\mu_Ng_i}-\frac{1}{h_i}}^+\I{\CRTPIL>\ICRDTPIL{\frac{1}{N}}}$ be the instantaneous total power consumed by the secondary network. Then,  the average power consumption, for all $N_j$ large enough, can be lower bounded as 
\begin{eqnarray}\label{Eq: LB-TP}
\ES{P_{N_j}^{\rm DTPIL}\paren{\vec{h},\vec{g}}} &=& \ES{\sum_{i=1}^{N_j}\paren{\frac{1}{\lambda_{N_j}+\mu_{N_j}g_i}-\frac{1}{h_i}}^+\I{\frac{h_i}{\lambda_{N_j}+\mu_{N_j}g_i}>F^{-1}_{\lambda_{N_j},\mu_{N_j}}\paren{1-\frac{1}{N_j}}}}\nonumber\\
%&\geq&\ES{\paren{\frac{1}{\lambda_{N_j}+\mu_{N_j}g_{I_j}}-\frac{1}{h_{I_j}}}^+\I{X^\star_{N_j}\paren{\lambda_{N_j},\mu_{N_j}}>F^{-1}_{\lambda_{N_j},\mu_{N_j}}\paren{1-\frac{1}{N_j}}}}\nonumber\\
&\geq& \ES{\frac{1}{h_{I_{N_j}}}\paren{X^\star_{N_j}\paren{\lambda_{N_j},\mu_{N_j}}-1}^+\I{X^\star_{N_j}\paren{\lambda_{N_j},\mu_{N_j}}>F^{-1}_{\lambda_{N_j},\mu_{N_j}}\paren{1-\frac{1}{N_j}}}}\nonumber\\
&\stackrel{(a)}{\geq}&\ES{\frac{1}{h^\star_{N_j}}\paren{X^\star_{N_j}\paren{\epsilon,\frac{1}{Q_{\rm ave}}}-1}^+\I{X^\star_{N_j}\paren{\lambda_{N_j},\mu_{N_j}}>F^{-1}_{\lambda_{N_j},\mu_{N_j}}\paren{1-\frac{1}{N_j}}}},
\end{eqnarray}
where $(a)$ follows from observing that $\mu_N\leq \frac{N p_N}{Q_{\rm ave}}$ and $p_N = \frac{1}{N}$ in this case.  Using Lemma \ref{Lem: EOS-Conc}, we have $\frac{X^\star_{N_j}\paren{\epsilon,\frac{1}{Q_{\rm ave}}}}{{\paren{\frac{1}{\beta_h}\logp{N_j}}^\frac{1}{n_h}}}\xrightarrow{i.p.}\frac{1}{\epsilon}$ and $\frac{h^\star_{N_j}}{\paren{\frac{1}{\beta_h}\logp{N_j}}^\frac{1}{n_h}}\xrightarrow{i.p.}1$ as $N_j$ tends to infinity.  

Also, it is easy to see that 
$
\I{X^\star_{N_j}\paren{\lambda_{N_j},\mu_{N_j}}>F^{-1}_{\lambda_{N_j},\mu_{N_j}}\paren{1-\frac{1}{N_j}}}\xrightarrow{i.d.}\mbox{Bern}\paren{1-\frac{1}{\e{}}}
$
as $N_j$ tends to infinity, where $\mbox{Bern}\paren{p}$ denotes a 0-1 Bernoulli random variable with mean $p$, and $i.d.$ stands for convergence in distribution.  Hence, by using Slutsky's Theorem \cite{Gut}, we have 
 \begin{eqnarray}
 \frac{1}{h^\star_{N_j}}\paren{X^\star_{N_j}\paren{\epsilon,\frac{1}{Q_{\rm ave}}}-1}^+\I{X^\star_{N_j}\paren{\lambda_{N_j},\mu_{N_j}}>F^{-1}_{\lambda_{N_j},\mu_{N_j}}\paren{1-\frac{1}{N_j}}}\xrightarrow{i.d.} \frac{1}{\epsilon} \mbox{Bern}\paren{1-\frac{1}{\e{}}}. \nonumber
 \end{eqnarray} 

Applying Fatou's Lemma to \eqref{Eq: LB-TP}, we obtain $\liminf_{N_j \ra \infty}\ES{P_{N_j}^{\rm DTPIL}\paren{\vec{h},\vec{g}}}\geq \frac{1}{\epsilon}\paren{1-\frac{1}{\e{}}}$, which implies that the average power consumption can be made arbitrarily large, violating the power constraint, for $\epsilon$ small enough and $N_j$ large enough. Thus, $\liminf_{N\ra\infty}\lambda_N>0$. 

Now, by using the fact that $\lambda_N$ cannot be arbitrarily close to zero, we show that $\lim_{N\ra\infty}\lambda_N=\frac{1}{P_{\rm ave}}$. Note that $\lambda_N\leq \frac{1}{P_{\rm ave}}$ for $p_N = \frac{1}{N}$, which implies that $\limsup_{N\ra\infty}\lambda_N\leq\frac{1}{P_{\rm ave}}$.  Hence, showing that $\liminf_{N \ra \infty}\lambda_N \geq \frac{1}{P_{\rm ave}}$ will conclude the proof.  To this end, the average total power consumed by the secondary network can be lower bounded as 
\begin{eqnarray}\label{Eq: Power-lb}
P_{\rm ave}&=&\ES{\sum_{i=1}^{N}\paren{\frac{1}{\lambda_N+\mu_Ng_i}-\frac{1}{h_i}}^+\I{\CRTPIL>\ICRDTPIL{\frac{1}{N}}}}\nonumber\\
&\stackrel{(a)}{=}& \frac{1}{\lambda_N}\ES{\sum_{i=1}^{N}\paren{\frac{1}{1+\frac{\mu_N}{\lambda_N}g_i}-\frac{\lambda_N}{h_i}}^+\I{\frac{h_i}{1+\frac{\mu_N}{\lambda_N}g_i}>F^{-1}_{1,\frac{\mu_N}{\lambda_N}}\paren{1-\frac{1}{N}}}}\nonumber\\
&\stackrel{(b)}{\geq}& \frac{1}{\lambda_N}\ES{\sum_{i=1}^{N}\frac{\paren{F^{-1}_{1,\frac{1}{\lambda_NQ_{\rm ave}}}\paren{1-\frac{1}{N}}-\lambda_N}^+}{h_i}\I{\frac{h_i}{1+\frac{\mu_N}{\lambda_N}g_i}>F^{-1}_{1,\frac{\mu_N}{\lambda_N}}\paren{1-\frac{1}{N}}}}\nonumber\\
&\geq& \frac{1}{\lambda_N} \ES{\frac{\paren{F^{-1}_{1,\frac{1}{\lambda_NQ_{\rm ave}}}\paren{1-\frac{1}{N}}-\lambda_N}^+}{h^\star_N}\sum_{i=1}^{N}\I{\frac{h_i}{1+\frac{\mu_N}{\lambda_N}g_i}>F^{-1}_{1,\frac{\mu_N}{\lambda_N}}\paren{1-\frac{1}{N}}}},
\end{eqnarray}
where $(a)$ follows from observing that $\lambda F_{\lambda,\mu}^{-1}\paren{x}=F_{1,\frac{\mu}{\lambda}}^{-1}\paren{x}$, and $(b)$ follows from observing that $\mu_N \leq \frac{1}{Q_{\rm ave}}$ and $F^{-1}_{\lambda, \mu}\paren{x}$ decreases with increasing values of $\mu$.  Using \eqref{Eq: Power-lb}, $\lambda_N$ can be lower bounded as 
\begin{eqnarray}\label{Eq: lambda-lb}
\lambda_N &\geq& \frac{1}{P_{\rm ave}}\ES{\frac{\paren{F^{-1}_{1,\frac{1}{\lambda_NQ_{\rm ave}}}\paren{1-\frac{1}{N}}-\lambda_N}^+}{h^\star_N}\sum_{i=1}^{N}\I{\frac{h_i}{1+\frac{\mu_N}{\lambda_N}g_i}>F^{-1}_{1,\frac{\mu_N}{\lambda_N}}\paren{1-\frac{1}{N}}}}.
\end{eqnarray}

Using Lemma \ref{Lem: inv-scaling} and the fact that $\lambda_N$ cannot be arbitrarily close to zero, we have $\lim_{N\ra\infty}\frac{F^{-1}_{1,\frac{1}{\lambda_NQ_{\rm ave}}}\paren{1-\frac{1}{N}}}{\paren{\frac{1}{\beta_h}\logp{N}}^\frac{1}{n_h}}=1$, which implies $\frac{ \paren{F^{-1}_{1,\frac{1}{\lambda_NQ_{\rm ave}}}\paren{1-\frac{1}{N}}-\lambda_N}^+}{h^\star_N}\xrightarrow{i.p.}1$ as $N$ tends to infinity.  Let $S_N=\sum_{i=1}^{N}\I{\frac{h_i}{1+\frac{\mu_N}{\lambda_N}g_i}>F^{-1}_{1,\frac{\mu_N}{\lambda_N}}\paren{1-\frac{1}{N}}}$.  $S_N$ has a Binomial distribution with parameters $N$ and $\frac{1}{N}$.  Hence, using Poisson approximation for Binomial distributions, we conclude that $S_N$ converges in distribution to $\mbox{Po}\paren{1}$, where $\mbox{Po}\paren{p}$ represents a Poisson random variable with mean $p$. Using Slutsky's Theorem, we have 
\begin{eqnarray}
\frac{\paren{F^{-1}_{1,\frac{1}{\lambda_NQ_{\rm ave}}}\paren{1-\frac{1}{N}}-\lambda_N}^+}{h^\star_N}\sum_{i=1}^{N}\I{\frac{h_i}{1+\frac{\mu_N}{\lambda_N}g_i}>F^{-1}_{1,\frac{\mu_N}{\lambda_N}}\paren{1-\frac{1}{N}}}\xrightarrow{i.d.}\mbox{Po}\paren{1}\nonumber
\end{eqnarray}
as $N$ grows large. 
 Applying Fatou's Lemma to \eqref{Eq: lambda-lb}, we have $\liminf_{N\ra\infty}\lambda_N\geq \frac{1}{P_{\rm ave}}$. %which completes the proof.
\end{IEEEproof}

Now, we are ready to prove Theorem \ref{Theo: DTPIL} by utilizing above auxiliary results.  Note that the sum-rate under the optimum distributed power control in DTPIL networks for $p_N = \frac{1}{N}$ can be written as 
\begin{eqnarray}
\RTPIL{\frac{1}{N}}=\logp{\frac{1}{\lambda_N}}\PRP{A_N}+\ES{\logp{\Xs{1}{\frac{\mu_N}{\lambda_N}}}\Inb{A_N}}.\nonumber
\end{eqnarray}
It is easy to see that $\lim_{N\ra\infty}\PRP{A_N}=\frac{1}{\e{}}$ by the selection of transmission probabilities.  This gives us the logarithmic effect of $P_{\rm ave}$ on the secondary network throughput since $\lambda_N$ converges to $\frac{1}{P_{\rm ave}}$.  Using Lemma \ref{Lem: EOS-Conc}, we have $\frac{\logp{X^\star_N\paren{1,\frac{\mu_N}{\lambda_N}}}}{\log\logp{N}}\xrightarrow{i.p.}\frac{1}{n_h}$ as $N$ tends to infinity since $\lambda_N$ is bounded away from zero and $\mu_N \leq \frac{1}{Q_{\rm ave}}$.  Also, we have $\Inb{A_N}$ converging in distribution to $\mbox{Bern}\paren{\frac{1}{\e{}}}$ as $N$ tends to infinity.  As a result, applying Slutsky's Theorem, we conclude that $\frac{\logp{X^\star_N\paren{1,\frac{\mu_N}{\lambda_N}}}}{\log\logp{N}}\Inb{A_N}\xrightarrow{i.d.}\frac{1}{n_h}\mbox{Bern}\paren{\frac{1}{\e{}}}$.  This final result almost completes the proof of Theorem \ref{Theo: DTPIL} up to a slight technicality.  That is, convergence in distribution does not always imply convergence in mean \cite{Billingsley}.   

To show that convergence in mean does also hold in our case, we let $\hat{X}_N\paren{1,\frac{\mu_N}{\lambda_N}}=\frac{\logp{X^\star_N\paren{1,\frac{\mu_N}{\lambda_N}}}}{\log\logp{N}}\Inb{A_N}$.  It is enough to show that the collection of random variables $\brparen{\hat{X}_N\paren{1,\frac{\mu_N}{\lambda_N}}}_{N=1}^\infty$ is uniformly integrable, \emph{i.e.,} 
$\lim_{C^\prime\ra\infty}\sup_{N \geq 1}\ES{\abs{\hat{X}_N\paren{1,\frac{\mu_N}{\lambda_N}}}\I{\abs{\hat{X}_N\paren{1,\frac{\mu_N}{\lambda_N}}}\geq C^\prime}}=0$ to conclude the proof.  We can upper bound the random variable $\frac{\logp{X^\star_N\paren{1,\frac{\mu_N}{\lambda_N}}}}{\log\logp{N}}\Inb{A_N}$ as  $\frac{\logp{X^\star_N\paren{1,\frac{\mu_N}{\lambda_N}}}}{\log\logp{N}}\Inb{A_N}\leq \frac{\logp{X^\star_N\paren{1,\frac{\mu_N}{\lambda_N}}}}{\log\logp{N}}\I{X^\star_N\paren{1,\frac{\mu_N}{\lambda_N}}\geq 1}$.  Using proof techniques similar to those used in the proof of Lemma 3 in \cite{NIDSubmitted}, it can be shown that $\brparen{\frac{\logp{X^\star_N\paren{1,\frac{\mu_N}{\lambda_N}}}}{\log\logp{N}}\I{X^\star_N\paren{1,\frac{\mu_N}{\lambda_N}}\geq 1}}_{N=1}^\infty$ is uniformly integrable, which implies the uniform integrability of  $\brparen{\frac{\logp{X^\star_N\paren{1,\frac{\mu_N}{\lambda_N}}}}{\log\logp{N}}\Inb{A_N}}_{N=1}^\infty$.  
%%%%%%%%%%%%%%%%%%%%%%%%%%%%%%%%%%%%%%%%%%%%%%%%%%%%%%%%%%%%%%%%%%%%%%%%%%%%%%%%%%%%%%%%%%%%%%%%%%%%%%%%
\section{Proof of Theorem \ref{Theo: Opt-Scal-DTPIL}}\label{App: Opt-Scal-DTPIL}
Note that $\RTPIL{p^\star_N}\geq \RTPIL{\frac{1}{N}}$. Hence, it is enough to show that $\limsup_{N\ra\infty}\frac{\RTPIL{p^\star_N}}{\log\logp{N}}\leq \frac{1}{\e{}n_h}$. To this end, let $\Xt=\frac{\logp{\Xs{\lambda_N}{\mu_N}}}{\log\logp{N}}$, where $\Xs{\lambda}{\mu}$ is defined as in Lemma \ref{Lem: EOS-Conc}.  For all $\epsilon>0$, we have
\begin{eqnarray}
\frac{\RTPIL{p^\star_N}}{\log\logp{N}} &=& \ES{\Xt\Inb{A_N}\I{\abs{\Xt-\frac{1}{n_h}}>\epsilon}}+\ES{\Xt\Inb{A_N}\I{\abs{\Xt-\frac{1}{n_h}}\leq \epsilon}}\nonumber\\
&\leq&\ES{\Xt\I{\Xs{\lambda_N}{\mu_N}\geq 1}\I{\abs{\Xt-\frac{1}{n_h}}>\epsilon}}+\paren{\frac{1}{n_h}+\epsilon}\PRP{A_N}.\nonumber
\end{eqnarray}

As in the proof of Theorem \ref{Theo: DTPIL}, we have $\Xt\I{\Xs{\lambda_N}{\mu_N}\geq 1}\xrightarrow{i.p.}\frac{1}{n_h}$ and $\I{\abs{\Xt-\frac{1}{n_h}}>\epsilon}\xrightarrow{i.p.}0$ as $N$ tends to infinity.  Hence, $\Xt\I{\Xs{\lambda_N}{\mu_N}\geq 1}\I{\abs{\Xt-\frac{1}{n_h}}>\epsilon}$ converges to zero in probability.  Using techniques similar to those used in the proof of Theorem \ref{Theo: DTPIL}, it can also be shown that the collection of random variables $\brparen{\Xt\I{\Xs{\lambda_N}{\mu_N}\geq 1}\I{\abs{\Xt-\frac{1}{n_h}}>\epsilon}}_{N=1}^\infty$ is uniformly integrable, which implies that $$\lim_{N\ra\infty}\ES{\Xt\I{\Xs{\lambda_N}{\mu_N}\geq 1}\I{\abs{\Xt-\frac{1}{n_h}>}\epsilon}}=0.$$ 

For $N$ large enough, $\PRP{A_N}$ can be upper bounded as 
\begin{eqnarray}
\PRP{A_N} &=& Np^\star_N\paren{1-p^\star_N}^{N-1}\nonumber\\
&\stackrel{(a)}{\leq}&\paren{1-\frac{1}{N}}^{N-1},\nonumber
\end{eqnarray}
where $(a)$ follows from the fact that $Np^\star_N\paren{1-p^\star_N}^{N-1}$ is maximized at $p^\star_N=\frac{1}{N}$. Hence, 
\begin{eqnarray}
\limsup_{N\ra\infty}\frac{\RTPIL{p^\star_N}}{\log\logp{N}}\leq \frac{1}{\e{}n_h} + \frac{\epsilon}{\e{}},\nonumber
\end{eqnarray}
which completes the proof since $\epsilon$ is arbitrary.
%%%%%%%%%%%%%%%%%%%%%%%%%%%%%%%%%%%%%%%%%%%%%%%%%%%%%%%%%%%%%%%%%%%%%%%%%%%%%%%%%%%%%%%%%%%%%%%%%%%%%%%
\section{Proof of Lemma \ref{Theo: Opt-Pro-DTPIL}}\label{App: Opt-Pro-DTPIL}
Assume that $a$ is a limit point of the sequence $a_N = N p^\star_N$, where $N \geq 1$.  We only consider the case $a \in (0, \infty)$.  For $a = 0$, it can be shown that the probability of successful transmission, and hence the secondary network throughput, goes to zero due to lack of enough transmission attempts.  For $a = \infty$, the probability of successful transmission, and hence the secondary network throughput, goes to zero due to excessive simultaneous transmission attempts.  

Let $N_j$ be a subsequence of $N$ such that $\lim_{N_j\ra\infty}a_{N_j}=a$.  As argued in the proof of Theorem \ref{Theo: DTPIL}, it can be shown that the probability of successful transmission on this subsequence converges to $\frac{a}{\e{a}}$, {\em i.e.,} $\lim_{N_j\ra\infty}\PRP{A_{N_j}}=\frac{a}{\e{a}}$.  Hence, using techniques similar to those employed in the proof of Theorem \ref{Theo: DTPIL}, we can further show that $\lim_{N_j\ra\infty}\frac{R^\star_{\rm DTPIL}\paren{p^\star_{N_j},N_j}}{\log\logp{N_j}}=\frac{a}{\e{a}n_h}$, which is maximized at $a = 1$.  This implies that $p^\star_N$ must be chosen such that $\lim_{N \ra \infty} N p^\star_N = 1$ to obtain optimal secondary network throughput scaling behavior.      
%%%%%%%%%%%%%%%%%%%%%%%%%%%%%%%%%%%%%%%%%%%%%%%%%%%%%%%%%%%%%%%%%%%%%%%%%%%%%%%%%%%%%%%%%%%%%%%%%%%%%%%%
\section{Throughput Scaling in DIL Networks}\label{app: IL}
To obtain the throughput scaling behavior in DIL networks, we will first provide a preliminary lemma establishing the convergence behavior of $\mu_N$. This lemma will also be helpful to study the effect of average total interference power, $Q_{\rm ave}$, on the secondary network throughput in DIL networks.
\begin{lemma}\label{Lem: mu-conv}
Let $\mu_N$ be the power control parameter in DIL networks. Then, $\lim_{N\ra\infty}\mu_N=\frac{1}{Q_{\rm ave}}$.
\end{lemma}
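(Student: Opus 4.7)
The plan is to sandwich $\mu_N$ between $1/Q_{\rm ave}$ (from above) and $1/Q_{\rm ave}$ (from below, in the limit), exploiting the fact that the interference constraint is tight at the optimum and that the power allocation has the explicit water-filling form from Theorem \ref{Theo: OPA-IL}. The key structural observation is that, since $P^\star_{\rm DIL}(h,g)=\bigl(\tfrac{1}{\mu_N g}-\tfrac{1}{h}\bigr)^+\I{h/g>F^{-1}_{h/g}(1-1/N)}$, we can rewrite
\[
\EW_{h,g}\!\sqparen{g P^\star_{\rm DIL}(h,g)} = \EW_{h,g}\!\sqparen{\paren{\tfrac{1}{\mu_N}-\tfrac{g}{h}}\I{\tfrac{h}{g}>\tau_N}},
\]
where $\tau_N:=F^{-1}_{h/g}(1-1/N)$ and the positive-part sign disappears because on the indicator event $g/h<1/\tau_N<1/\mu_N$ for $N$ large (once $\tau_N$ exceeds $\mu_N$).

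First I would establish the easy direction, $\limsup_{N\to\infty}\mu_N\leq 1/Q_{\rm ave}$. Using the displayed identity and bounding $g/h\geq 0$, the interference constraint $N\EW[g P^\star_{\rm DIL}(h,g)]=Q_{\rm ave}$ yields
\[
Q_{\rm ave}\;\leq\; N\cdot\tfrac{1}{\mu_N}\cdot\PR{\tfrac{h}{g}>\tau_N}\;=\;\tfrac{1}{\mu_N},
\]
since by construction $\PR{h/g>\tau_N}=1/N$. Hence $\mu_N\leq 1/Q_{\rm ave}$ for every $N$ large enough.

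Next, for the matching lower bound $\liminf_{N\to\infty}\mu_N\geq 1/Q_{\rm ave}$, I would subtract the correction term explicitly. On the event $\{h/g>\tau_N\}$ we have $g/h<1/\tau_N$, so
\[
\EW\!\sqparen{\tfrac{g}{h}\I{\tfrac{h}{g}>\tau_N}}\;\leq\;\tfrac{1}{\tau_N}\PR{\tfrac{h}{g}>\tau_N}\;=\;\tfrac{1}{N\tau_N}.
\]
Combining with the identity above gives $Q_{\rm ave}\geq 1/\mu_N-1/\tau_N$, equivalently $1/\mu_N\leq Q_{\rm ave}+1/\tau_N$. Since the class-$\mathcal{C}$ assumption on $F_g$ (regular variation at the origin with exponent $\gamma_g>0$) together with continuity of $F_h$ implies $h/g$ has unbounded support and hence $\tau_N\to\infty$ as $N\to\infty$, taking $\limsup$ in $N$ yields $\limsup_N 1/\mu_N\leq Q_{\rm ave}$, i.e., $\liminf_N\mu_N\geq 1/Q_{\rm ave}$.

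The main technical point to verify carefully is the validity of dropping the positive-part operator in the interference expression, which requires $\tau_N>\mu_N$ eventually; this is automatic once one has the uniform bound $\mu_N\leq 1/Q_{\rm ave}$ from the first step and $\tau_N\to\infty$. In contrast to the analogous argument for $\lambda_N$ in Lemma \ref{Lem: lambda-conv}, here no preliminary step showing $\mu_N$ is bounded away from zero is needed, because the interference constraint couples $\mu_N$ and $Q_{\rm ave}$ directly through the reciprocal $1/\mu_N$, and the only real obstacle is establishing that $\tau_N$ diverges, which follows from the class-$\mathcal{C}$ distributional assumptions placed on $h$ and $g$.
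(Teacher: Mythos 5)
Your proposal is correct and follows essentially the same route as the paper's proof: the upper bound $\mu_N\le 1/Q_{\rm ave}$ comes from discarding the $g/h$ term, and the lower bound comes from showing the correction term is at most $1/\tau_N\to 0$, yielding $1/\mu_N\le Q_{\rm ave}+1/\tau_N$ exactly as in the paper. The only cosmetic difference is that you drop the positive part outright (which creates a mild circularity, since its validity is justified by the bound $\mu_N\le 1/Q_{\rm ave}$ that you derive from it), whereas the paper keeps the positive part and uses the pointwise inequalities $g\,(\tfrac{1}{\mu_N g}-\tfrac{1}{h})^+\le \tfrac{1}{\mu_N}$ and $(x)^+\ge x$, which sidesteps the issue; your argument is easily repaired the same way.
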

\begin{IEEEproof}
First, we show that $\mu_N$ is upper bounded by $\frac{1}{Q_{\rm ave}}$ for all $N$. To this end, we have
\begin{eqnarray}
Q_{\rm ave} &=& \ES{\sum_{i=1}^N g_i \paren{\frac{1}{\mu_N g_i} - \frac{1}{h_i}}^+\I{\frac{h_i}{g_i} > \ICRDIL{\frac{1}{N}}}} \nonumber \\
&\leq& \ES{\sum_{i=1}^N \frac{1}{\mu_N} \I{\frac{h_i}{g_i} > \ICRDIL{\frac{1}{N}}}} \nonumber \\
&=& \frac{1}{\mu_N}, \nonumber
\end{eqnarray}
which implies that $\mu_N \leq \frac{1}{Q_{\rm ave}}$. Hence, to complete the proof, it is enough to show that $\liminf_{N \ra \infty} \mu_N \geq \frac{1}{Q_{\rm ave}}$.  We can lower bound $\mu_N$ as
\begin{eqnarray}
\mu_N &=& \frac{1}{Q_{\rm ave}} \ES{\sum_{i=1}^N \paren{1 - \frac{g_i \mu_N}{h_i}}^{+} \I{\frac{h_i}{g_i} > \ICRDIL{\frac{1}{N}}}} \nonumber \\
&\geq& \frac{1}{Q_{\rm ave}} - \frac{\mu_N}{Q_{\rm ave}} N \ES{\frac{g_1}{h_1} \I{\frac{h_1}{g_1} > \ICRDIL{\frac{1}{N}}}} \nonumber \\
&\geq& \frac{1}{Q_{\rm ave}} - \frac{\mu_N}{Q_{\rm ave} \ICRDIL{\frac{1}{N}}}. \label{Eqn: mu_N lower bound}
\end{eqnarray}   

Since $\mu_N$ is bounded above by $\frac{1}{Q_{\rm ave}}$ and $\ICRDIL{\frac{1}{N}}$ tends to infinity as $N$ grows large, \eqref{Eqn: mu_N lower bound} implies that $\liminf_{N \ra \infty} \mu_N = \frac{1}{Q_{\rm ave}}$.   
\end{IEEEproof}

Now, we are ready to prove Theorem \ref{Theo: DIL}.  The sum-rate under the optimum distributed power control in DIL networks can be written as 
\begin{eqnarray}\label{Eq: rate-expan}
\RIL{\frac{1}{N}}=\logp{\frac{1}{\mu_N}}\PRP{B_N}+\ES{\logp{\Ys}\Inb{B_N}}.
\end{eqnarray}
%For $N$ large enough, we have
%\begin{eqnarray}
%\I{B_N}=\sum_{i=1}^N\I{\frac{h_i}{g_i}>\ICRDIL{\frac{1}{N}}}\prod_{j=1,j\neq i}^N\I{\frac{h_j}{g_j}\leq\ICRDIL{\frac{1}{N}}},
%\end{eqnarray}

It is easy to see that $\lim_{N\ra\infty}\PRP{B_N}=\frac{1}{\e{}}$. Thus, the first term on the right-hand side of \eqref{Eq: rate-expan} converges to $\frac{1}{\e{}}\logp{Q_{\rm ave}}$ as $N$ tends to infinity, which indicates the logarithmic effect of the average interference power constraint, $Q_{\rm ave}$, on the secondary network sum-rate in DIL networks.  It can also be shown that $\frac{\logp{\Ys}}{\logp{N}}\xrightarrow{i.p.}\frac{1}{\gamma_g}$ ({\em i.e.,} see Lemma 8 in \cite{NIDSubmitted}) and $\Inb{B_N} \xrightarrow{i.d.} \mbox{Bern}\paren{\frac{1}{\e{}}}$ as $N$ tends to infinity.  Therefore, using Slutsky's theorem, we have $\frac{\logp{\Ys}}{\logp{N}}\Inb{B_N} \xrightarrow{i.d.} \mbox{Bern}\paren{\frac{1}{\e{}}}$ as $N$ grows large.  Since convergence in distribution does not always imply convergence in mean, we need to show that the collection of random variables $\brparen{\frac{\logp{\Ys}}{\logp{N}}\Inb{B_N}}_{N=1}^\infty$ is uniformly integrable.  For $N$ large enough, we have $\frac{\logp{\Ys}}{\logp{N}}\Inb{B_N}\leq\frac{\logp{\Ys}}{\logp{N}}\I{\Ys\geq 1}$.  Using Lemma 8 in \cite{NIDSubmitted}, we conclude that $\brparen{\frac{\logp{\Ys}}{\logp{N}}\I{\Ys\geq 1}}_{N=1}^\infty$ is uniformly integrable, which implies uniform integrability of $\brparen{\frac{\logp{\Ys}}{\logp{N}}\Inb{B_N}}_{N=1}^\infty$. Hence, we have $\lim_{N \ra \infty}\ES{\frac{\logp{\Ys}}{\logp{N}}\Inb{B_N}}=\frac{1}{\e{}\gamma_g}$, which concludes the proof.

%%%%%%%%%%%%%%%%%%%%%%%%%%%%%%%%%%%%%%%%%%%%%%%%%%%%%%%%%%%%%%%%%%%%%%%%%%%%%%%%%%%%%%%%%%%%%%%%%%%%%%%%%%%%%%%%%%%%%%%%%%%%%%%%%%%%%%%%%%%%
\section{Proof of Theorem \ref{Theo: Opt-Scal-DIL}}\label{App: Opt-Scal-DIL}
Since $\RIL{p^\star_N}\geq \RIL{\frac{1}{N}}$, we have $\liminf_{N \ra \infty}\frac{\RIL{p^\star_N}}{\logp{N}}\geq \frac{1}{\e{}\gamma_g}$. To show the other direction, let $\Yt=\frac{\logp{\frac{\Ys}{\mu_N}}}{\logp{N}}$. For all $\epsilon>0$, we have
\begin{eqnarray}
\frac{\RIL{p^\star_N}}{\logp{N}}&=&\ES{\Yt\Inb{B_N}\I{\abs{\Yt-\frac{1}{\gamma_g}}>\epsilon}}+\ES{\Yt\Inb{B_N}\I{\abs{\Yt-\frac{1}{\gamma_g}}\leq \epsilon}}\nonumber\\
&\leq&\ES{\Yt\I{\Ys\geq \mu_N}\I{\abs{\Yt-\frac{1}{\gamma_g}}>\epsilon}}+\paren{\frac{1}{\gamma_g}+\epsilon}\PRP{B_N}.\nonumber
\end{eqnarray}

Recall from the proof of Theorem \ref{Theo: DIL} that $\frac{Y_N^\star}{\log(N)}$ converges in probability to $\frac{1}{\gamma_g}$.  This implies that $\Yt\I{\Ys\geq \mu_N}\xrightarrow{i.p.}\frac{1}{\gamma_g}$ and $\I{\abs{\Yt-\frac{1}{\gamma_g}}>\epsilon}\xrightarrow{i.p.}0$ as $N$ tends to infinity.  Hence, we have $\Yt\I{\Ys\geq \mu_N}\I{\abs{\Yt-\frac{1}{\gamma_g}}>\epsilon}$ converging in probability to $0$.  Using techniques similar to those used in the proof of Theorem \ref{Theo: DIL}, we can show that the collection of random variables $\brparen{\Yt\I{\Ys\geq \mu_N}\I{\abs{\Yt-\frac{1}{\gamma_g}}>\epsilon}}_{N=1}^\infty$ is uniformly integrable.  This implies that $$\lim_{N\ra\infty}\ES{\Yt\I{\Ys\geq \mu_N}\I{\abs{\Yt-\frac{1}{\gamma_g}}>\epsilon}}=0.$$ 

For $N$ large enough, $\PRP{B_N}$ can be upper bounded as 
\begin{eqnarray}
\PRP{B_N} &=& Np^\star_N\paren{1-p^\star_N}^{N-1}\nonumber\\
&\leq&\paren{1-\frac{1}{N}}^{N-1}.\nonumber
\end{eqnarray}
Hence, 
\begin{eqnarray}
\limsup_{N\ra\infty}\frac{\RIL{p^\star_N}}{\logp{N}}\leq \frac{1}{\e{}\gamma_g} + \frac{\epsilon}{\e{}},\nonumber
\end{eqnarray}
which completes the proof since $\epsilon$ is arbitrary.
%%%%%%%%%%%%%%%%%%%%%%%%%%%%%%%%%%%%%%%%%%%%%%%%%%%%%%%%%%%%%%%%%%%%%%%%%%%%%%%%%%%%%%%%%%%%%%%%%%%%%%%%%%%%%%%%%%%%%%%%%%%%%%%%%%%%%%%%%%%
\section{Proof of Lemma \ref{Theo: Opt-Pro-DIL}}\label{App: Opt-Pro-DIL}
The proof of this lemma is similar to the proof of Lemma \ref{Theo: Opt-Pro-DTPIL}.  Assume that $a$ is a limit point of the sequence $a_N = N p^\star_N$, $N \geq 1$, and let $N_j$ be a subsequence of $N$ achieving $a$.  For $a = 0$, it can be shown that the probability of successful transmission, and hence the secondary network throughput, goes to zero due to lack of enough transmission attempts.  For $a = \infty$, the probability of successful transmission, and hence the secondary network throughput, goes to zero due to excessive simultaneous transmission attempts.  

For $a \in (0, \infty)$, as argued in the proof of Theorem \ref{Theo: DIL}, it can be shown that the probability of successful transmission on $N_j$ converges to $\frac{a}{\e{a}}$, which, in turn, leads to $\lim_{N_j\ra\infty}\frac{R^\star_{\rm DIL}\paren{p^\star_{N_j},N_j}}{\logp{N_j}}=\frac{a}{\e{a}n_h}$.  Since $\frac{a}{\e{a}n_h}$ is maximized at $a = 1$, we have $\lim_{N \ra \infty} N p^\star_N = 1$. 

\bibliographystyle{IEEEtran}

\begin{thebibliography}{1}
\providecommand{\url}[1]{#1}
\csname url@samestyle\endcsname
\providecommand{\newblock}{\relax}
\providecommand{\bibinfo}[2]{#2}
\providecommand{\BIBentrySTDinterwordspacing}{\spaceskip=0pt\relax}
\providecommand{\BIBentryALTinterwordstretchfactor}{4}
\providecommand{\BIBentryALTinterwordspacing}{\spaceskip=\fontdimen2\font plus
\BIBentryALTinterwordstretchfactor\fontdimen3\font minus
  \fontdimen4\font\relax}
\providecommand{\BIBforeignlanguage}[2]{{%
\expandafter\ifx\csname l@#1\endcsname\relax
\typeout{** WARNING: IEEEtran.bst: No hyphenation pattern has been}%
\typeout{** loaded for the language `#1'. Using the pattern for}%
\typeout{** the default language instead.}%
\else
\language=\csname l@#1\endcsname
\fi
#2}}
\providecommand{\BIBdecl}{\relax}
\BIBdecl

\bibitem{Goldsmith09}
A.~Goldsmith, S. A.~Jafar, I.~Maric and S.~Srinivasa, ``Breaking spectrum gridlock with cognitive radios: An information theoretic perspective,'' \emph{Proc. IEEE}, vol. 97, no. 5, pp. 894-914, May 2009.

\bibitem{FCC02}
FCC Spectrum Policy Task Force, ``Report of the spectrum efficiency working group," {\it Tech. Rep. 02-135}, Nov. 2002. Available: http://www.fcc.gov/sptf/files/SEWGFinalReport$\_$1.pdf.

\bibitem{Akyildiz06} I. F. Akyildiz, W.-Y. Lee, M. C. Vuran and S. Mohanty, ``NeXt generation/dynamic spectrum access/cognitive radio wireless networks: A survey," {\it Computer Networks}, vol. 50, no. 13, pp. 2127-2159, Sept. 2006.

\bibitem{Mitola99b}
J. Mitola III and G. Q. Maguire JR., ``Cognitive radio: Making software radios more personal," {\it IEEE Pers. Commun.}, vol. 6, no. 4, pp. 13-18, Aug. 1999.

\bibitem{Haykin05}
S.~ Haykin, ``Cognitive radio: Brain-empowered wireless communications", \emph{IEEE J. Sel. Areas Commun.}, vol. 23, no. 2, pp. 201-220, Feb. 2005.

\bibitem{Ghasemi07} 
A. Ghasemi and E. S. Sousa, ``Fundamental limits of spectrum-sharing in fading environments,'' {\it IEEE Trans. Wireless Commun.}, vol. 6, no. 2, pp. 649-658, Feb. 2007.

\bibitem{RZhang09}
R. Zhang, S. Cui and Y.-C. Liang, ``On ergodic sum capacity of fading cognitive multiple-access and broadcast channels,'' {\it IEEE Trans. Inf. Theory}, vol. 55, no. 11, pp. 5161-5178, Nov. 2009.

\bibitem{NID12}
E.~Nekouei, H.~Inaltekin and S.~Dey, ``Throughput scaling in cognitive multiple access with average power and interference constraints,'' {\it IEEE Trans. Signal Process.}, vol. 60, no. 2, pp. 927-946, Feb. 2012.

\bibitem{NIDSubmitted}
E.~Nekouei, H.~Inaltekin and S.~Dey, ``Asymptotically optimal feedback protocol design for cognitive multiple access channels," Available online: http://arxiv.org/abs/1209.1424, Tech. Rep., Sept. 2012.

\bibitem{IH12}
H. Inaltekin and S. V. Hanly, ``Optimality of binary power control for the single cell uplink," {\it IEEE Trans. Inf. Theory}, vol. 58, no. 10, pp. 6484-6498, Oct 2012.

\bibitem{cogmud_twban09} 
T. W. Ban, W. Choi, B. C. Jung, D. K. Sung, ``Multi-user diversity in a spectrum sharing system," {\it IEEE Trans. Wireless Commun.}, vol. 8, no. 1, pp. 102-106, Jan. 2009.

\bibitem{cogmid_zhang10}
R.~Zhang and Y.-C.~Liang, ``Investigation on multiuser diversity in spectrum sharing based cognitive radio networks," {\it IEEE Commun. Letters}, vol. 14, no. 2, pp. 133-135, Feb. 2010.


\bibitem{Yang12-TW}
L.~Yang and A.~Nosratinia, ``Hybrid opportunistic scheduling in cognitive radio networks," {\it IEEE Trans. Wireless Commun.}, vol. 11, no. 1, pp.  328-337, Jan. 2012.

\bibitem{Yang12-IT}
L.~Yang and A.~Nosratinia, ``Capacity limits of multiuser multiantenna cognitive networks," {\it IEEE Trans. Inf. Theory}, vol. 58, no. 7, pp. 4493-4508, Jul. 2012.

\bibitem{cogmudmsd_10}
H. Wang, J. Lee, S. Kim and D. Hong, ``Capacity of secondary users exploiting multispectrum and multiuser diversity in spectrum-sharing environments,"  {\it IEEE Trans. Veh. Technol.}, vol. 59, no. 2, pp. 1030-1036, Feb. 2010.

\bibitem{cogmud_alitajer10}
A.~Tajer and X.~Wang, ``Multiuser diversity gain in cognitive networks," {\it IEEE/ACM Trans. Netw.}, vol. 18, no. 6, pp. 1766-1779, Dec. 2010.

\bibitem{Berry06}
X. Qin and R. A. Berry, ``Distributed approaches for exploiting multiuser diversity in wireless networks," {\it IEEE Trans. Inf. Theory}, vol. 52, no. 2, pp. 392-413, Feb. 2006.

\bibitem{Boche07}
H.~Boche, M.~Wiczanowski, ``Optimization-theoretic analysis of stability-optimal transmission policy for multiple-antenna multiple-access channel," {\it IEEE Trans. Signal Process.}, vol. 55, no. 6, pp. 2688-2702, June 2007.

\bibitem{Tse}
D. N. C. Tse and P.~Viswanath., \textit{Fundamentals of Wireless Communication}, Cambridge University Press, Cambridge, 2005.


\bibitem{Simon-Alouini05} M. K.~Simon, M.-S. Alouini, {\em Digital Communication over Fading Channels}, Wiley-Interscience, Hoboken, New Jersey, second edition, 2005.

\bibitem{Stuber96}
G.~St\"uber, {\it Principles of Mobile Communication}, Kluwer Academic Publishers, Boston, 1996.

\bibitem{Bertoni88} H.~Bertoni, ``Coverage prediction for mobile radio systems operating in the 800/900
MHz frequency range," \emph{ IEEE Trans. Veh. Technol.,}
vol. 37, no. 1, pp. 57–60, Feb. 1988.

%\bibitem{orderstat}
%H.~A.~David and H.~N.~Nagaraja, {\em Order Statistics}, Wiley, New Jersey, 2003.
 
%\bibitem{Durrett96}
%R.~Durrett, \textit{Probability: Theory and Examples}, Duxbury Press, Belmont, CA, second edition, 1996.


\bibitem{Billingsley}
P.~Billingsley, \textit{Probability and Measure}, New York: John Wiley and Sons, third edition, 1995.

\bibitem{Gut}
A.~Gut, {\em Probability: A Graduate Course}, Springer, New York, 2005. 


\bibitem{Andrey}
A.~Sarantsev, ``Tail Asymptotic of Sum and Product of Random Variables with Applications in the Theory of Extremes of Conditionally Gaussian Processes," \emph{Master Thesis}, Lomonosov Moscow State University, Moscow, July, 2010. Available online: http://arxiv.org/pdf/1107.3869v1.pdf. 

%\bibitem{Yates2005} L.~Jianghong, L.~Lang, R.~Yates and P.~Spasojevic, "Service outage based power and rate allocation,"  \emph{IEEE Trans. Inf. Theory}, vol. 49, no. 1, pp. 323- 330, Jan 2003.

\bibitem{Luenberger68}
D. G. Luenberger, {\it Optimization by Vector Space Methods}, Wiley, New York, 1968.

\bibitem{Boyd}
S. Boyd and L. Vandenberghe, \textit{Convex Optimization}, Cambridge University Press, New York, 2004.




\end{thebibliography}

\end{document}